\newtheorem{theorem}{Theorem}
\newtheorem{definition}{Definition}
\newtheorem{corollary}{Corollary}
\newcommand{\tr}{\textup{tr}}
\newcommand{\id}{\mathbb{I}}
\newenvironment{myproof}[1]{{\emph{Proof of {#1}.} }}{\hfill $\Box$ \\}
\theoremstyle{definition}
\theoremstyle{Theorem}
\newtheorem{Lemma}{Lemma}
\theoremstyle{Theorem}
\theoremstyle{Theorem}
\theoremstyle{definition}
\newcommand{\proj}[1]{\ensuremath{| #1\rangle\!\langle #1 |} }
\newcommand{\steepest}{\rho_{\rm st}^\varepsilon}
\newcommand{\steep}{\rho_{\rm steep}^\varepsilon}
\newcommand{\epssteep}[1]{\rho_{\rm steep}^{#1}}
\newcommand{\flattest}{\rho_{\rm fl}^\varepsilon}
\newcommand{\bdepsball}{\mathcal{B}^\varepsilon_{\rm D}}
\newcommand{\epsball}{\mathcal{B}^\varepsilon}\renewcommand{\epsilon}{\varepsilon}
\newcommand{\sfe}{\hat F_\alpha^\varepsilon}
\newcommand{\gfe}{generalized free energies}
\newcommand{\eig}{\mathrm{eig}}
\newcommand{\sgn}{\textup{sgn}}
\begin{document}
	
	\title{Smoothed generalized free energies for thermodynamics}
	\author{Remco van der Meer}
	\affiliation{QuTech, Delft University of Technology, Lorentzweg 1, 2611 CJ Delft, Netherlands}
	\author{Nelly Ng}
	\affiliation{QuTech, Delft University of Technology, Lorentzweg 1, 2611 CJ Delft, Netherlands}
	\affiliation{Centre for Quantum Technologies, National University of Singapore, 117543 Singapore}
	\author{Stephanie Wehner}
	\affiliation{QuTech, Delft University of Technology, Lorentzweg 1, 2611 CJ Delft, Netherlands}
	
	\begin{abstract}
	In the study of thermodynamics for nanoscale quantum systems, a family of quantities known as generalized free energies have been derived as necessary and sufficient conditions that govern state transitions. These free energies become important especially in the regime where the system of interest consists of only a few (quantum) particles. In this work, we introduce a new family of smoothed \gfe, by constructing explicit smoothing procedures that maximize/minimize the free energy over an $ \varepsilon$-ball of quantum states. In contrast to previously known smoothed free energies, these quantities now allow us to make an operational statement for \emph{approximate} thermodynamic state transitions.
	We show that these newly defined smoothed quantities converge to the standard free energy in the thermodynamic limit.
	\end{abstract}
	\maketitle
\section{Introduction}
The resource theory approach in quantum thermodynamics \cite{BMORS13,HO03,HO13,2ndlaw} provides a fundamental framework for understanding non-equilibrium state transitions $ \rho_S\rightarrow \sigma_S $, enabled by interactions with a larger thermal bath while conserving total energy.
Specifically, a very general class of operations studied recently are referred to as catalytic thermal operations (CTO)~\cite{2ndlaw}. Such operations take the form
\begin{align}\label{eq:CTO}
U\left(\rho_{S} \otimes \rho_C \otimes \tau_{B_\beta}\right)U^{\dagger} = \sigma_{SCB}\ ,
\end{align}
where 
$\tau_{B_\beta} = \frac{e^{-\beta  H_B}}{\tr(e^{-\beta H_B})}$ 
is the thermal state of the surrounding bath (B) with Hamiltonian $H_B$ at a fixed inverse temperature $\beta$. The system (S) has a Hamiltonian $H_S$, and is initially in the state $\rho_S$. A catalyst (C) with Hamiltonian $H_C$ is allowed, where $\rho_C$ is the initial state of the catalyst, while $ U $ is a unitary operator such that $[U,H_{\rm total}]=0$, where $H_{\rm total}=H_S+H_C+H_B$.
The latter condition simply implies that $U$ conserves total energy. Due to its generic feature, CTOs have been applied to study various scenarios in thermodynamics, such as quantum heat engines~\cite{woods2015maximum,surpassCarnot,chubb2017beyond,mueller2017correlating}, and this can be done by modeling additional systems as part of the system/catalyst if required. We say a particular transition 
\begin{align}
\rho_S \xrightarrow[\rm CTO]~\sigma_S
\end{align}
is possible, if there exist $H_B$, $H_C$, $\rho_C$ and $U$ such that Eq.~\eqref{eq:CTO} is satisfied in the regime
of exact catalysis, i.e., $\rho_C = \sigma_C = \tr_{B}\left(\sigma_{SCB}\right) = \sigma_S\otimes\rho_C$. In other words, after tracing out the surrounding heat bath, the catalyst returns to its initial state and is also uncorrelated with the system $ S $.

Phrased in this way, it may seem like a daunting task to decide whether a specific transition is possible via CTO. 
Fortunately, there exist a set of simple conditions~\cite{2ndlaw} in terms of a family of generalized free energies $ F_\alpha $, which are necessary conditions for such a state transition to happen. In other words, if $\rho_S\xrightarrow[\rm CTO]~\sigma_S$, then for all $\alpha \in\mathbb{R}$,
\begin{equation}\label{eq:exactgenfree}
F_\alpha( \rho_S,\tau_{S_\beta}) \geq F_\alpha( \sigma_S,\tau_{S_\beta}),
\end{equation}
where 
$\tau_{S_\beta} = \frac{e^{-\beta H_S}}{\tr(e^{-\beta H_S})}$ 
is the thermal state at inverse temperature $\beta$ of the surrounding bath. The usual Helmholtz free energy corresponds to the case of $\alpha \rightarrow 1$. 
Interestingly, these conditions become sufficient if the states $\rho_S$ and $\sigma_S$ are already block-diagonal in the ordered energy eigenbasis \footnote{Throughout the manuscript we refer to such states as block-diagonal states.}; or in other words, $\rho_S$ and $\sigma_S$ commute with $ H_S $. Moreover, in most cases, only the \gfe~with $ \alpha\geq 0 $ matter, since the $ \alpha <0 $ conditions may be fulfilled by borrowing a qubit ancilla and returning it extremely close to its original state \cite{2ndlaw}. These quantities signify how finite-sized quantum systems differ thermodynamically from classical macroscopic systems. Intuitively, %the emergence of additional conditions beyond the Helmholtz free energy 
these quantities also tell us that more moments of the energy distribution are indispensable in determining thermodynamical properties of a system, when we are outside a regime where the law of large numbers applies.

\begin{figure}[h!]
	\includegraphics[scale=0.73]{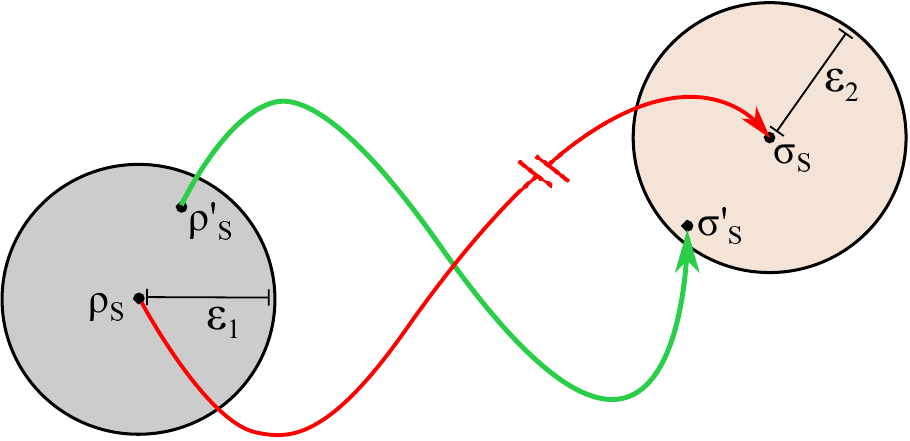}
	\caption{\small When an exact transition $ \rho_S\xrightarrow[\rm CTO]~\sigma_S $ is not possible (denoted by a disconnected red arrow), it might still be true that a state $ \rho_S' $ $ \varepsilon_1 $-close to $ \rho_S $ can be transformed to $ \sigma_S'$  $\varepsilon_2$-close to $\sigma_S $. What are the conditions governing such approximate transitions?}\label{fig:sketch}
\end{figure}

While most literature on thermodynamic resource theories is concerned with exact state transformations \cite{HO13,2ndlaw,lostaglio2015description,gour2015resource,renes2014work,faist2015gibbs,cwiklinski2014towards}, in realistic implementations, we may be satisfied as long as the transition is approximately achieved. For example, in experimental setups, initial states are prepared (and processes are implemented) always up to some high but finite accuracy \cite{PhysRevLett.113.140601,an2015experimental}, resulting in the achievement of the final state (or work distribution) up to small but non-zero errors. This has also been studied theoretically in the context of probabilistic thermal operations \cite{alhambra2016fluctuating}, using a catalyst and returning it approximately \cite{2ndlaw}, and in work extraction protocols when heat/entropy is inevitably produced alongside \cite{aaberg2013truly,woods2015maximum}.
Here, we ask whether one can identify conditions for approximate state transitions on the system $ S $ to occur, where by ``approximate" we refer to a situation in which the error $\varepsilon $ in terms of trace distance between an ideal state $ \rho $ versus the real state $ \rho' $ is small, which we also write as $ \rho'\approx_{\varepsilon}\rho $. As the trace distance quantifies how well two states can be distinguished~\cite{nielsen2002quantum}, approximate thus means that the two states are nearly indistinguishable (up to error $ \varepsilon $) by any physical process.

%Our goal is to have an operational way to define smoothing in quantum thermodynamics, instead of just modifying individual divergence quantities.
In this work, we make progress towards answering the question of approximate state transitions, by introducing a new family of smooth generalized free energies, $ \hat F_\alpha^\varepsilon (\rho_S,\tau_{S_\beta}) $ for any block-diagonal state $ \rho_S $. These smooth \gfe~jointly provide sufficient conditions for approximate state transitions. More precisely, if for some $ 0<\varepsilon_1,\varepsilon_2 <1 $,
\begin{equation}\label{key}
\hat F_\alpha^{\varepsilon_1} (\rho_S,\tau_{S_\beta}) \geq \hat F_\alpha^{\varepsilon_2} (\sigma_S,\tau_{S_\beta})\qquad \forall \alpha\geq 0,
\end{equation}
then we know that there exists a CTO that can take an initial state $ \epssteep{\varepsilon_1}$ $ \varepsilon_1 $-close to $ \rho_S $, to a final state $ \sigma_{\rm fl}^{\varepsilon_2} $ which is $\varepsilon_2 $-close to $ \sigma_S $. The exact form of these states $ \epssteep{\varepsilon_1}$ and $\sigma_{\rm fl}^{\varepsilon_2}$ may be explicitly determined. Moreover, the thermal operation that brings $ \epssteep{\varepsilon_1}\xrightarrow[\rm TO]~\sigma_{\rm fl}^{\varepsilon_2} $, when acted on $ \rho_S $, will also produce a final state (see Appendix \ref{subsec:approxTO})
\begin{equation}\label{key}
\rho_S \xrightarrow[\rm TO]~ \sigma_S'\approx_{\varepsilon_1+\varepsilon_2} \sigma_S.
\end{equation}
We also proved that for all $ \alpha\geq 0$, when one takes $ n $ identical and independently distributed (i.i.d.) copies, then in the limit $ n\rightarrow\infty $, and $ \varepsilon\rightarrow 0 $, the normalized quantities $ \sfe $ converge to $ F_1 $, which is the standard Helmholtz free energy known in thermodynamics. This establishes with full rigour that approximate state transitions approaching the thermodynamic limit become determined solely by the Helmholtz free energy.

%The smoothed R\'enyi divergences of a state $\rho$ are closely related to two special states within an $\epsilon$-ball around $\rho$. These special states are the steepest and the flattest state. The steepest state is the state that can be transformed to every state within this $\epsilon$-ball by a catalytic thermal operation. Similarly, the flattest state is the state that every state within an $\epsilon$-ball around $\rho$ can be transformed to by a catalytic thermal operation.\\ 

%These states can be used to give operational meaning to the smoothed R\'enyi divergences. It can be shown that the flattest state always exists, but the same cannot be said for the steepest state. Therefore, there is a large region of $\epsilon$ where this operational meaning ceases to exist. In this work, we define a new set of smoothed divergences that share some important properties with the smoothed R\'enyi divergences, and have a clear operational meaning.

\section{New divergences}
In this section, we present the form of our newly defined smooth generalized free energies. To do so, let us first recall that the exact generalized free energies are given by
\begin{equation}\label{eq:alphaFE}
F_\alpha (\rho_S,\tau_{S_\beta}):= \beta^{-1}\cdot \left[-\ln Z_\beta + D_\alpha (\rho_S\|\tau_{S_\beta})\right],
\end{equation}
where $ Z_\beta = \tr (e^{- \beta H_S})$ is the partition function, and $ D_\alpha (\rho_S\|\tau_{S_\beta}) $
are quantum R{\'e}nyi divergences defined in \cite{muller2013quantum}\footnote{The values of $ D_\alpha $ at points $ \alpha=1, \pm \infty $ are determined by the limits $ \alpha\rightarrow 1,\pm \infty $ respectively, and therefore $ D_\alpha $ is continuous in $ \alpha\in\mathbb{R} $. In Ref.~\cite{muller2013quantum}, these divergences were defined only for $ \alpha \geq 0$, however one may extend these divergences for $ \alpha <0 $, with the function $ \sgn(\alpha) $ as shown in Ref.~\cite{2ndlaw}.}. If we consider states $ \rho_S $ block-diagonal with respect to $ H_S $, then such states commute with $ \tau_{S_\beta} $. Therefore, by denoting the ordered eigenvalues of $ \rho_S,\tau_{S_\beta} $ as $ \lbrace p_i \rbrace_{i}$ and $\lbrace \tau_i \rbrace_{i} $ respectively, $ D_\alpha $ in the regime where $ \alpha\geq 0 $ may be simplified to
\begin{equation}\label{eq:div_bd}
D_\alpha (\rho_S\|\tau_{S_\beta}) = \frac{1}{\alpha-1} \ln \sum_i p_i^\alpha\tau_i^{1-\alpha}.
\end{equation}
The reader who is familiar with R{\'e}nyi divergences knows that smooth variants, denoted as $ D_\alpha^\varepsilon $ have long existed \cite{renner2004smooth,renner2008security,datta2009min}, and have been shown to also converge to the relative entropy \cite{2ndlaw}, which recovers the Helmholtz free energy when substituted into Eq.~\eqref{eq:alphaFE}. Therefore, why not simply replace $ D_\alpha $ with $ D_\alpha^\varepsilon $? The reason why such an approach is undesirable can be seen from the form of these quantities\footnote{From now on, we drop the subscript $ S $ from the states such as $ \rho_S,\tau_S $, since in the rest of the paper they refer to the system by default; subscripts are used only when other systems such as the bath, or the catalyst are discussed. }:
\begin{equation}\label{eq:original_smootheddiv}
D_\alpha^\epsilon(\rho||\tau_\beta) = \begin{cases}
%\min_{\tilde{\rho}} D_\alpha(\tilde{\rho}||\tau) & \text{ if } \alpha<0,\\
\displaystyle\max_{\tilde{\rho}\in\mathcal{B}^\varepsilon (\rho)} D_\alpha(\tilde{\rho}||\tau_\beta) & \text{ if } 0\leq\alpha\leq1,\\
\displaystyle\min_{\tilde{\rho}\in\mathcal{B}^\varepsilon (\rho)} D_\alpha(\tilde{\rho}||\tau_\beta) & \text{ if } \alpha>1,
\end{cases}
\end{equation}
where the optimization in Eq.~\eqref{eq:original_smootheddiv} is over the set of all quantum states $ \varepsilon $-close in terms of trace distance to $ \rho $, denoted as $ \mathcal{B}^\varepsilon (\rho)$. 
Note that for different regimes within $ \alpha\geq 0 $, the optimization is different (min/max), and moreover, the solution $\tilde{\rho}_\alpha$ would be in general dependent on $ \alpha $. Therefore, when jointly comparing 
%the smooth divergences on two states 
$ D_\alpha^\varepsilon(\rho\|\tau_\beta)$ and $D_\alpha^\varepsilon(\sigma\|\tau_\beta) $ for all $ \alpha $, the operational meaning of comparing these divergences remains unclear, since it does not directly imply the comparison between divergences of a specific initial and final state $ \rho_\varepsilon,\sigma_\varepsilon$, and thus the second laws~\cite{2ndlaw} cannot be applied, except solely in the limit where $ \varepsilon\rightarrow 0 $.
On the other hand, the construction of our \gfe~involve the replacement of $ D_\alpha $ with $ \hat D_\alpha^\varepsilon $, that depends on explicit constructions of two block-diagonal states $ \flattest,\steep $, which we call the flattest state and the steep state:
%We will use the steep and flattest states that we constructed, and define our new smoothed divergences as
\begin{equation} \label{eq:newdiv}
\hat{D}_\alpha^\epsilon(\rho||\tau_\beta) = \begin{cases}
%D_\alpha(\flattest||\tau_\beta)	& \text{ if } \alpha<0,\\
D_\alpha(\steep||\tau_\beta)	& \text{ if } 0\leq\alpha\leq1,\\
D_\alpha(\flattest\|\tau_\beta)	& \text{ if } \alpha>1.
\end{cases}
\end{equation}
The explicit construction of $ \flattest,\steep $ that we use here can be found in Section \ref{sec:steepflat}, and it is such an explicit construction that makes it possible to have an operational meaning in terms of state transitions. % where $\rho^\epsilon_{flat}$ is the flattest state, and $\rho^\epsilon_{steep}$ is the steep state.\\
Here, we leave one remark about these states, in order to motivate such a definition. The state $ \flattest $ is special in the sense that any other state $\rho'\in\epsball(\rho)$ (including non-block diagonal states) can always be transformed to $\flattest$ by thermal operations (TO)~\cite{HO13}, which is simply a special case of catalytic thermal operations where the catalyst is not needed. This can be expressed in terms of exact R\'enyi divergences: for all $\alpha\geq 0$, and any $ \varepsilon \geq 0 $, if $ \rho'\in \epsball(\rho) $, then 
\begin{equation}\label{eq:div_flattest}
D_\alpha(\rho'||\tau_\beta) \geq D_\alpha(\flattest||\tau_\beta).
\end{equation}
In particular, since we constructed $ \steep $ such that $ \steep \in\epsball(\rho)$, this means that 
$ D_\alpha(\steep||\tau_\beta) \geq D_\alpha(\flattest||\tau_\beta), $
and therefore the steep state can always be transformed to the flattest state. However, the steep state $ \steep $ does not enjoy the same kind of uniqueness as $ \flattest $; we later prove that one cannot always find a unique candidate for $ \steep $ that can be transformed to any state $ \rho' \in \epsball (\rho)$.

We can make use of the properties of $ \flattest $ and $ \steep $ to prove the operational meaning of the smoothed quantities in Eq.~\eqref{eq:newdiv}. 
By defining new smooth \gfe~as
\begin{equation}\label{eq:defgenf}
\hat F_\alpha^\varepsilon(\rho,\tau_\beta):= \beta^{-1}\cdot \left[-\ln Z_\beta + \hat D_\alpha^\varepsilon (\rho\|\tau_\beta)\right],
\end{equation}
we may state our main result as Theorem \ref{thm:equiv}.
\begin{theorem} \label{thm:equiv}
	Consider two states $\rho$ and $\sigma$ block-diagonal with respect to the Hamiltonian $ H $. Let $ \tau_\beta $ be the thermal state at inverse temperature $ \beta $, where $ \beta > 0 $. If for all $\alpha\geq 0$, we have
	\begin{equation}\label{key}
	\hat F_\alpha^\varepsilon(\rho,\tau_\beta) \geq \hat F_\alpha^\varepsilon(\sigma,\tau_\beta),
	\end{equation}
	then the exact state transition $\steep\xrightarrow[\rm CTO]~\sigma_{\rm fl}^\varepsilon$ is possible by a catalytic thermal operation.
\end{theorem}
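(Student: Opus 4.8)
The plan is to deduce the transition from the sufficiency direction of the second laws~\cite{2ndlaw}: since $\steep$ and $\sigma_{\rm fl}^\varepsilon$ are both, by their construction in Section~\ref{sec:steepflat}, block-diagonal with respect to $H$, it suffices to establish
\begin{equation}\label{eq:goal_plan}
F_\alpha(\steep,\tau_\beta)\ \geq\ F_\alpha(\sigma_{\rm fl}^\varepsilon,\tau_\beta)\qquad\text{for all }\alpha\geq 0,
\end{equation}
and then to dispose of the $\alpha<0$ conditions by the qubit-ancilla argument of~\cite{2ndlaw}, after which $\steep\xrightarrow[\rm CTO]~\sigma_{\rm fl}^\varepsilon$ follows. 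Since $F_\alpha$ is an affine, strictly increasing function of $D_\alpha$ (common slope $\beta^{-1}>0$ and common offset $-\beta^{-1}\ln Z_\beta$; recall $\beta>0$), inequality~\eqref{eq:goal_plan} is equivalent to $D_\alpha(\steep\|\tau_\beta)\geq D_\alpha(\sigma_{\rm fl}^\varepsilon\|\tau_\beta)$ for all $\alpha\geq 0$, so the whole argument reduces to a short chain of R\'enyi-divergence inequalities.

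First I would unpack the hypothesis using the definitions~\eqref{eq:newdiv}--\eqref{eq:defgenf} applied to both $\rho$ and $\sigma$. Observe that $\hat F_\alpha^\varepsilon(\rho,\tau_\beta)=F_\alpha(\steep,\tau_\beta)$ for $0\leq\alpha\leq 1$ and $\hat F_\alpha^\varepsilon(\rho,\tau_\beta)=F_\alpha(\flattest,\tau_\beta)$ for $\alpha>1$, with the analogous identities for $\sigma$. Hence the assumption $\hat F_\alpha^\varepsilon(\rho,\tau_\beta)\geq\hat F_\alpha^\varepsilon(\sigma,\tau_\beta)$ says precisely that
\begin{align}
D_\alpha(\steep\|\tau_\beta) &\geq D_\alpha(\sigma_{\rm steep}^\varepsilon\|\tau_\beta)\quad\text{for }0\leq\alpha\leq 1,\label{eq:low_plan}\\
D_\alpha(\flattest\|\tau_\beta) &\geq D_\alpha(\sigma_{\rm fl}^\varepsilon\|\tau_\beta)\quad\text{for }\alpha>1.\label{eq:high_plan}
\end{align}

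Second, I would close each regime using the extremality property~\eqref{eq:div_flattest} of the flattest state. In the range $0\leq\alpha\leq 1$: because $\sigma_{\rm steep}^\varepsilon\in\epsball(\sigma)$, \eqref{eq:div_flattest} applied to $\sigma$ gives $D_\alpha(\sigma_{\rm steep}^\varepsilon\|\tau_\beta)\geq D_\alpha(\sigma_{\rm fl}^\varepsilon\|\tau_\beta)$, which together with~\eqref{eq:low_plan} yields $D_\alpha(\steep\|\tau_\beta)\geq D_\alpha(\sigma_{\rm fl}^\varepsilon\|\tau_\beta)$. In the range $\alpha>1$: because $\steep\in\epsball(\rho)$, \eqref{eq:div_flattest} applied to $\rho$ gives $D_\alpha(\steep\|\tau_\beta)\geq D_\alpha(\flattest\|\tau_\beta)$, which together with~\eqref{eq:high_plan} again yields $D_\alpha(\steep\|\tau_\beta)\geq D_\alpha(\sigma_{\rm fl}^\varepsilon\|\tau_\beta)$. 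The boundary values $\alpha=1$ and $\alpha\to\infty$ then follow by continuity of $\alpha\mapsto D_\alpha$. This establishes~\eqref{eq:goal_plan}, and invoking the second laws~\cite{2ndlaw} (with the $\alpha<0$ conditions handled by the ancilla argument) completes the proof.

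I do not expect a genuine obstacle in the chain above: the divergences~\eqref{eq:newdiv} were tailored so that exactly this telescoping goes through, and the real content of the theorem lies in the construction of Section~\ref{sec:steepflat} together with its extremality property~\eqref{eq:div_flattest}. The points that demand care are therefore the ingredients imported from that section --- that the constructions indeed guarantee $\steep\in\epsball(\rho)$ and $\sigma_{\rm steep}^\varepsilon\in\epsball(\sigma)$, that $\steep$ and $\sigma_{\rm fl}^\varepsilon$ are block-diagonal (so that the \emph{sufficiency} direction of the second laws applies), and that~\eqref{eq:div_flattest} holds for \emph{every} $\alpha\geq 0$ including the endpoints $\alpha=0$ and $\alpha\to\infty$ --- together with the standard handling of the $\alpha<0$ second-law conditions. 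The assumption $\beta>0$ guarantees that $\tau_\beta$ has full rank, which keeps all the $D_\alpha$ finite and makes the continuity and limiting arguments unproblematic.
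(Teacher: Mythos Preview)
Your proposal is correct and follows essentially the same argument as the paper's proof: unpack the hypothesis via the definition of $\hat D_\alpha^\varepsilon$ into the two regimes $0\leq\alpha\leq 1$ and $\alpha>1$, then in each regime insert the extremality inequality~\eqref{eq:div_flattest} (on the $\sigma$-side for $\alpha\leq 1$, on the $\rho$-side for $\alpha>1$) to obtain $D_\alpha(\steep\|\tau_\beta)\geq D_\alpha(\sigma_{\rm fl}^\varepsilon\|\tau_\beta)$ for all $\alpha\geq 0$, and conclude by the second laws. Your additional remarks about block-diagonality of the constructed states, continuity at the boundary values of $\alpha$, and the ancilla handling of the $\alpha<0$ conditions are all appropriate but do not alter the structure of the argument.
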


%Comparing these smoothed quantities
%Smoothed versions of the R\'enyi divergences, denoted as $ D_\alpha^\varepsilon $ have been established long ago \cite{}. They are closely related to our definitions of $ \hat D_\alpha^\varepsilon $, although they do not completely coincide.
%We will compare these new divergences with the original smoothed R\'enyi divergences. 
%, let us 
%rewrite the smoothed R\'enyi divergences in terms of steepest and flattest states. R
%first recall that $ D_\alpha^\varepsilon $ are defined as
There are two remaining questions. Firstly, how do the smoothed quantities $ D_\alpha^\varepsilon $ and $ \hat D_\alpha^\varepsilon $ relate to each other? We find that for any $ \varepsilon \in [0,1] $, an explicit state $ \flattest $ always exists such that Eq.~\eqref{eq:div_flattest} holds. Therefore, we know that the minimizations in Eq.~\eqref{eq:original_smootheddiv} are obtained by $ \flattest $. 
%The steepest state maximizes the R\'enyi divergence, while the flattest state minimizes this quantity, for all values of $\alpha$. 
Using this property, we may rewrite the conventional smoothed R\'enyi divergences as
\begin{equation} \label{eq:olddiv}
D_\alpha^\epsilon(\rho||\tau) = \begin{cases}
%D_\alpha(\flattest||\tau)	& \text{ if } \alpha<0\\
\displaystyle\max_{\tilde{\rho}\in\mathcal{B}^\varepsilon (\rho)} D_\alpha(\tilde{\rho}||\tau)	& \text{ if } 0\leq\alpha\leq1\\
D_\alpha(\flattest||\tau)	& \text{ if } \alpha>1.
\end{cases}
\end{equation}
%where $\rho^\epsilon_{flat}$ is the flattest state, and $\rho^\epsilon_{steepest}$ is the steepest state. 
This shows that these new smoothed divergences are quite similar to the original smoothed divergences: for $\alpha>1$, they are equivalent. 
However, the same is no longer true for $ 0\leq\alpha\leq 1 $, i.e. we show that it is not possible to replace the maximization in Eq.~\eqref{eq:olddiv} with a single explicit state. This is also why Theorem \ref{thm:equiv} is only a sufficient condition (but not necessary); there can be multiple candidates in $ \mathcal{B}^\varepsilon (\rho) $ which are steeper than $ \rho $, but maximize $ D_\alpha $ for different values of $ \alpha $. For a particular state transition, the best $ \steep $ candidate may depend on the final target state.
%For the region in-between, they coincide if the steep state as we define it is the steepest state within an $\epsilon$-ball around $\rho$. Because this is in general not the case, the divergences are not equivalent.

%\section{Asymptotic convergence to the relative entropy}
The second question is whether the \gfe~in Eq.~\eqref{eq:defgenf} recover the macroscopic second law when approaching the thermodynamic limit. We show that this is true, by proving that our new smoothed quantities satisfy the asymptotic equipartition property:
\begin{theorem}\label{thm:AEP}
	Consider any state $ \rho $ block-diagonal with respect to the Hamiltonian $ H $. Then for all $\alpha\geq 0$,
	\begin{equation}
	\lim_{\varepsilon\rightarrow0}\lim_{n\rightarrow\infty}\frac{1}{n}\hat{F}_\alpha^\epsilon(\rho^{\otimes n},\tau_\beta^{\otimes n}) = F(\rho,\tau_\beta).
	\end{equation}	
\end{theorem}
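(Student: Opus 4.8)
The strategy is to reduce the statement to an AEP for the divergences $\hat D_\alpha^\varepsilon$, and then to sandwich $\hat D_\alpha^\varepsilon(\rho^{\otimes n}\|\tau_\beta^{\otimes n})$ between two quantities whose $n\to\infty$, $\varepsilon\to0$ limits are already known to be the relative entropy $D(\rho\|\tau_\beta)$. Since the partition function of $H^{\otimes n}$ is $Z_\beta^{\,n}$, the $-\ln Z_\beta$ prefactor in $\hat F_\alpha^\varepsilon$ is additive and $F(\rho,\tau_\beta)=\beta^{-1}[-\ln Z_\beta + D(\rho\|\tau_\beta)]$, so it suffices to prove $\lim_{\varepsilon\to0}\lim_{n\to\infty}\tfrac1n\hat D_\alpha^\varepsilon(\rho^{\otimes n}\|\tau_\beta^{\otimes n})=D(\rho\|\tau_\beta)$ for all $\alpha\ge0$. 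For $\alpha>1$ there is nothing new to do: by Eq.~\eqref{eq:olddiv}, $\hat D_\alpha^\varepsilon(\rho^{\otimes n}\|\tau_\beta^{\otimes n})$ coincides with the conventional smoothed R\'enyi divergence $D_\alpha^\varepsilon(\rho^{\otimes n}\|\tau_\beta^{\otimes n})$ (both equal $D_\alpha$ evaluated on the flattest state of $\mathcal{B}^\varepsilon(\rho^{\otimes n})$), and the AEP for the latter is known \cite{2ndlaw,datta2009min}.

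For $0\le\alpha\le1$ I would prove the two inequalities separately. The upper bound is immediate: the steep state $\steep$ lies in $\mathcal{B}^\varepsilon(\rho^{\otimes n})$ and, in this range, the conventional smoothing is a maximum over that ball, so $\hat D_\alpha^\varepsilon(\rho^{\otimes n}\|\tau_\beta^{\otimes n})=D_\alpha(\steep\|\tau_\beta^{\otimes n})\le D_\alpha^\varepsilon(\rho^{\otimes n}\|\tau_\beta^{\otimes n})$; taking $n\to\infty$ then $\varepsilon\to0$ and invoking the known AEP for $D_\alpha^\varepsilon$ yields $\limsup\le D(\rho\|\tau_\beta)$.

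The lower bound is the heart of the proof. Monotonicity of the R\'enyi divergences in $\alpha$ gives $D_\alpha(\steep\|\tau_\beta^{\otimes n})\ge D_0(\steep\|\tau_\beta^{\otimes n})=-\ln\tr(\Pi\,\tau_\beta^{\otimes n})$, with $\Pi$ the projector onto the support of $\steep$, so it is enough to lower-bound $D_0$. Here I would use the explicit construction of Section~\ref{sec:steepflat}: the steep state is obtained from $\rho^{\otimes n}$ by deleting the energy levels lowest in the $\beta$-order (those with smallest ratio $p_i/\tau_i$) up to total weight $\approx\varepsilon$ and moving this weight onto the top of the spectrum, so $\Pi$ is, up to subleading corrections, the smallest set of levels carrying all but an $\varepsilon$-fraction of $\rho^{\otimes n}$. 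Hence $\tr(\Pi\,\tau_\beta^{\otimes n})$ attains the minimum defining the $\alpha=0$ smoothed divergence, i.e.\ $D_0(\steep\|\tau_\beta^{\otimes n})=D_0^\varepsilon(\rho^{\otimes n}\|\tau_\beta^{\otimes n})$ (essentially the $\varepsilon$-hypothesis-testing relative entropy). By the relative AEP / quantum Stein's lemma, $\tfrac1n D_0^\varepsilon(\rho^{\otimes n}\|\tau_\beta^{\otimes n})\to D(\rho\|\tau_\beta)$ for every fixed $\varepsilon\in(0,1)$, so $\liminf\ge D(\rho\|\tau_\beta)$. Together with the upper bound this squeezes the limit to $D(\rho\|\tau_\beta)$, which with the $\alpha>1$ case proves the theorem.

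The main obstacle is precisely this lower bound. A naive continuity argument — perturbing $\rho^{\otimes n}$ within the $\varepsilon$-ball and controlling $|D_\alpha(\steep\|\tau_\beta^{\otimes n})-D_\alpha(\rho^{\otimes n}\|\tau_\beta^{\otimes n})|$ — is hopeless, because the modulus of continuity of $D_\alpha$ (especially as $\alpha\to0$) degrades with the smallest eigenvalue of $\tau_\beta^{\otimes n}$, i.e.\ exponentially in $n$; this forces one to use the explicit form of $\steep$. The quantitative ingredient one really needs is the relative-typicality estimate that the deleted low-ratio $\varepsilon$-tail of $\rho^{\otimes n}$ carries thermal weight $1-e^{-n(D(\rho\|\tau_\beta)-o(1))}$. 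One must also check that moving the deleted weight onto the top of the spectrum (which the construction does so as to keep a valid concave thermomajorization curve) neither enlarges $\mathrm{supp}(\steep)$ nor decreases $D_0$, so that the identification with $D_0^\varepsilon$ is preserved, and track the subleading gap between ``deleted weight $\varepsilon$'' and ``trace-distance radius $\varepsilon$''.
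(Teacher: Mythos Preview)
Your strategy is sound and the proof goes through, but the route differs from the paper's in an instructive way.

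Where you invoke the known AEP for the conventional $D_\alpha^\varepsilon$ (for $\alpha>1$ wholesale, and for the upper bound when $0\le\alpha\le1$), the paper instead redoes everything from scratch using Hoeffding's inequality on the i.i.d.\ random variables $\log(p_i/q_i)$, obtaining explicit two-sided bounds of the form $D(\rho\|\tau_\beta)-\delta\le\tfrac1n\hat D_\alpha^\varepsilon\le D(\rho\|\tau_\beta)+g(n,\varepsilon)$ with $\delta=\sqrt{\tfrac1{2n}\ln(2/\varepsilon)}$. In particular, for the upper bound in the range $\alpha\in[0,1]$ the paper does \emph{not} compare with $D_\alpha^\varepsilon$; it directly estimates $D_1$ of the steep state by splitting the sum $\sum_i\hat p_i\log(\hat p_i/\tilde q_i)$ over the typical and atypical sets. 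Your shortcut via $\hat D_\alpha^\varepsilon\le D_\alpha^\varepsilon$ is cleaner, but it surrenders the quantitative finite-$n$ control that the paper actually uses downstream (Corollary~\ref{cor:onecondition}).

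For the lower bound at $\alpha\in[0,1]$ the two arguments essentially coincide: both reduce via monotonicity to $D_0$ of the steep state and then use relative typicality to see that cutting the $\varepsilon$-tail in $\beta$-order leaves a support of thermal weight at most $2^{-n(D-\delta)}$. One caution: your stated equality $D_0(\steep\|\tau_\beta^{\otimes n})=D_0^\varepsilon(\rho^{\otimes n}\|\tau_\beta^{\otimes n})$ is not exact in general. Maximizing $D_0$ over the $\varepsilon$-ball is a $0$--$1$ knapsack problem (choose $T$ with $\sum_{i\in T} p_i\le\varepsilon$ to maximize $\sum_{i\in T} q_i$), and greedy-by-ratio---which is what the steep-state construction implements, moreover without fully zeroing the boundary level $R$---need not solve it. This does not damage the proof, because the typicality estimate you state next already lower-bounds $D_0(\steep)$ directly, exactly as the paper does; but the detour through $D_0^\varepsilon$ and Stein's lemma should be replaced by that direct estimate rather than presented as an identity.
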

In proving Theorem \ref{thm:AEP}, we obtain explicit upper and lower bounds (see Appendix \ref{app:C_AEP}) of the form
\begin{equation}\label{key}
F(\rho,\tau_\beta) - f(n,\varepsilon) \leq \frac{1}{n}F_\alpha^\varepsilon (\rho^{\otimes n},\tau_\beta^{\otimes n}) \leq F(\rho,\tau_\beta) + g(n,\varepsilon),
\end{equation}
where one can show that $ f(n,\varepsilon) $ and $ g(n,\varepsilon) $ vanish in the limits $ n\rightarrow\infty $ and $ \varepsilon\rightarrow 0 $\footnote{The functions $ f $ and $ g $ as shown in Appendix \ref{subsec:proofT2}, have an implicit dependency on $ \rho $ and $ \tau $ as well. However, for any $ \rho $ and $ \tau $ (thermal state), we can show that these functions vanish in the desired limits $ n\rightarrow\infty $ and $ \varepsilon\rightarrow 0 $.}. Furthermore, these bounds are still useful should one be interested in finite values of $ n $ and $ \varepsilon $. This is in contrast to Ref.~\cite{2ndlaw}, where when using the previously known quantities $ D_\alpha^\epsilon $ in Eq.~\eqref{eq:original_smootheddiv}, one can only recover the macroscopic second law in the limit $ n\rightarrow\infty $ and $ \varepsilon\rightarrow 0$, while for finite $ n,\varepsilon $, there is no operational meaning in terms of state transitions. Our results also show that for finite values of $ n$ and $\varepsilon $, one can easily check whether there exists a particular approximate transition: if
\begin{equation}\label{eq:finiten}
F(\rho,\tau_\beta)	\geq F(\sigma,\tau_\beta) + \beta^{-1}\Delta(n,\varepsilon,\rho,\sigma,\tau_\beta),
\end{equation}
then $ (\rho^{\otimes n})_{\rm steep}^\varepsilon \rightarrow  (\sigma^{\otimes n})_{\rm fl}^\varepsilon$ is possible via thermal operations. The explicit form of $ \Delta(n,\varepsilon,\rho,\sigma,\tau_\beta) $ is derived in Corollary \ref{cor:onecondition} in Appendix \ref{subsec:proofT2}, and vanishes to zero in the limit $ \varepsilon\rightarrow 0 $ and $ n\rightarrow\infty $. Such a bound is useful for example in the following situation: consider $ \rho $ and $ \sigma $ such that we know $ F(\rho,\tau_\beta) > F(\sigma,\tau_\beta) $, and therefore in the thermodynamic limit, one can asymptotically transform $ n $ copies of $ \rho $ into $ \sigma $ via CTOs. However, it is possible that when one considers a single-copy transformation, Eq.~\eqref{eq:exactgenfree} is not satisfied for all $ \alpha\geq 0 $, and therefore the transition cannot take place. However, one can use Eq.~\eqref{eq:finiten} to find a lower bound such that whenever $n \geq n^* $, then $ (\rho^{\otimes n})_{\rm steep}^\varepsilon \rightarrow  (\sigma^{\otimes n})_{\rm fl}^\varepsilon$ is possible, by invoking $ \Delta (n^*,\varepsilon,\rho,\sigma,\tau_\beta) \leq \beta [F(\rho,\tau_\beta)-F(\sigma,\tau_\beta)]$.
%Of course, if the trace distance $ \varepsilon $ is large, then approximate transitions eventually are always made possible. Therefore, ideally we desire $ \varepsilon >0$ to be small. Due to the strong operational meaning of these divergences, we can even see that in the regime of finite $ n $, the smallest value $ \varepsilon $ that allows for approximate transitions shrinks exponentially in $ n $.
\section{Steep and flat states}\label{sec:steepflat}
\subsection{Motivation and Definition}

\begin{figure*}
	\centering
	\begin{subfigure}{.49\textwidth}
		\centering
		\includegraphics[width=.92\linewidth]{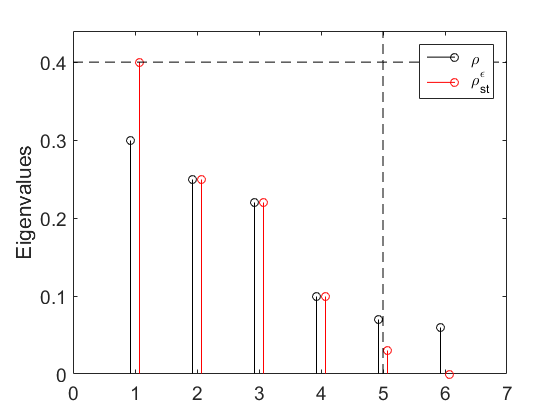}
		\vspace{-0.7cm}\subcaption{}
		\label{fig:steepest}
	\end{subfigure}%
	\hspace{0.2cm}
	\begin{subfigure}{.49\textwidth}
		\centering
		\includegraphics[width=.92\linewidth]{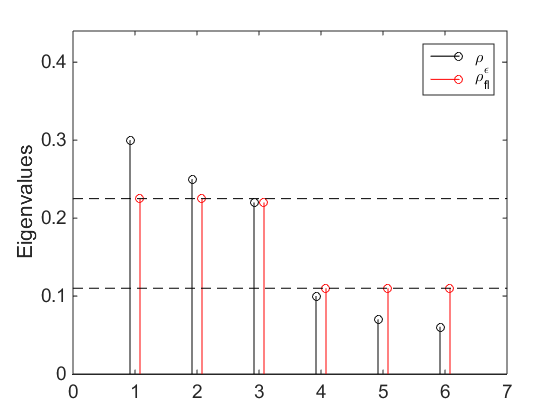}
		\vspace{-0.7cm}\subcaption{}
		\label{fig:flattest}
	\end{subfigure}
	\caption{\small Steepest and flattest states of $ \rho $ with ordered eigenvalues $ \eig(\rho) = \lbrace 0.3,
		0.25,   0.22,  0.1, 0.07, 0.06 \rbrace $, when the Hamiltonian is trivial and $ \varepsilon=0.1 $. In Fig.~\ref{fig:steepest}, the steepest state is obtained by cutting the distribution tail, and increasing the largest eigenvalue to normalize. Therefore, we have $  \eig(\steepest) = \lbrace 0.4,
		0.25,   0.22,  0.1, 0.04, 0 \rbrace $: all eigenvalues to the right of the vertical line are cut, and $ \varepsilon $ is added to the first eigenvalue. In Fig.~\ref{fig:flattest}, the flattest state is constructed by cutting the largest eigenvalues up to $\epsilon$. One visualizes this as having an upper dashed, horizontal line gradually lowered until the probability mass laying above equals $ \varepsilon $. This mass is cut and redistributed by adjusting the lower dashed, horizontal line to a height, such that if one increases all probabilities laying below this line (i.e. 4-6 in this figure), up to this line, a total of $ \varepsilon $ is added. This gives $ \eig(\flattest) = \lbrace 0.225, 0.225, 0.22,  0.11,
		0.11, 0.11 \rbrace$. }
	%\label{fig:flattest}
\end{figure*}
\begin{figure}
	\includegraphics[width=0.97\linewidth]{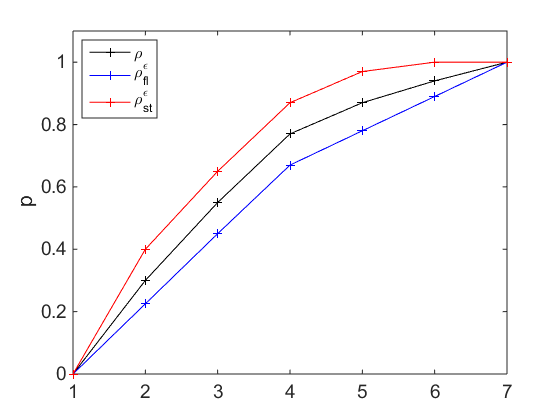}
	\caption{\small Majorization curves of a state $ \rho $ compared to its steepest and flattest states $ \steepest $ and $ \flattest $. Any other state $ \rho'\in\epsball (\rho) $ will have a majorization curve that lies between the red and blue curve.}\label{fig:TMC}
\end{figure}
%, starting with the simplified case of trivial (fully degenerate) Hamiltonians.
Here, we present explicit smoothing procedures used in the definition of $ \hat D_\alpha^\varepsilon $ given in Eq.~\eqref{eq:newdiv}. Given a quantum state denoted by $\rho$, and a smoothing parameter $\epsilon>0$, we would like to find the most ``advantageous'' or ``disadvantageous'' states that are close to $\rho$ in terms of trace distance. By most advantageous, we mean that the state may reach as many other states that are also close to $\rho$ as possible. Similarly, by most disadvantageous, we mean that such a state may always be obtained from other states which are also close to $ \rho $. 

We find these states by considering transitions via thermal operations (TO)~\cite{BMORS13,HO13}, which are CTOs without a catalyst: in the description given in Eq.~\eqref{eq:CTO}, the system $ C $ is dropped completely. 
Our analysis is focused on the subset of states which commute with the Hamiltonian. Note that TOs form a subset of CTOs, so if a transition can be performed with a TO, then the transition can also be performed by a CTO. %So, we will try to find a state that can be transformed into all other states that are close to $\rho$, by a thermal operation. Similarly, we will try to find a state that all other states that are close to $\rho$ can be transformed to, by a thermal operation. 
To find these states, we will mainly be analyzing \emph{thermo-majorization curves}, which is the necessary and sufficient condition that determines the possibility of a transition $ \rho\xrightarrow[\rm TO]~\rho' $~\cite{HO13}. 

Consider a block-diagonal quantum state $\rho$ associated with a Hamiltonian $ H $. Given the set $ \epsball(\rho) $, consider a special subset of block-diagonal states $ \epsball_D(\rho) \subseteq \epsball(\rho) $, 
%Instead of considering all states that are close to $\rho$, we will consider only the block diagonal states,
%\begin{equation}
%	\epsball_D(\rho) \subseteq \epsball(\rho),
%\end{equation}
with% $\epsball_D(\rho)$ given by
\begin{equation}
\epsball_D(\rho) = \{\rho'| \rho'\in\epsball(\rho), [\rho',H] = 0\}.
\end{equation}
%Let us first define the more general advantageous and disadvantageous states. 
If a state in $ \epsball_D(\rho) $ is more advantageous than $ \rho $, we call this an $\epsilon$-steep state; similarly if it is less advantageous, we call this an $\epsilon$-flat state. In particular, we use the following terminology: a block diagonal state $\hat{\rho}$ is $\epsilon$-steeper than $\rho$ if $\hat{\rho}\in\epsball_D(\rho)$ and $\hat{\rho}\rightarrow\rho$ is possible via thermal operations. On the other hand, we say that a block diagonal state $\tilde{\rho}$ is $\epsilon$-flatter than $\rho$ if $\tilde{\rho}\in\epsball_D(\rho)$ and $\rho\rightarrow\tilde{\rho}$ is possible via thermal operations.
%\begin{definition}
%	A block diagonal state $\hat{\rho}$ is $\epsilon$-steeper than $\rho$ if $\hat{\rho}\in\epsball_D(\rho)$ and $\hat{\rho}\xrightarrow[TO]{}\rho$.
%\end{definition}
%\begin{definition}
%	A block diagonal state $\tilde{\rho}$ is $\epsilon$-flatter than $\rho$ if $\tilde{\rho}\in\epsball_D(\rho)$ and $\rho\xrightarrow[TO]{}\tilde{\rho}$.
%\end{definition}
We leave two remarks about these definitions. First of all, it should be noted that not all states in $\epsball(\rho)$ satisfy either of these definitions; there exist incomparable states pairs $ \rho, \bar\rho $ where the transition cannot happen either way. 
Secondly, we can compare the R\'enyi divergence of these $\epsilon$-steep and $\epsilon$-flat states. For an $\epsilon$-steep state $\hat{\rho}$, because the transition $\hat{\rho}\xrightarrow[\rm TO]{}\rho$ is possible, the transition $\hat{\rho}\xrightarrow[\rm CTO]{}\rho$ is possible as well. Similarly, for any $\epsilon$-flat state $\tilde{\rho}$, the transition $\rho\xrightarrow[\rm CTO]{}\tilde{\rho}$ is possible. Thus, we know that their R\'enyi divergences satisfy for $\alpha\geq 0$,
\begin{equation}
D_\alpha(\hat{\rho}||\tau) \geq D_\alpha(\rho||\tau) \geq D_\alpha(\tilde{\rho}||\tau).
\end{equation}

Next, we look at extreme cases of $\epsilon$-steep and $\epsilon$-flat states, which we refer to as the $\epsilon$-steepest and $\epsilon$-flattest states.
\begin{definition}\label{def:steepest}
	The block-diagonal state $\steepest$ is the $\epsilon$-steepest state if $\steepest\xrightarrow[\rm TO]{}\rho'$ is possible for any $\rho'\in\epsball_D(\rho)$, or in other words, $ \steepest $ thermo-majorizes $ \rho' $.
\end{definition}
\begin{definition}\label{def:flattest}
	The block diagonal state $\flattest$ is the $\epsilon$-flattest state if the transition $\rho'\xrightarrow[\rm TO]{}\flattest$ is possible for any $\rho'\in\epsball(\rho)$, or in other words, $ \rho' $ thermo-majorizes $ \flattest $.
\end{definition}
As mentioned above, not all states are comparable when considering arbitrary Hamiltonians. This implies that $ \steepest $ and $ \flattest $ do not necessarily always exist for any $ \varepsilon$, introducing
additional challenges. To get some intuition, let us first mention however that they always exist for the simplest case of fully-degenerate (trivial) Hamiltonians (see \cite{steepflat} for proofs, and application in \cite{nilanjana} to study continuity bounds). A visual construction is shown in Figs.~\ref{fig:steepest} and \ref{fig:flattest}, and the reader may refer to Appendix \ref{app:B1_trivH} for the explicit mathematical construction. Fig.~\ref{fig:TMC} shows the majorization curve for $ \steepest $ and $ \flattest $, in comparison with $ \rho $. For general Hamiltonians, thermo-majorization curves have to be compared instead, and this complicates the task of finding steepest and flattest states, because the kinks do not align in their horizontal position (in contrast to Fig.~\ref{fig:TMC}).

\subsection{Constructing the flattest state and an $\epsilon$-steeper state for general Hamiltonians}\label{subsec:construction}
Let us turn to more general Hamiltonians with discrete energy levels. It is no longer straightforward to find the $ \varepsilon $-steepest or flattest states, 
because the optimal smoothing strategy depends on the Hamiltonian. Nevertheless, we can show that the $ \varepsilon $-flattest state always exists, by providing an explicit method to construct $ \flattest $. Consider a $d$-dimensional state $\rho$ block-diagonal in the energy eigenbasis, and write down its eigenvalues $\{p_i\}_i$ in a $\beta$-ordered form, such that
\begin{equation}\label{key}
p_1 e^{\beta E_1}\geq \cdots\geq p_d e^{\beta E_d}.
\end{equation}
%Consider also an $\epsilon$-ball around $\rho$.
For a smoothing parameter $ \epsilon$, the flattest state of $\rho$ can be constructed as follows: If $\epsilon$ is large enough, such that the trace distance $ \delta(\rho,\tau_\beta) \geq \varepsilon$, then we know that $ \tau_\beta\in\bdepsball(\rho) $. Since all states may go to $ \tau_\beta $ via thermal operations, by definition
%\begin{equation}
%\frac{1}{2}\sum_{i=1}^{n} \left|p_i - \frac{e^{-\beta E_i}}{\sum_{j=1}^{n}e^{-\beta E_j}}\right| \leq \epsilon,
%\end{equation}
the flattest state is equal to the thermal state. %If $\epsilon$ is not that large, 
Otherwise, if $ \delta(\rho,\tau_\beta) < \varepsilon$, the construction involves determining certain indices $ M,N $ where $1\leq M\leq N \leq d$. These indices tell us which eigenvalues of $\rho$ we have to modify. In particular, let $M$ be the smallest integer such that 
\begin{equation}
\epsilon \leq \sum_{i=1}^{M}p_{i} - p_{M+1}e^{\beta E_{M+1}}\sum_{i=1}^{M}e^{-\beta E_{i}}.
\end{equation}
Similarly, let $N$ be the largest integer such that
\begin{equation}
\epsilon \leq  p_{N-1}e^{\beta E_{N-1}}\sum_{i=N}^{d}e^{-\beta E_{i}} - \sum_{i=N}^{d} p_{i}.
\end{equation}
We prove in Lemma \ref{lem6}, Appendix \ref{app:D_TL} that $ M \leq N$. The flattest state can then be constructed by cutting the first $ M $ eigenvalues $ \lbrace p_i\rbrace_{i=1}^M $ by a total amount of $ \varepsilon $, and increasing the eigenvalues $ \lbrace p_i\rbrace_{i=N}^d $ by another $ \varepsilon $ for renormalization. Moreover, the eigenvalues are cut/increased in such a way that $ \tilde p_1 e^{\beta E_1} = \cdots = \tilde p_M e^{\beta E_M} $, and similarly $ \tilde p_N e^{\beta E_N} = \cdots = \tilde p_d e^{\beta E_d} $. This construction means that $ \flattest $ not only is diagonal in the same basis as $ \rho $ itself, it also has the same $ \beta $-ordering.
Given these indices, the eigenvalues of $ \flattest $ are given by
\begin{equation}
\tilde{p}_i = \begin{cases}
e^{-\beta E_i}\frac{\left(\sum_{i=1}^{M} p_i\right)-\epsilon}{\sum_{i=1}^{M} e^{-\beta E_i}} & \text{if } i\leq M\\[8pt]
e^{-\beta E_i}\frac{\left(\sum_{i=N}^{d} p_i\right)+\epsilon}{\sum_{i=N}^{n} e^{-\beta E_i}} & \text{if } i\geq N\\
p_i & \text{otherwise}.
\end{cases}
\end{equation}

Unfortunately, a similar construction does not exist for the steepest state. In particular, we prove that at least for some states $ \rho $ and parameters $ \varepsilon >0 $, $ \steepest $ as defined in Def.~\ref{def:steepest} does not exist. Therefore, we give a way to construct a particular $ \varepsilon $-steep state $ \steep $ instead: %again, consider some state $\rho$ with the $\epsilon$-ball defined previously. 
if $\epsilon>1-p_1$, then the eigenvalues $\{\hat{p}_i\}_i$ of the steep state are given by
\vspace{-0.25cm}\begin{equation}
\hat{p}_i = \begin{cases}
1 & \text{if } i=1\\
0 & \text{otherwise}.
\end{cases}
\end{equation}

\vspace{-0.25cm}

For any $ 0<\epsilon\leq 1-p_1$, we cannot reach this pure state. Therefore, we need to find the eigenvalues that we can cut while remaining within the $\epsilon$-ball.
We do this by first choosing the index $R\in\mathbb{N}$ such that $\sum_{i=R}^{d}p_i\geq \epsilon > \sum_{i=R+1}^{d}p_i$. Then, we define $ \steep $ to be the state diagonal in the same basis as $ \rho $, with the eigenvalues
\vspace{-0.25cm}\begin{equation}
\hat{p}_i = \begin{cases}
p_1+\epsilon & \text{if } i=1\\
p_i & \text{if } 1<i<R\\
p_i+\sum_{i=R+1}^{d}p_i-\epsilon & \text{if } i=R\\
0 & \text{otherwise}.
\end{cases}
\end{equation}

\vspace{-0.35cm}

\subsection{Proof of Theorem \ref{thm:equiv}}
Once the flattest and steep states are established in Section \ref{subsec:construction}, we can spell out the proof of our main result.
 
\noindent \begin{myproof}{Theorem \ref{thm:equiv}}
For states $ \rho,\tau $, and a particular $ \varepsilon >0 $ assume that $\hat{D}_\alpha^\epsilon(\rho||\tau) \geq \hat{D}_\alpha^\epsilon(\sigma||\tau)$ for all $ \alpha\geq 0 $. Then, for $\alpha>1$ we have that
\begin{align}
	D_\alpha(\steep||\tau) \geq
	D_\alpha(\flattest||\tau)
	& = \hat{D}_\alpha^\epsilon(\rho||\tau)\\
	 &\geq \hat{D}_\alpha^\epsilon(\sigma||\tau)
	 = D_\alpha(\sigma^\epsilon_\textup{fl}||\tau).\nonumber
\end{align}	
For $0\leq\alpha\leq1$ we have that		
\begin{align}
	D_\alpha(\steep||\tau) = \hat{D}_\alpha^\epsilon(\rho||\tau)
	 &\geq \hat{D}_\alpha^\epsilon(\sigma||\tau)\\
	 &= D_\alpha(\sigma^\epsilon_\textup{steep}||\tau)
	 \geq D_\alpha(\sigma^\epsilon_\textup{fl}||\tau).\nonumber
\end{align}	
Thus, for all $\alpha\geq 0$ we have that the exact divergences $D_\alpha(\steep||\tau)\geq D_\alpha(\sigma^\epsilon_\textup{fl}||\tau)$. Therefore, the transition $\steep\rightarrow\sigma^\epsilon_\textup{fl}$ is possible via catalytic thermal operations by the second laws put forward in~\cite{2ndlaw}.	
\end{myproof}

\section{Discussion and conclusion}
The significance of thermo-majorization curves (TMC) go beyond the framework of thermal operations: these curves also constitute state transition conditions for a set of more experimental friendly processes called Crude Operations~\cite{perry2015sufficient}.~Moreover, it has also been shown that thermal operations are more powerful in enabling state transitions, when compared to protocols achieved mainly by weak thermal contact \cite{wilming2016second}; for example they allow anomalous heat flow, which is a larger change in temperature than allowed if one only considers weak thermal contact with a heat bath. Because of its power, TMCs have been applied to study various problems in thermodynamics, such as work extraction~\cite{renes2014work,HO13}, heat engines efficiencies~\cite{woods2015maximum,surpassCarnot,chubb2017beyond} cooling rates \cite{masanes2014derivation,wilming2017third} and thermodynamic reversibility \cite{chubb2017beyond} in the quantum regime. 
In our work, we proposed newly defined smoothed generalized free energies; this has been achieved by understanding how to construct smoothed states that have optimal advantage/disadvantage under thermal operations. In the process, we developed technical bounds on the difference between two TMCs (Appendix \ref{app:A}, Theorem \ref{theorem:epsborder}), as a function of the trace distance between two states (Fig.~\ref{fig:boundsTMC}). Previously, thermo-majorization was hard to analyze because even when comparing two states close in trace distance, they might have completely different $ \beta $-orderings, arising to different shapes in their TMC. However, our bounds hold solely as a function of trace distance, irrespective of the $ \beta $-ordering. Therefore, these bounds might be of general use when analyzing TMCs.

The scope of our work has been restricted to block-diagonal states. For arbitrary state transitions, even the necessary and sufficient conditions for exact transitions are unknown \cite{2ndlaw,lostaglio2015description,lostaglio2015quantum}, and remain a large open problem in quantum thermodynamics (thermo-majorization, however, remains a necessary condition \cite{lostaglio2015quantum}). The case for a single qubit has been solved in \cite{cwiklinski2015limitations}, which may be a starting point to consider optimal smoothing that takes coherence into account. Alternatively, one may also choose to investigate a larger set of thermal processes compared to thermal operations, such as Gibbs preserving maps \cite{faist2015gibbs,faist2017fundamental} or generalized thermal processes \cite{gour2017quantum}. Such processes recover thermo-majorization as the state transition condition when dealing with block-diagonal states, but for arbitrary quantum states, they achieve a strictly larger set of state transitions when compared to thermal operations. Very recently, necessary and sufficient conditions for state transitions have been identified for both types of processes \cite{faist2017fundamental,gour2017quantum}. Comparison between optimal smoothing procedures for these various different processes could potentially help us to understand their fundamental differences.

\acknowledgements
We thank Renato Renner and Mischa Woods for interesting discussions, and Kamil Korzekwa for discussions and remarks on the manuscript.
RM, NN and SW were supported by STW Netherlands, and NWO VIDI and an ERC Starting Grant. 

%\bibliographystyle{ieeetr}
%\bibliography{articlerefs.bib}

\clearpage

\appendix
%\onecolumngrid
This appendix provides the full derivation of technical details used to obtain our main results. In Appendix \ref{app:A}, we recall the definition of thermal operations and thermo-majorization in full. We develop a useful tool in this section concerning generalized curves that resemble the form of thermo-majorization curves. Using this tool, we show that the distance between thermo-majorization curves of two block-diagonal states may be bounded by their trace distance.

Appendix \ref{app:B} presents the constructions of flattest and steepest states. In Appendix \ref{app:B1_trivH}, we start by proving that such states always exist for the trivial Hamiltonian. For general Hamiltonians, the flattest and steepest states are investigated accordingly in Appendices \ref{app:B2flattest} and \ref{app:B3steepest}. Certain technical Lemmas used in Appendix \ref{app:B2flattest} were proven later on in Appendix \ref{app:D_TL}.

Lastly, in Appendix \ref{app:C_AEP} we prove the asymptotic equipartition property for our new divergences.	

\section{Thermo-majorization and some technical tools}\label{app:A}
In this section, we introduce the tools necessary to derive the results stated in the main text of this manuscript. We start by defining the notion of thermo-majorization curves for states which are block-diagonal in the energy eigenbasis, and present a few lemmas that will be useful in deriving the main results on steepest and flattest states.
%First, we will introduce a function that will be useful when the $\beta$-order of the eigenvalues are not preserved within the $\epsilon$-ball.\\

To model these thermodynamic operations, we adapt the paradigm of thermodynamic resource theories, where state transitions are achieved via \emph{thermal operations}~\cite{BMORS13,HO13}. A thermal operation on some quantum system $ S $ is defined by two elements:
\begin{enumerate}
	\item a bath of some fixed inverse temperature $ \beta $, which is a quantum state of the form \begin{equation}\label{key}
	\tau_{B_\beta} = \frac{1}{\tr\left( e^{-\beta H_B}\right)}e^{-\beta H_B}.
	\end{equation}
	\item a unitary $ U_{SB} $ that preserves the total energy of the global system $ SB $, i.e. the commutator $ [ U_{SB} ,  H_{SB} =0] $, where $  H_{SB} =  H_S \otimes \id_B + \id_S\otimes H_B$.
\end{enumerate}

When one considers only initial states $ \rho_S $ that are block-diagonal in the energy eigenbasis, then necessary and sufficient conditions for state transition to occur via thermal operations are given by thermo-majorization, which we will soon explain. However, as mentioned in the main text, for catalytic thermal operations, the conditions on the free energies $F_\alpha( \rho_S,\tau_S)$ fully determine whether or not a state transition is achievable or not. %However, it can be quite useful to know whether or not the total system can be transformed without the use of a catalyst, i.e. via thermal operations only.  
Since thermal operations form a special subset of catalytic thermal operations, we therefore know that thermo-majorization is a more stringent condition compared to the free energies.

The thermo-majorization curve of a state $ \rho $ which is block-diagonal with respect to its corresponding Hamiltonian $  H $ determines the set of final states achievable via thermal operations: Any block diagonal state which has a thermo-majorization curve that lies below the curve of $\rho$ can be reached.
For a $d$-dimensional state $\rho = \sum_i p_i \proj{E_i}$ that is diagonal in the energy eigenbasis, we first denote $ p=\lbrace p_1,\cdots,p_d\rbrace $ to be a vector containing the eigenvalues of $ \rho $, which are the occupational probabilities corresponding to energy levels given in the vector $ E=\lbrace E_1,\cdots,E_d\rbrace $. Subsequently, let $\hat p = \{\hat p_1,\dotsc,\hat p_d\}$ be a particular permutation of $ p $, with $\hat E = \{\hat E_1,\dotsc,\hat E_d\}$ being the same permutation upon $ E $. In particular, $ \hat p,\hat E $ is permuted in the ordering that 
%\steph{First define $E$, then $\hat{E}$}
\begin{equation}\label{eq:betaorder}
\hat p_1 e^{\beta \hat E_1} \geq \dotsc \geq \hat p_de^{\beta \hat E_d}.
\end{equation} 
It is helpful to note that although there might be several permutations that satisfy Eq.~\eqref{eq:betaorder} (for example, some inequalities might be satisfied with equality), these different permutations would give rise to the same thermo-majorization curve, so picking any permutation that satisfies Eq.~\eqref{eq:betaorder} suffices.
The energy spectrum $ \hat E $ also allows us to define the partition function for the system (of a certain temperature), which is given by $ Z = \sum_{i=1}^d e^{-\beta \hat E_i} $.
Given $ \hat p $ and $ \hat E $, the thermo-majorization curve is defined as the piecewise linear curve $c(\hat p,\hat E)$ which connects the points given by $\left\{\left(\sum_{i=1}^{k}e^{-\beta \hat E_i}/Z,\sum_{i=1}^{k} \hat p_i\right)\right\}_{k=0}^d$ with straight line segments. Due to the particular $ \beta $-ordering of $ \hat p $ and $ \hat E $, such a thermo-majorization curve is concave. 

In general, such a piecewiese-linear curve $ c(p,E) $ does not need to be defined only for the $ \beta $-ordered vectors $ \hat p, \hat E $, but for any permutation of the eigenvalues $ p,E $. In order to compare such curves, we use the notation $ c(p,E) \leq c(\hat{p},\hat{E}) $ to denote that $ c(p,E) $ lies completely below $ c(\hat{p},\hat{E}) $. We will also use the notation $ c(p,E) + \varepsilon $ to denote the piecewise linear curve that connects the points given by $\left\{\left(\sum_{i=1}^{k}e^{-\beta \hat E_i}/Z,\varepsilon + \sum_{i=1}^{k} \hat p_i\right)\right\}_{k=0}^d$. A special relation exists between any $ c(p,E) $ and the thermo-majorization curve $ c(\hat p,\hat E) $, which we detail in Lemma \ref{lemma:curve}. 
%If we have that the eigenvalues are $\beta$-ordered, such that $p_1e^{\beta E_1} \geq \dotsc \geq p_de^{\beta E_d}$, then this curve lines up with the thermo-majorization diagram of $\rho$.

\begin{Lemma} \label{lemma:curve}
	Let $\rho$ be a $d$-dimensional system, with $d\in\mathbb{Z}^+$. Let $\hat p = \{\hat{p}_i\}_i$ be a vector containing the $\beta$-ordered eigenvalues of $\rho$, with $\hat E = \{\hat{E}_i\}_i$ containing the corresponding energy levels. Let $p$ be any other vector which is an arbitrary permutation of the entries in $\hat p$, with $ E $ being the same permutation of $ \hat E $. Then, $c(p,E)\leq c(\hat{p},\hat{E})$.
\end{Lemma}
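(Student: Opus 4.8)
The plan is to prove this by an exchange (bubble-sort) argument on the permutation. Both curves $c(p,E)$ and $c(\hat p,\hat E)$ are piecewise-linear paths from $(0,0)$ to $(1,\sum_i p_i)$ built from the same collection of ``steps'': the $i$-th element contributes a horizontal increment $e^{-\beta E_i}/Z$ and a vertical increment $p_i$, so the $i$-th segment has slope proportional to $p_i e^{\beta E_i}$, and permuting the eigenvalues merely reorders these steps without changing the two endpoints (the total horizontal increment is $\sum_i e^{-\beta E_i}/Z = 1$ and the total vertical increment is $\sum_i p_i$, both permutation-invariant). The key observation is that reordering two adjacent steps so that the steeper one comes first can only push the curve upward, and only on the sub-interval spanned by those two steps.

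Concretely, I would first prove the following local claim: given two consecutive segments emanating from a common point $A$, with horizontal/vertical increments $(u,v)$ and $(u',v')$ and slopes $s=v/u$, $s'=v'/u'$ satisfying $s\ge s'$, the arc $A\to A+(u,v)\to A+(u+u',v+v')$ lies pointwise on or above the swapped arc $A\to A+(u',v')\to A+(u+u',v+v')$. To see this, note that the two arcs share their endpoints $A$ and $B:=A+(u+u',v+v')$, and the chord $AB$ has slope $(v+v')/(u+u')$, a weighted mean of $s$ and $s'$, hence lying between them. Therefore the interior vertex $A+(u,v)$ of the first arc lies on or above the chord $AB$ (since $v = us \ge u\tfrac{v+v'}{u+u'}$), while the interior vertex $A+(u',v')$ of the swapped arc lies on or below it (since $v' = u's' \le u'\tfrac{v+v'}{u+u'}$). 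Since each arc is a monotone polygonal path with a single interior vertex, the first arc then lies entirely above the chord $AB$ and the swapped arc entirely below it, giving the claim; outside the abscissa interval $[A_x, B_x]$ the two arcs of course coincide.

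With the local claim in hand, I would run a bubble sort: as long as the current permutation of the steps violates the $\beta$-ordering \eqref{eq:betaorder} somewhere, there is an adjacent pair whose slopes are strictly increasing; swapping that pair strictly decreases the number of inversions and, by the local claim, does not lower the curve at any abscissa. After finitely many such swaps we reach a permutation satisfying \eqref{eq:betaorder}, whose curve is $c(\hat p,\hat E)$ (all permutations satisfying \eqref{eq:betaorder} yield the same curve, as remarked in the text), and along the way the value of the curve at every point only increased; hence $c(p,E)\le c(\hat p,\hat E)$.

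The main obstacle, and the only step requiring genuine care, is the local claim — in particular getting the pointwise comparison of the two one-vertex polygonal arcs exactly right. The clean route is the chord sandwich just sketched (first arc $\ge$ chord $\ge$ swapped arc), for which one only needs the two elementary vertex-height inequalities above, both immediate from $s\ge s'$. Everything else — invariance of the endpoints under permutation, the harmless bookkeeping with the partition function $Z$, the treatment of ties (where the swap leaves the curve unchanged), and the termination of the bubble sort — is routine.
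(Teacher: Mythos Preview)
Your proposal is correct and follows essentially the same approach as the paper: both argue by a bubble-sort sequence of adjacent swaps, and for each swap both show the local claim via the chord sandwich (the paper computes explicitly that the kink of the out-of-order arc lies below the chord $AB$ and the kink of the swapped arc lies above it, which is exactly your ``first arc $\ge$ chord $\ge$ swapped arc'' argument). The only cosmetic difference is that the paper carries out the vertex-height comparison with the explicit quantities $p_n, p_{n+1}, e^{\beta E_n}, e^{\beta E_{n+1}}$ rather than your abstract $(u,v),(u',v')$.
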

\begin{proof}
	Since we want to prove the above lemma for an arbitrary permutation of $p$ and $E$, let us consider two possible scenarios. In the first case, $ p,E $ is also $ \beta $-ordered, i.e. they satisfy
	\begin{equation}\label{eq:border}
	p_1e^{\beta E_1} \geq \dotsc \geq p_de^{\beta E_d}.
	\end{equation}
	Note that this happens either when the permutation is trivial, i.e. $ \hat p = p $ (and $ \hat E = E $); or it is also possible that certain inequalities in Eq.~\eqref{eq:betaorder} are achieved with equality, so that the $ \beta $-ordering is not unique.%\steph{Remark on this option earlier}
	%In this case, we have that $\hat{p}_i=p_i$, and $\hat{E}_i = E_i$ for all $i$. 
	The curves $c(p,E)$ and $c(\hat{p},\hat{E})$ will be the same in these cases, such that $c(p,E)\leq c(\hat{p},\hat{E})$ holds trivially.
	
	The second case is that $ p,E $ now do not satisfy Eq.~\eqref{eq:border}, i.e., they are not yet $\beta$-ordered. This implies, that we can find at least one index $n$ such that $p_ne^{\beta E_n}<p_{n+1}e^{\beta E_{n+1}}$. Intuitively, such a relation means that when the curve $ c(p,E) $ is drawn, then 
	$ c(p,E) $ will be convex (instead of being concave) in the interval $ (\sum_{i=1}^{n-1} e^{-\beta \hat E_i}/Z, \sum_{i=1}^{n+1} e^{-\beta \hat E_i}/Z) $.
	We will now analyze the curve $ c(p,E) $ more closely around such a point. 
	
	To do so, we define the vectors $ \tilde p,\tilde E$ such that
	\begin{equation}\label{eq:singleswap}
	\tilde{p}_i = \begin{cases}
	p_{n+1} & \text{ if } i=n,\\
	p_n & \text{ if } i=n+1,\\
	p_i & \text{ otherwise,}
	\end{cases}
    \end{equation}
	and
    \begin{equation}
	\tilde{E}_i = \begin{cases}
	E_{n+1} & \text{ if } i=n,\\
	E_n & \text{ if } i=n+1,\\
	E_i & \text{ otherwise.}
	\end{cases}
	\end{equation}
%	
%	and 
%	
%	\begin{equation}
%	\tilde{E}_i = \begin{cases}
%	E_{n+1} & \text{ if } i=n,\\
%	E_n & \text{ if } i=n+1,\\
%	E_i & \text{ otherwise.}
%	\end{cases}
%	\end{equation}
	
	If we then compare $c(p,E)$ with $c(\tilde{p},\tilde{E})$, we see that for the points
	\begin{eqnarray}\label{key}
	A = (x_A,y_A) = \left(\sum_{i=1}^{n-1}\frac{e^{-\beta E_i}}{Z},\sum_{i=1}^{n-1} p_i\right), \\ B = (x_B,y_B) = \left(\sum_{i=1}^{n+1}\frac{e^{-\beta E_i}}{Z},\sum_{i=1}^{n+1} p_i\right),
	\end{eqnarray}
	the curves completely overlap before the point $A$ and after the point $B$. However, the curves will differ within the $ x $-axis interval $ (x_A,x_B) $. We show that in this interval, the curve of $ c(\tilde p,\tilde E) $ will lay above that of $ c (p,E) $.
	To show this, note that both curves have exactly one kink in this region. 
	We will compare these kinks with the straight line through the points $A$ and $B$.
	To simplify the analysis, let us redefine the origin to be located at point $A$. The straight line through these two points is then given by
	\begin{align}
	y 
	%& = \frac{(p_n+p_{n+1})}{e^{-\beta E_n}/Z + e^{-\beta E_{n+1}}/Z}x
	 = \frac{(p_n+p_{n+1})Z}{e^{-\beta E_n} + e^{-\beta E_{n+1}}}x.
	\end{align}
	The kink of $c(p,E)$ is located at $(e^{-\beta E_n}/Z,p_n)$. The vertical height difference between the straight line and the kink, at $ x = e^{-\beta E_n}/Z $ is given by
	\begin{align}
	y-p_n & = \frac{(p_n+p_{n+1})Z}{e^{-\beta E_n} + e^{-\beta E_{n+1}}} \cdot \frac{e^{-\beta E_n}}{Z} - p_n\nonumber\\
	%& = \frac{(p_n+p_{n+1})}{e^{-\beta E_n} + e^{-\beta E_{n+1}}} e^{-\beta E_n} - p_n\frac{e^{-\beta E_n} + e^{-\beta E_{n+1}}}{e^{-\beta E_n} + e^{-\beta E_{n+1}}}\\
	& = \frac{e^{-\beta E_n}(p_n+p_{n+1})-p_n(e^{-\beta E_n} + e^{-\beta E_{n+1}})}{e^{-\beta E_n} + e^{-\beta E_{n+1}}}\nonumber\\
	& = \frac{e^{-\beta E_n}p_{n+1}-e^{-\beta E_{n+1}}p_n}{e^{-\beta E_n} + e^{-\beta E_{n+1}}}\nonumber\\
	& = \frac{e^{\beta(E_n+E_{n+1})}}{e^{\beta(E_n+E_{n+1})}} \cdot \frac{e^{-\beta E_n}p_{n+1}-e^{-\beta E_{n+1}}p_n}{e^{-\beta E_n} + e^{-\beta E_{n+1}}}\nonumber\\
	& = \frac{e^{\beta E_{n+1}}p_{n+1}-e^{\beta E_{n}}p_n}{e^{\beta(E_n+E_{n+1})}(e^{-\beta E_n} + e^{-\beta E_{n+1}})}
	 > 0.
	\end{align}
	To summarize, we know that between the $ x $-axis interval $ (x_A,x_B) $, the following holds: 
	\begin{enumerate}
		\item The line $ y $ and the curve $ c(p,E) $ coincide at the points $ A $ and $ B $.
		\item The curve $ c(p,E) $ is piecewise-linear, and has a single kink in this interval which lies below the line $ y $.
	\end{enumerate}
	These two points imply that within the whole interval, $c(p,E)$ will lie below the straight line $ y $. 
	
	It is easy to see that the curve $ c(\tilde{p},\tilde{E}) $ will lie above the straight line, since it differs from $ c(p,E) $ only by a reordering of the two line segments, meaning that the two curves form a parallelogram. To prove this explicitly, note that the curve $c(\tilde{p},\tilde{E})$ has its kink located at $(e^{-\beta E_{n+1}}/Z,p_{n+1})$, and when we compare it with $ y $ at the position $ x = e^{-\beta E_{n+1}}/Z $, we find the opposite, i.e.
	\begin{align}
	y-p_{n+1} & = \frac{(p_n+p_{n+1})Z}{e^{-\beta E_n} + e^{-\beta E_{n+1}}} \cdot \frac{e^{-\beta E_{n+1}}}{Z} - p_{n+1}\nonumber\\
	%& = \frac{(p_n+p_{n+1})}{e^{-\beta E_n} + e^{-\beta E_{n+1}}} e^{-\beta E_{n+1}} - p_{n+1}\frac{e^{-\beta E_n} + e^{-\beta E_{n+1}}}{e^{-\beta E_n} + e^{-\beta E_{n+1}}}\\
	& = \frac{e^{-\beta E_{n+1}}(p_n+p_{n+1})-p_{n+1}(e^{-\beta E_n} + e^{-\beta E_{n+1}})}{e^{-\beta E_n} + e^{-\beta E_{n+1}}}\nonumber\\
	& = \frac{e^{-\beta E_{n+1}}p_n-e^{-\beta E_n}p_{n+1}}{e^{-\beta E_n} + e^{-\beta E_{n+1}}}\nonumber\\
	& = \frac{e^{\beta(E_n+E_{n+1})}}{e^{\beta(E_n+E_{n+1})}}\cdot \frac{e^{-\beta E_{n+1}}p_n-e^{-\beta E_n}p_{n+1}}{e^{-\beta E_n} + e^{-\beta E_{n+1}}}\nonumber\\
	& = \frac{e^{\beta E_{n}}p_n-e^{\beta E_{n+1}}p_{n+1}}{e^{\beta(E_n+E_{n+1})}(e^{-\beta E_n} + e^{-\beta E_{n+1}})}\nonumber\\
	& < 0,
	\end{align}
	which means that by similar reasoning as before, in the region of interest,
	\begin{equation}
	c(\tilde{p},\tilde{E}) \geq y \geq c(p,E).
	\end{equation} 
	Thus, if we perform a swap between neighbouring elements of $p$, such that after swapping the elements $n$ and $n+1$ we have that $p_ne^{\beta E_n}\geq p_{n+1}e^{\beta E_{n+1}}$, then the new curve always lays above that of the old one.\\
	
	Using this, we can define a sequence of distributions $q^1,q^2,\cdots,q^d$ with corresponding energy levels $E^1,E^2,\cdots,E^d$, for any $ m\in\mathbb{Z}^+ $. %, with $m\in\{\mathbb{N},\infty\}$. 
	We define the sequence to start from $q^1=p$ and $E^1 = E$. Furthermore, for any $ n\geq 1 $, we obtain $q^{n+1}$ from $ q^n $ by a single swap. This swap is performed by the following procedure:
	\begin{enumerate}
		\item Identify the smallest index $ k $ such that $ q^n_k e^{\beta E_k^n} < q^n_{k+1} e^{\beta E_{k+1}^{n}}$.
		\item Obtain $ q^{n+1}, E^{n+1} $ from $ q^n,E^n $ by swapping the $ k $-th element with the $ k+1 $-th element. Such a swap is identical to the one we have seen in Eq.~\eqref{eq:singleswap}.
	\end{enumerate}
%Lastly, we pick $m$ such that $q^m = \hat{p}$ and $E^m = \hat{E}$.\\
One can see that such a process is analogous to a bubble sort algorithm, where for finite dimension $ d $, there always exists an $ m\in\mathbb{Z}^+ $ large enough such that $ q^d = \hat{p} $ and $ E^d = \hat{E} $, i.e. the end result satisfies $ \beta $-ordering.	
%	It may not be straightforward to see that $\hat{p}$ is always reached. Notice however, that $\hat{p}$ is the only vector for which no swaps can be made that increase the height of the curve. Thus, for as long as the $\beta$-order has not been reached, we can continue to perform swaps.\\
	Therefore, for this sequence, we have that 
	\begin{eqnarray}
	c(p,E) &=& c(q^1,E^1)\nonumber\\
	&\leq& c(q^2,E^2)
	\leq\dotsc\leq c(q^d,E^d)\nonumber\\ 
	&=& c(\hat{p},\hat{E}).\nonumber
	\end{eqnarray} 
	This concludes the proof. %The $\beta$-order is the order for which the height of the curve is maximized.
\end{proof}

For any two states $ \rho,\sigma$, the trace distance $ \delta(\rho,\sigma) $ tells us how far apart the states are. For states which are diagonal in the same basis, if we denote $ p={\rm eig}(\rho) , q={\rm eig}(\sigma)$ as the corresponding eigenvalues, then
\begin{equation}\label{key}
\delta(\rho,\sigma) = \frac{1}{2} \sum_{i} |p_i-q_i|.
\end{equation}
The next theorem tells us how the thermo-majorization diagrams of block-diagonal states may behave, given an upper bound on their trace distance $ \varepsilon $. 
These bounds will be useful when we prove the optimality of steepest and flattest states in terms of thermo-majorization within the $ \epsilon $-ball of a state.
% curves of those states may have to reach the bounds. Otherwise, the thermo-majorization curves of other states might surpass the curves of the steepest and flattest states.
\begin{figure}[h!]
	\hspace{-0.9cm}\includegraphics[width=1.1\linewidth]{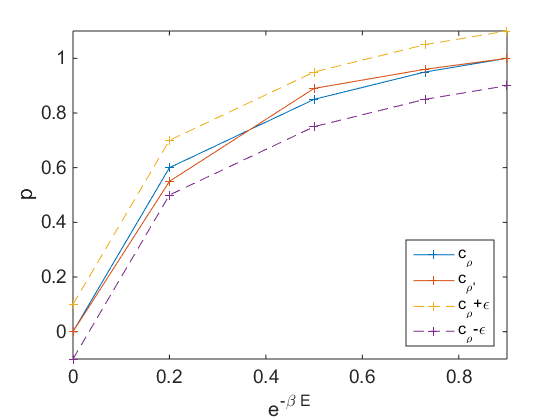}
	\caption{The thermo-majorization diagram of $\rho$ (blue) and the two bounds (yellow and purple). For any $ \rho'\in\epsball_D (\rho) $, its thermo-majorization curve must lie between the two bounds (as demonstrated with the red curve). These bounds are later used in Eq.~\eqref{eq:t3}. }\label{fig:boundsTMC}
	\label{epsborder}
\end{figure}
\begin{theorem} \label{theorem:epsborder}
	Consider any state $\rho$ block-diagonal with respect to some Hamiltonian $ H $, and any other $\rho' \in\bdepsball(\rho)$. Denote the thermo-majorization curves of $ \rho $ and $ \rho' $ as $ c_\rho $ and $ c_{\rho'} $ respectively.
	Then, as depicted in Fig.~\ref{epsborder},
	\begin{equation}\label{eq:t3}
	 c_{\rho}-\varepsilon \leq c_{\rho'} \leq c_{\rho} +\varepsilon.
	\end{equation}
\end{theorem}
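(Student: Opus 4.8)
The plan is to compare thermo-majorization curves that are built from the \emph{same} ordered list of energies but possibly different probability vectors: two such curves are piecewise linear over the same partition of the $x$-axis $[0,1]$, so their vertical separation is governed entirely by their values at the (common) kinks. The one wrinkle is that $\rho$ and $\rho'$ generically have different $\beta$-orderings, so their genuine curves $c_\rho$ and $c_{\rho'}$ are not directly of this form; Lemma~\ref{lemma:curve} is exactly the tool that bridges a curve drawn from a ``foreign'' ordering and the genuine concave thermo-majorization curve.

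For the lower bound $c_{\rho'}\ge c_\rho-\varepsilon$ I would fix a permutation $\pi$ that $\beta$-orders $\rho$, so that $c_\rho=c(\hat p,\hat E)$ with $\hat p=\pi(p)$ and $\hat E=\pi(E)$, and then apply the \emph{same} $\pi$ to the eigenvalues of $\rho'$, forming the pair $(\hat p',\hat E)$ with $\hat p'=\pi(p')$. This pair need not be $\beta$-ordered, but by Lemma~\ref{lemma:curve} its curve satisfies $c(\hat p',\hat E)\le c_{\rho'}$. Since $c(\hat p,\hat E)$ and $c(\hat p',\hat E)$ share the kink positions $X_k=\sum_{i=1}^k e^{-\beta\hat E_i}/Z$, and at $X_k$ take heights $\sum_{i=1}^k\hat p_i$ and $\sum_{i=1}^k\hat p'_i$, the task reduces to bounding $\bigl|\sum_{i=1}^k(\hat p_i-\hat p'_i)\bigr|$. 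Splitting the difference vector into positive and negative parts and using that both $\hat p$ and $\hat p'$ sum to one, each part has total mass $\delta(\rho,\rho')=\tfrac12\|\hat p-\hat p'\|_1\le\varepsilon$ (here permutation-invariance of the $\ell_1$ norm and the factor $\tfrac12$ in the trace distance are what make the bound $\varepsilon$ rather than $2\varepsilon$), so every partial sum lies in $[-\varepsilon,\varepsilon]$. Two piecewise-linear curves with common breakpoints that agree to within $\varepsilon$ at each breakpoint agree to within $\varepsilon$ everywhere (their difference is linear on each subinterval), hence $c(\hat p',\hat E)\ge c_\rho-\varepsilon$, and chaining with the Lemma gives $c_{\rho'}\ge c_\rho-\varepsilon$.

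For the upper bound $c_{\rho'}\le c_\rho+\varepsilon$ I would run the mirror-image argument: take a permutation $\pi'$ that $\beta$-orders $\rho'$, so $c_{\rho'}=c(\widehat{p'},\widehat{E'})$, and apply $\pi'$ to the eigenvalues of $\rho$ to form $(\tilde p,\widehat{E'})$ with $\tilde p=\pi'(p)$. Now Lemma~\ref{lemma:curve} gives $c(\tilde p,\widehat{E'})\le c_\rho$, while the same breakpoint estimate applied to $\widehat{p'}$ versus $\tilde p$ over the partition induced by $\widehat{E'}$ gives $c(\widehat{p'},\widehat{E'})\le c(\tilde p,\widehat{E'})+\varepsilon$; chaining yields $c_{\rho'}\le c_\rho+\varepsilon$, completing the proof.

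The routine parts are the partial-sum inequality and the interpolation step. The one genuinely load-bearing idea is to always transport one state's probability vector through the \emph{other} state's $\beta$-ordering permutation and then invoke Lemma~\ref{lemma:curve} in the appropriate direction; without this, the two states' curves cannot be compared breakpoint by breakpoint. A small technical point to dispatch in the full write-up is degenerate energy levels, where inside a degenerate block one simply works with the sorted eigenvalue lists of $\rho$ and $\rho'$ paired to the common energy, which affects none of the estimates above.
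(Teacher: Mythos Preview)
Your proposal is correct and follows essentially the same route as the paper: for each inequality you transport one state's eigenvalue vector through the \emph{other} state's $\beta$-ordering permutation, invoke Lemma~\ref{lemma:curve} to compare the resulting (possibly non-concave) curve with the genuine thermo-majorization curve, and then bound the aligned-kink curves via the partial-sum estimate $\bigl|\sum_{i\le k}(\hat p_i-\hat p'_i)\bigr|\le\varepsilon$. The paper's write-up separates out the special case where $\rho$ and $\rho'$ share a $\beta$-ordering before handling the general case, whereas you absorb that case into the general argument, but the substance is identical.
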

\begin{proof}
	Let $p=\{p_i\}_i$ be the $\beta$-ordered eigenvalues of $\rho$ with corresponding energy levels $E = \{E_i\}_i$, such that $p_1e^{\beta E_1} \geq \dotsc \geq p_de^{\beta E_d}$. Therefore, the thermo-majorization curve of $ \rho $ is given by $ c_\rho = c(p,E) $. 
	On the other hand, let $p'=\{p'_i\}_i$ be the eigenvalues of $\rho'$; however, we do not write $ p' $ such that it is $ \beta $-ordered, instead we write it according to the same order as $p$. 
	Notice, therefore, that since $ p' $ is not necessarily $ \beta $-ordered, the thermo-majorization curve $ c_{\rho'} \neq c(p',E) $ in general.
	
	Because $\rho'\in\bdepsball(\rho)$, we have that the trace distance
	\begin{equation}
	\frac{1}{2}\sum_{i=1}^{d} \left|p_i-p'_i\right|\leq\epsilon. \label{eq:border1}
	\end{equation}
	Furthermore, because both states are normalized, we have that
	\begin{equation}
	\sum_{i=1}^{d} \left(p_i-p'_i\right)=0. 
	\end{equation}
	This means that 
	\begin{equation}
	\sum_{i=1}^{d} \left(p_i-p'_i\right) = 
	\sum_{i: p_i>p'_i} \left(p_i-p'_i\right) + 
	\sum_{i: p_i<p'_i} \left(p_i-p'_i\right) = 0,
	\end{equation}
	and thus 
	\begin{equation}\label{eq:equaltrd}
	\sum_{i: p_i>p'_i} \left(p_i-p'_i\right) = 
	-\sum_{i: p_i<p'_i} \left(p_i-p'_i\right),
	\end{equation}
	Applying Eq.~\eqref{eq:equaltrd} to Eq.~\eqref{eq:border1} yields
	\begin{align}
	\frac{1}{2}\sum_{i=1}^{d} \left|p_i-p'_i\right| & = 
	\frac{1}{2}\sum_{i: p_i>p'_i} \left(p_i-p'_i\right) - 
	\frac{1}{2}\sum_{i: p_i<p'_i} \left(p_i-p'_i\right)\nonumber\\
	 &= \sum_{i: p_i>p'_i} \left(p_i-p'_i\right)\nonumber\\
	 &= -\sum_{i: p_i<p'_i} \left(p_i-p'_i\right) \leq\epsilon.
	\end{align}
%	We have earlier explained that $ p' $ is written not according to its own $ \beta $-ordering, but according to the ordering of $ p $. 
%	However, because $p'$ and $p$ correspond to eigenvalues of two different states, therefore in general the $\beta$-orders of $p'$ and $p$ are different. 	
	We will consider two separate cases:\\[5pt]
	\textbf{(1) Both $ p$ and $p' $ have the same $\beta$-ordering.} In this case, we know that $ c_{\rho'} = c(p',E) $ holds, and the kinks of the two thermo-majorization curves $ c_\rho,c_{\rho'} $ line up. In this simple case, the maximum height difference between $ c_\rho $ and $ c_{\rho'} $ occurs at a kink, and therefore it is sufficient to compare the height of the curves at these discrete points. For any $  k\in\lbrace 1,d\rbrace $, at the $ k $-th kink which happens at the $ x $-coordinate $ x_k = Z^{-1}\sum_{i=1}^k e^{-\beta E_i}$, the height difference between the two curves is given by
	\begin{align}
	&\left|\sum_{i=1}^{k} p_i - \sum_{i=1}^{k} p'_i\right| \nonumber\\%& = \left|\sum_{i=1}^{k} \left(p_i - p'_i\right)\right|\\
	& = \left|\sum_{i\leq k: p_i>p'_i} \left(p_i - p'_i\right) + \sum_{i\leq k: p_i<p'_i} \left(p_i - p'_i\right)\right|\nonumber\\
	& = \left|\left|\sum_{i\leq k: p_i>p'_i} \left(p_i - p'_i\right)\right| - \left|\sum_{i\leq k: p_i<p'_i} \left(p_i - p'_i\right)\right|\right|\nonumber\\
	& \leq \max\left(\left|\sum_{i\leq k: p_i>p'_i} \left(p_i - p'_i\right)\right|,\left|\sum_{i\leq k: p_i<p'_i} \left(p_i - p'_i\right)\right|\right)\nonumber\\
	& \leq \epsilon. \label{eq:border2}
	\end{align}
	
	Thus, if $ \rho,\rho' $ have the same $\beta$-ordering of eigenvalues, then the height difference between $ c_\rho $ and $ c_{\rho'} $ cannot be larger than $\epsilon$.\\[5pt]	
	\textbf{(2) The states $ \rho $ and $ \rho' $ do not have the same $ \beta $-ordering. } 
	We can use the curve $ c(p',E) $ to show that the height difference between $ c_\rho $ and $ c_{\rho'} $ still cannot exceed $\epsilon$. By Lemma \ref{lemma:curve}, we know that $ c(p',E) \leq c_{\rho'} $. 
%	To do this, we introduce the eigenvalues $q = \{q_i\}$, with corresponding energy levels $\hat{E} = \{\hat{E}_i\}$. First, we define $q$ and $\hat{E}$ to be reordered versions of $p'$ and $E$, such that $q$ is $\beta$-ordered. Mathematically, this can be expressed as
%	
%	\begin{align}
%	q_i & = p'_{n(i)}\\
%	\hat{E}_i & = E_{n(i)},
%	\end{align}
%	
%	for all $i$, with $n: \{1,\dotsc,d\}\rightarrow\{1,\dotsc,d\}$ a bijection such that $p'_n(1)e^{\beta E_n(1)} \geq \dotsc \geq p'_n(d)e^{\beta E_n(d)}$.\\
%	
	Note that if we consider the curves $ c_\rho = c(p,E) $ and $ c(p',E) $, since $ p $ and $ p' $ have the same ordering, we know that the kinks of both curves always coincide. From case (1), we know that 
%	First of all, we have that the thermo-majorization curve of $\rho$ is given by $c(p,E)$. As we showed earlier, without changing the order of the eigenvalues, 
	\begin{equation}
	|c(p,E)-c(p',E)| \leq \epsilon,
	\end{equation}
%	Since the order of $p'$ was assumed to be the same as the order of $p$, this yields
and therefore	$ c(p',E) \geq c(p,E) - \epsilon $. Therefore, the thermo-majorization curve $ c_{\rho'} $ can also be lower-bounded by
	%Since $q$ is the $\beta$-ordered version of $p'$, it follows from lemma \ref{lemma:curve} that $c(q,\hat{E})\geq c(p',E)$. This gives us that
	\begin{equation}\label{eq:crhoplowerbound}
	c_{\rho'} \geq c(p',E) \geq c(p,E) - \epsilon.
	\end{equation}
	
%	However, the thermo-majorization curve of $\rho'$ is given by $c(q,\hat{E})$, and thus $c(p,E) - \epsilon$ is a lower bound for the thermo-majorization curve of any state within the $\epsilon$-ball.\\
	
	Next, we need to prove that $c_{\rho'}\leq c(p,E) + \epsilon$ as well.
	% is an upper bound for the thermo-majorization curve of $\rho'$, 
	This can be done with a similar strategy as before; except that we need to interchange the roles of $ p $ and $ p'$. In particular, let us first take the vectors $ p',E $ which were not $ \beta $-ordered, and denote $ q',E' $ to be the permuted versions of $ p',E $ such that $ q',E' $ now satisfies $ \beta $-ordering. More precisely, we use the permutation $ \Pi$ such that for $ q',E' $ defined by
	\begin{align}
	q_i' & = p_{\Pi(i)}'\\
	E_i' & = E_{\Pi(i)},
	\end{align}
	$ q',E' $ will now satisfy
	  \begin{equation}
	  q_1' e^{-\beta E_1'} \geq q_2' e^{-\beta  E_2'} \geq \cdots \geq q_d' e^{-\beta E_d'}.
	  \end{equation}
	 This implies that 
	 \begin{equation}\label{eq:thermomajrhop}
	  c_{\rho'} = c(q', E').
	 \end{equation}	 %let us define the vectors $q^1 = \{q^1_i\}$ and $q^2 = \{q^2_i\}$, with corresponding energy levels $\hat{E} = \{\hat{E}_i\}$.\\
	
	Now, similarly we may consider the permuted vector $ q = \Pi (p) $. Note that $ q,E' $ is a particular permutation of $ p,E $, so according to Lemma \ref{lemma:curve}, 
	\begin{equation}\label{eq:qEpeqcrho}
	c(q,E') \leq c(p,E)=c_{\rho}.
	\end{equation}
%	We define $q^1$ and $\hat{E}$ to be reordered versions of $p$ and $E$ respectively, such that they are in the $\beta$-order of $\rho'$. Mathematically, this can be expressed as
%	
%	\begin{align}
%	q^1_i & = p_{n(i)}\\
%	\hat{E}_i & = E_{n(i)},
%	\end{align}
%	
%	for all $i$, with $n: \{1,\dotsc,d\}\rightarrow\{1,\dotsc,d\}$ a bijection such that $p'_n(1)e^{\beta E_n(1)} \geq \dotsc \geq p'_n(d)e^{\beta E_n(d)}$.\\
%	We define $q^2$ by
%	
%	\begin{equation}
%	q^2_i = p'_{n(i)},
%	\end{equation}
	%with $n$ the same bijection. 
%	Since $q,E'$ are reordered versions of $p,E$ under the same permutation, and since $p$ is $\beta$-ordered, we have that
%	\begin{equation}
%	c(q,E')\leq c(p,E).
%	\end{equation}
%	
	Next, we will compare $c(q,E')$ with $c(q',E')$. First of all, note that since $ q= \Pi(p) $ and $ q' = \Pi(p') $, and since the trace distance is invariant under such permutations, we know that 
	\begin{equation}\label{key}
	\frac{1}{2} \sum_{i=1}^d |q_i - q_i'| \leq \epsilon
	\end{equation}
	holds as well. Also, since $ q $ and $ q' $ are both normalized vectors as well, the Eqns.~\eqref{eq:border1}-\eqref{eq:border2} hold for $q'$ and $q$.
	Since they are both ordered in the same way, the kinks of the two curves line up again at the same $ x $-coordinates, and therefore comparing the height of the curves at these coordinates will be sufficient. 
	%Furthermore, both $q$ and $p'$ are normalized vectors. 
%	We also have that
%	
%	\begin{equation}
%	\frac{1}{2}\sum_{i=1}^{d} \left|q^1_i-q^2_i\right| = \frac{1}{2}\sum_{i=1}^{d} \left|p_i-p'_i\right|\leq\epsilon.
%	\end{equation}
%	
	Therefore, according to the analysis of case (1), the height difference $|c(q,E') - c(q',E')| \leq \epsilon$. Finally, combining this with Eq.~\eqref{eq:thermomajrhop} and Eq.~\eqref{eq:qEpeqcrho} allows us to conclude that
	\begin{align}\label{eq:crhopupperbound}
	c_{\rho'} = c(q',E') & \leq c(q,E') + \epsilon \leq c_{\rho} + \epsilon.
	\end{align}
Eq.~\eqref{eq:crhoplowerbound} and \eqref{eq:crhopupperbound} jointly prove the theorem for case (2).	
%	Because the thermo-majorization curve of $\rho'$ is given by $c(q^2,\hat{E})$, we find that $c(p,E) + \epsilon$ is an upper bound.\\
		
\end{proof}
Theorem \ref{theorem:epsborder} allows us to conclude the following: for any two block-diagonal states $ \rho,\rho' $ which are $ \epsilon $-close, regardless of whether $\beta$-ordering of the eigenvalues are same or different, the height difference between the thermo-majorization curves of $\rho$ and $\rho'$ cannot exceed $\epsilon$. Interestingly, the authors were made aware later on that a simpler proof can also be obtained by applying more general results in statistical literature, such as in Ref.~\cite{renes2016relative}.
This theorem gives us some bounds for the thermo-majorization curves of the states within the $\epsilon$-ball. Notice however, that the bounds cannot always be reached: in some regions, the lower bound can be negative, while in other regions the upper bound can also exceed $1$, as shown in Fig.~\ref{epsborder}. However, since eigenvalues form a normalized probability distribution, such bounds clearly cannot be reached.
%Nevertheless, when looking at the steepest and flattest states, we will try to reach the bound for as long as possible.\\

\section{Flattest and steepest states}\label{app:B}
\subsection{Trivial Hamiltonians}\label{app:B1_trivH}

In this section, we will explain that for any smoothing parameter $\varepsilon >0$, for systems with trivial Hamiltonians, the steepest and flattest states always exist. We do so by providing the explicit construction of steepest and flattest states. A detailed proof of these constructions being steepest/flattest can also be found in \cite{steepflat}.

Consider an $m$-dimensional system $\rho$ with trivial Hamiltonian, and denote the ordered eigenvalues of $\rho$ as $\{p_i\}_i$. The eigenvalues $\{\hat p_i\}_i$ of the steepest state of $\rho$ are then given by 
\begin{equation}
\hat p_i = \begin{cases}
p_i+\epsilon & \text{if } i=1\\
p_i & \text{if } 1<i<M\\
p_i-\epsilon+\sum_{j=M+1}^{m}p_i & \text{if } i=M\\
0 & \text{otherwise,}\\
\end{cases}
\end{equation}
with $M\in\mathbb{N}$ such that
\begin{equation}
\sum_{i=M+1}^{m}p_i < \epsilon \leq \sum_{i=M}^{m}p_i.
\end{equation}
Here, we simply cut the tail of $\rho$, and added the cut probability mass to the first eigenvalue. This state majorizes all other states within the $\epsilon$-ball.\\

Consider the same state $\rho$, when $ \varepsilon < \delta(\rho, \id/m) $, where $ \id/m $ is the maximally mixed state. The eigenvalues $\{\tilde p_i\}_i$ of the flattest state of $\rho$ are then given by 
\begin{equation}
\tilde p_i = \begin{cases}
\frac{1}{N_1}\left(-\epsilon+\sum_{i=1}^{N_1}p_i\right) & \text{if } i\leq N_1\\
\frac{1}{m+1-N_2}\left(\epsilon+\sum_{i=N_2}^{m}p_i\right) & \text{if } i\geq N_2\\
p_i & \text{otherwise,}\\
\end{cases}
\end{equation} 
with $N_1\in\mathbb{N}$ such that
\begin{equation}
\sum_{i=1}^{N_1-1}\left(p_i-p_{N_1}\right) < \epsilon \leq \sum_{i=1}^{N_1}\left(p_i-p_{N_1+1}\right)
\end{equation}
and $N_2\in\mathbb{N}$ such that
\begin{equation}
\sum_{i=N_2+1}^{m}\left(p_{N_2}-p_i\right) < \epsilon \leq \sum_{i=N_2}^{m}\left(p_{N_2-1}-p_i\right).
\end{equation}
Here, we removed $\epsilon$ from the head of $\rho$, and distributed this probability mass over the tail of $\rho$. One can show see that when $ \varepsilon $ is larger, $ N_1$ becomes larger and $ N_2 $ becomes smaller; when $ \varepsilon = \delta(\rho,\id/m)$, the flattest state according to this construction will give us the maximally mixed state. For all $ \delta(\rho,\id/m)< \varepsilon \leq 1 $, the eigenvalues of the flattest state are simply given by 
\begin{equation}\label{key}
\tilde p_i = \frac{1}{d}, \qquad \forall i.
\end{equation} 
This state is majorized by all other states within the $\epsilon$-ball. \\

\subsection{General Hamiltonians: Construction of the flattest state}\label{app:B2flattest}
In this section, we turn to the case of general (finite-dimensional) Hamiltonians. We show that for any quantum state $\rho$, and for any smoothing parameter $ \epsilon $, the flattest state as defined in Def.~\ref{def:flattest} always exists.
\begin{theorem}\label{thm:flattest_exists}
Consider any $ d $-dimensional state $ \rho $ which is block-diagonal with respect to $ H $.	For any $\epsilon >0$, there exists a state $\flattest $ such that $ \flattest\in\bdepsball(\rho) $ and for any other state $ \rho'\in\epsball(\rho) $, $ \rho'\rightarrow\flattest$ is possible via thermal operations.
\end{theorem}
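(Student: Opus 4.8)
The plan is to prove Theorem \ref{thm:flattest_exists} by exhibiting the explicit state $\flattest$ given by the construction in Section \ref{subsec:construction} and verifying it has the required properties. First I would handle the easy case: if $\varepsilon$ is large enough that $\delta(\rho,\tau_\beta) \leq \varepsilon$, then $\tau_\beta \in \bdepsball(\rho)$, and since the thermal state is reachable from \emph{every} state by thermal operations (its thermo-majorization curve is the diagonal line, which lies below all concave thermo-majorization curves through $(0,0)$ and $(1,1)$), we may take $\flattest = \tau_\beta$ and we are done. So assume henceforth $\delta(\rho,\tau_\beta) < \varepsilon$.

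For the main case, the plan is: (i) check that the indices $M, N$ in the construction are well-defined, i.e. that such integers exist and $M \leq N$ — the latter being exactly Lemma \ref{lem6} in Appendix \ref{app:D_TL}, which I may invoke; (ii) verify the constructed eigenvalues $\{\tilde p_i\}$ form a valid probability distribution — they are non-negative (this is where the defining inequalities for $M$ and $N$ are used, ensuring we never cut more than is available) and sum to $1$ (the first $M$ are cut by a total of $\varepsilon$, the last $d-N+1$ increased by a total of $\varepsilon$); (iii) verify $\flattest \in \bdepsball(\rho)$, i.e. $\tfrac12\sum_i |p_i - \tilde p_i| \leq \varepsilon$ — since the modifications remove exactly $\varepsilon$ of mass from the head and add exactly $\varepsilon$ to the tail with no sign changes within each block, the trace distance is exactly $\varepsilon$; (iv) confirm that $\flattest$ is block-diagonal in the energy eigenbasis with the \emph{same} $\beta$-ordering as $\rho$, so that its thermo-majorization curve $c_{\flattest}$ is just the piecewise-linear curve through the partial sums of $\{\tilde p_i\}$ in the given order (this uses that flattening within a block keeps $\tilde p_i e^{\beta E_i}$ constant across that block and does not disturb the ordering relative to the untouched middle entries).

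The crux is step (v): showing that for \emph{every} $\rho' \in \epsball(\rho)$ — including non-block-diagonal states — the transition $\rho' \to \flattest$ is possible via thermal operations, equivalently $c_{\rho'} \geq c_{\flattest}$ (where for a non-block-diagonal $\rho'$ we compare against the curve of its dephased version, since dephasing in the energy eigenbasis is a thermal operation and only lowers the thermo-majorization curve). By Theorem \ref{theorem:epsborder} we know $c_{\rho'} \geq c_\rho - \varepsilon$ for any $\rho'\in\bdepsball(\rho)$, so it suffices to prove $c_\rho - \varepsilon \leq c_{\flattest}$, i.e. that the flattest-state curve lies below the lower envelope $c_\rho - \varepsilon$ wherever that envelope is relevant. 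This is the geometric heart of the argument: one shows that $c_{\flattest}$ is the \emph{concave} curve obtained from $c_\rho$ by chopping $\varepsilon$ of height off the top-left region and adding $\varepsilon$ of height in the bottom-right region in the flattest possible way (two straight segments replacing the first $M$ and last $d-N+1$ segments), and that any concave curve in $\bdepsball(\rho)$'s band must dominate it. I expect this step — carefully arguing via the $\beta$-ordering and the kink positions that $c_{\flattest}$ genuinely sits at or below $c_{\rho'}$ for all admissible $\rho'$, handling the interplay between the possibly-misaligned kinks of $c_{\rho'}$ and $c_{\flattest}$ — to be the main obstacle, and it is where the bounds developed in Theorem \ref{theorem:epsborder} together with the concavity of thermo-majorization curves do the real work.
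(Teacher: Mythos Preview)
Your plan tracks the paper's proof closely: the same case split on $\delta(\rho,\tau_\beta)$ versus $\varepsilon$, the same reduction from $\epsball(\rho)$ to $\bdepsball(\rho)$ via dephasing, the same explicit construction with indices $M,N$, and the same intention to combine Theorem~\ref{theorem:epsborder} with concavity. One minor slip: after disposing of $\delta(\rho,\tau_\beta)\le\varepsilon$, the remaining case is $\varepsilon<\delta(\rho,\tau_\beta)$, not ``$\delta(\rho,\tau_\beta)<\varepsilon$'' as you wrote.

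The real issue is in step~(v). From $c_{\rho'}\ge c_\rho-\varepsilon$ you claim it suffices to prove $c_\rho-\varepsilon\le c_{\flattest}$; but that inequality points the wrong way to chain with $c_{\rho'}\ge c_\rho-\varepsilon$ and conclude $c_{\rho'}\ge c_{\flattest}$. The reverse inequality $c_{\flattest}\le c_\rho-\varepsilon$ \emph{would} suffice by transitivity, yet it is false globally: at $x=0$ and $x=1$ the flattest curve takes values $0$ and $1$, whereas $c_\rho-\varepsilon$ sits at $-\varepsilon$ and $1-\varepsilon$. So neither direction yields a one-line sufficient condition. The paper's argument is instead a three-region analysis. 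In the middle region (kinks $M$ through $N-1$) one has $c_{\flattest}=c_\rho-\varepsilon$ exactly, so Theorem~\ref{theorem:epsborder} gives $c_{\rho'}\ge c_{\flattest}$ directly. In each outer region $c_{\flattest}$ is a single straight segment with one endpoint at $(0,0)$ (respectively $(1,1)$), which $c_{\rho'}$ also passes through; at the other endpoint of the segment (the boundary kink $x_M$ or $x_{N-1}$) Theorem~\ref{theorem:epsborder} gives $c_{\rho'}\ge c_{\flattest}$; concavity of $c_{\rho'}$ then forces $c_{\rho'}\ge c_{\flattest}$ throughout the segment, since a concave curve lies above the chord between any two of its points. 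It is precisely the \emph{linearity} of $c_{\flattest}$ on the outer regions that makes concavity bite --- this is the mechanism your outline gestures at but does not isolate.
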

\begin{proof}		
	We begin by noting that it suffices to prove that any state $ \rho'\in\bdepsball(\rho) $ goes to $ \flattest $ via thermal operations. This is because if we have some $ \rho'' \in\epsball(\rho)$ that is not block-diagonal, we can nevertheless first apply a map $ \mathcal{M} $ that decoheres $ \rho'' $ in the energy eigenbasis. The resulting state $ \mathcal{M}(\rho'') $ is within $ \bdepsball(\rho) $, this is shown by invoking the data processing inequality for trace distance:
	\begin{equation}\label{key}
	\delta(\mathcal{M}(\rho),\mathcal{M}(\rho'')) \leq \delta(\rho,\rho'') \leq \epsilon.
	\end{equation}
	We continue by denoting $p=\lbrace p_i\rbrace_i$ as the $\beta$-ordered eigenvalues of $\rho$ with corresponding energy levels $E=\lbrace E_i\rbrace_i$. To prove this theorem, we provide an explicit method to construct $ \flattest $ for any $ \varepsilon $ such that any other state in $ \bdepsball(\rho) $ will thermo-majorize $ \flattest $. 
	%We will attempt to find $\rho_{flat}$, the flattest state of $\rho$ with eigenvalues $\hat{p}_i$.\\
	
	We will consider two cases. If $\epsilon$ is large enough, such that
	\begin{equation}
	\delta(\rho,\tau_\beta) = \frac{1}{2}\sum_{i=1}^{n} \left|p_i - \frac{e^{-\beta E_i}}{\sum_{j=1}^{n}e^{-\beta E_j}}\right| \leq \epsilon,
	\end{equation}
	then this means the thermal state $ \tau_\beta \in\bdepsball(\rho)$. Since we know all block-diagonal states thermo-majorize $ \tau_\beta $, by setting $ \flattest = \tau_\beta $ we have that for any $\epsilon\geq \delta(\rho,\tau_\beta)$, the flattest state clearly exists.
	
	For the case where $\epsilon\leq \delta(\rho,\tau_\beta)$, it is not as straightforward to see that the flattest state exists. However, we will present a way to construct this state, and prove that this state is thermo-majorized by all other states within the $\epsilon$-ball. For any $ \epsilon >0 $, we perform the following steps to construct a state $ \hat{\rho} $, which later we show that $ \hat{\rho} = \flattest $:\\[5pt]
	\textbf{Step 1: Determine an integer $ \mathbf{M}$, and partially decrease the first $ \mathbf{M}$ ($ \mathbf{\beta} $-ordered) eigenvalues $ \mathbf{p_1, \cdots,p_M} $. }Define the function 
	\begin{equation}\label{eq:fm}
	F(m) = \sum_{i=1}^m p_i - p_{m+1} e^{\beta E_{m+1}}\sum_{i=1}^{m} e^{-\beta E_i}, \quad m\in\lbrace 1,d-1\rbrace.
	\end{equation}
	Note that due to the fact that $ p_i $ are $ \beta $-ordered, $ F(1)\geq 0 $, $ F(d-1) \geq \varepsilon $, and this function is non-decreasing with respect to $ m $ (Lemma \ref{lem:Fm_properties}, Appendix \ref{app:D_TL}).
	Therefore, we may find the smallest integer $ 1 \leq M \leq d-1 $ such that
	\begin{equation}\label{eq:fmcondition}
	\epsilon \leq F(M).
	\end{equation}
	This value $ M $ is the number of eigenvalues we cut from $ \rho $ to obtain $ \hat{\rho} $. Firstly, denote the total probability mass of these eigenvalues as
	\begin{equation}\label{key}
	A(M) = \sum_{i=1}^M p_i,
	\end{equation}
	and note that since $ \varepsilon \leq F(M) $, $ \varepsilon < A(M) $ is also true.
	We now denote the eigenvalues of $ \hat{\rho} $ as $ \hat p $, and for $ i\leq M $, let
	\begin{equation}\label{eq:flattesteigenfor1toM}
	\hat p_i = \frac{A(M)-\varepsilon}{\sum_{i=1}^M e^{-\beta E_i}} \cdot e^{-\beta E_i}.
	\end{equation}
%	while making sure that this does not hold for $M-1$, such that
%	\begin{equation}
%	\epsilon > \sum_{i=1}^{M-1}\left(\frac{p_{i}e^{\beta E_{i}} - p_{M}e^{\beta E_{M}}}{e^{\beta E_{i}}}\right).
%	\end{equation}
	
	%Assume that $\epsilon$ is large enough that we cannot reach the thermal state. Even though we can change the eigenvalues almost freely, we will avoid changing the $\beta$-order of the eigenvalues. We do this, because changing the $\beta$-order effectively wastes small portions of $\epsilon$, resulting in smaller changes with respect to the thermo-majorization curve. 
	From this construction in Eq.~\eqref{eq:flattesteigenfor1toM} we see that
	\begin{equation}\label{key}
	\sum_{i=1}^M \hat p_i = A(M)-\varepsilon,
	\end{equation}
	such that a total amount of exactly $ \varepsilon $ is cut from $ p_1,\cdots,p_M $ to obtain $ \hat p_1, \cdots, \hat p_M $. Furthermore, the first $ M $ eigenvalues are cut in a way such that they have the same ``advantage'' in $ \beta $-ordering, i.e.
	\begin{equation}\label{eq:betaPreservance}
		\hat{p}_1e^{\beta E_1} = \dotsc = \hat{p}_{M}e^{\beta E_{M}} \geq \hat{p}_{M+1}e^{\beta E_{M+1}}.
	\end{equation} 
	The inequality follows from our choice of $M$ as described by Eqns. \eqref{eq:fm} and \eqref{eq:fmcondition}. Firstly, $p_1, \cdots,p_M$ have the same beta-ordering by construction, therefore the beta-ordering can differ from the initial state only by one way, i.e. by reducing the first $M$ eigenvalues such that $\hat{p}_ie^{\beta E_i} < \hat{p}_{M+1}e^{\beta E_{M+1}}$ for all $i \leq M$. However, if this is true, then Eq.~\eqref{eq:fmcondition} requires that more than $\epsilon$ would have to be cut from $ p_1,\cdots,p_M $. Since this is not the case, $\beta$-ordering is preserved. 
	\\[5pt]
	\textbf{Step 2: Adding $ \mathbf{\varepsilon} $ onto the eigenvalues $ \mathbf{p_N,\cdots,p_d }$ for some integer $ \mathbf{N\geq M} $ to renormalize.}\\
	In a similar way, we can also determine another integer $M \leq N < d $ (the lower bound on $ N $ holds whenever the trace distance $ \delta(\rho,\tau)\leq\varepsilon $), which tells us how many eigenvalues we have to increase. For any integer $ 2 \leq m \leq d $, consider the function
	\begin{equation}\label{key}
	G(m) = p_{m-1} e^{\beta E_{m-1}} \sum_{i=m}^d e^{-\beta E_i} - \sum_{i=m}^d p_i.
	\end{equation}
	Note that by Lemma \ref{lem:Gm_properties} (Appendix \ref{app:D_TL}), $ G(d) \geq 0 $, $ G(2) \geq \varepsilon $ and $ G(m) $ is non-increasing in $ m \in\lbrace 2,d\rbrace $.
	Let $N$ be the largest integer such that
	\begin{equation}\label{eq:gncond}
	\varepsilon \leq G(N).
	\end{equation}
	Once $ N $ is determined, denote the total probability mass 
	\begin{equation}
	B(N) = \sum_{i=N}^d p_i .
	\end{equation}
	We proceed to increase the probabilities $ p_N,\cdots,p_d $ in the following way to obtain $ \hat p_N,\cdots,\hat p_d $: for $ N \leq i\leq d $, let
	\begin{equation}\label{eq:hatp_eigen_Ntod}
	\hat p_i = \frac{B(N)+\varepsilon}{\sum_{i=N}^d e^{-\beta E_i}}\cdot e^{-\beta E_i}.
	\end{equation}
	Note that due to this construction, these eigenvalues are increased so that they again have the same $ \beta $-ordering advantage: $ \hat p_{N+1} e^{\beta E_{N+1}} \geq \hat p_N e^{\beta E_N} = \cdots = \hat p_d e^{\beta E_d}$. The inequality follows from our choice of $N$ in a similar way to the inequality of Eq. (\ref{eq:betaPreservance}). Eq.~\eqref{eq:gncond} ensures that more than $\varepsilon$ has to be added to the eigenvalues to change the $\beta$-ordering.\\[5pt]
	\textbf{Step 3: Keep all the other eigenvalues.}\\
	The last step in defining $ \hat{\rho} $ is such that for all $ M<i<N $, the eigenvalues are left untouched, i.e. $ \hat p_i = p_i $.\\

	We have now finished the task of constructing a particular flat state $ \hat{\rho}$, which is diagonal in the same basis as $ \rho $, with eigenvalues denoted by $ \hat p $. Now, what remains is to show that $ \hat{\rho} $ is thermo-majorized by all states $ \rho'\in\bdepsball(\rho) $, and therefore $ \hat{\rho} = \flattest $. To do this, we will divide the thermo-majorization curve up into three different regions, similar to what we did earlier. These regions are depicted in Fig.~\ref{flat1}.
	
	\begin{figure}[h!]
		\centering\includegraphics[width=\linewidth]{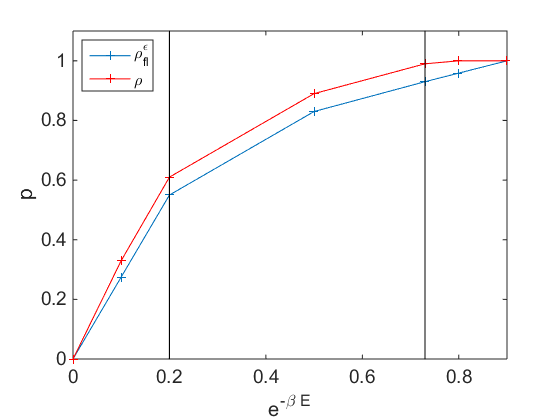}
		\caption{The thermo-majorization diagram of the flattest state divided up into three regions. In this particular example, $M=2$ and $N=5$. This means the first two beta-ordered eigenvalues are cut (by a total amount of $ \varepsilon $), while from the fifth eigenvalue onwards, each eigenvalue is increased. In the middle zone, the eigenvalues are unchanged.}
		\label{flat1}
	\end{figure}
	
	Firstly, let us consider the region $ x\in \left[0,\sum_{i=1}^{M}e^{-\beta E_i}\right] $.
	%to the left of the kink given by $\left(\sum_{i=1}^{M}e^{-\beta E_i},\sum_{i=1}^{M}\hat{p}_i\right)$.
	Since we have seen that $ \hat p_1, \cdots, \hat p_M $ have the same $ \beta $-ordering advantage, the thermo-majorization curve $ c_{\hat{\rho}} $ is a straight line within this interval. 
	Furthermore, if we compare the curves $ c_\rho,c_{\hat{\rho}} $ at the rightmost end of the interval, i.e. $ x_M = \sum_{i=1}^{M}e^{-\beta E_i}$, we see that
	\begin{equation}\label{key}
	 c_{\rho} (x_M) = c_{\hat{\rho}} (x_M) + \varepsilon.
	\end{equation}
	This means that $ c_{\hat{\rho}} $ has a thermo-majorization curve that achieves the lower bound given in Theorem \ref{theorem:epsborder}.
	Now, is it possible for another state $ \rho' $ to have a thermo-majorization curve $ c_{\rho'} < c_{\hat{\rho}} $ at any point in this interval? Since we know that thermo-majorization diagrams are concave, it follows that if such a curve exists, then $ c_{\rho'} (x_M) < c_{\hat{\rho}}(x_M) $ has to hold as well. 
	However, by Theorem \ref{theorem:epsborder} this is impossible, and we arrive at a contradiction. This implies that for any $ \rho'\in\bdepsball (\rho) $, in the interval $ x\in \left[0,\sum_{i=1}^{M}e^{-\beta E_i}\right]$, we always have $ c_{\rho'} \geq c_{\hat{\rho}} $.
	
	The second region we consider is the interval $ x\in \left[\sum_{i=1}^{M}e^{-\beta E_i},\sum_{i=1}^{N}e^{-\beta E_i}\right] $. 
	%Next, we consider the region between $(\sum_{i=1}^{M}e^{-\beta E_i},\sum_{i=1}^{M}\hat{p}_i)$ and $(\sum_{i=1}^{N}e^{-\beta E_i},\sum_{i=1}^{N}\hat{p}_i)$. 
	For this entire region, we have that
	$ c_{\rho} = c_{\hat{\rho}} + \varepsilon$.
	%the curve of the flattest state lies $\epsilon$ below the curve of the initial state.
	Therefore, by the same reasoning, any $ \rho' $ satisfies $ c_{\rho'} \geq c_{\hat{\rho}} $ in this region.
	
	Finally, we see that the same reasoning applies to the third interval $ x\in\left[ \sum_{i=1}^{N}e^{-\beta E_i}, Z \right] $. Recall that at $ x_N = \sum_{i=1}^{N}e^{-\beta E_i} $, we have $ c_\rho (x_N) = c_{\hat
	\rho} (x_N)+\epsilon$, and within this interval $ c_{\hat{\rho}} $ is again a straight line. For any other $ c_{\rho'} $, since it is concave, if $ c_{\rho'} < c_{\hat{\rho}}$ within this interval, then $ c_{\rho'} (x_N) < c_{\hat{\rho}} (x_N) $ as well, which again leads to a contradiction.
	
%	For the last region, the region to the right of $(\sum_{i=1}^{N}e^{-\beta E_i},\sum_{i=1}^{N}\hat{p}_i)$, the thermo-majorization curve of the flattest state is, again, a straight line. For a state to have a curve that lies below it, this curve needs to lie below the point $(\sum_{i=1}^{N}e^{-\beta E_i},\sum_{i=1}^{N}\hat{p}_i)$, again because thermo-majorization curves are concave. By theorem \ref{theorem:epsborder} this is impossible.
	
	Note that the thermo-majorization diagram of any other state $ \rho'\in\bdepsball(\rho) $ lies within these three regions, if the Hamiltonian stays invariant. Combining our analysis for the three regions, we have shown that any such $\rho'$ will have a thermo-majorization curve $ c_{\rho'} \geq c_{\hat{\rho}}$ at all points of the diagram. In other words, given any state $ \rho'\in\epsball_D (\rho) $, $ \rho' $ always thermo-majorizes $ \hat{\rho} $. Therefore, by definition, $ \hat{\rho} = \flattest $. 
\end{proof}

\subsection{General Hamiltonians: steepest state}\label{app:B3steepest}

In this section, we give our results on the steepest state. We first show that there does not, in general, exist a steepest state. Then, we present a way to construct the steepest state for small $\epsilon$. Finally, we use this steepest state to define our particular steep state.	
%\noindent \nelly{structure: \\1) an example to show that steepest state does not exist for certain regions, \\2) (optional) result that proves a steepest state exists for small enough $ \varepsilon $, and \\3) the construction of a steep state which will be sufficient for our purposes of defining divergences.}\\[3pt]

\subsubsection{Non-existence of a general steepest state}
%In this appendix, we show that the smoothed R\'enyi divergences and our new smoothed divergences do not in general coincide exactly.\\

%We already showed that the steep state as we defined it is not the steepest state within $\epsball(\rho)$ in general, in terms of thermo-majorization. This does not, however, prove that there may not exist a catalyst with which it becomes the steepest state. We will provide an example to show that there exist cases where no steepest state exists at all, and use this example to show that the smoothed R\'enyi divergences are not, in general, equal to our new smoothed divergences.\\

To show that there is no steepest state wrt TO, it suffices to show that there is no steepest state wrt CTO. This can be seen as follows: if there is no steepest state wrt CTO, it means that for any candidate state $ \bar\rho_{\rm steep}^\varepsilon $ chosen, there exists at least one other state $  \check\rho_{\rm steep}^\varepsilon $ where $  \bar\rho_{\rm steep}^\varepsilon \rightarrow  \check\rho_{\rm steep}^\varepsilon $ is not possible via CTO. If $  \bar\rho_{\rm steep}^\varepsilon \rightarrow  \check\rho_{\rm steep}^\varepsilon $ is not possible via CTO, it is also not possible via TO. Therefore, by the same definition, there exists no steepest state wrt TO. 
%Another way to see it is that: if there exists a steepest state wrt TO, then this state has to reach all other states in the $ \epsball_D (\rho) $ via TO; which means that it also reaches all states in $ \epsball_D (\rho) $ via CTO. Therefore, it has to be the steepest state wrt CTO as well. If no such candidate for CTO 

Consider the block diagonal state $\rho$, with eigenvalues $\{p_i\}_i = \{0.55,0.35,0.1\}$ and corresponding $\beta$-factors $\{e^{\beta E_i}\}_i = \{1,2,8\}$. Denote the eigenvalues of the thermal state $ \tau $ as $ \{q_i\}_i $. Consider all states within $\epsball_D(\rho)$ for $\epsilon = 0.45$. Since a steepest state maximizes the R\'enyi divergences for all $\alpha\in\mathbb{R}$, we know that in particular 
\begin{eqnarray}
D_0(\steepest||\tau) &=& \max_{\hat{\rho}\in\epsball(\rho)}D_0(\hat{\rho}||\tau) \nonumber\\
&=& \max_{\hat{\rho}\in\epsball(\rho)}(-\text{log}\sum_{i:\hat{p}_i>0}q_i) \nonumber\\
&=& \min_{\hat{\rho}\in\epsball(\rho)}(\text{log}\sum_{i:\hat{p}_i>0}q_i).
\end{eqnarray}
Thus, in order for a state $\hat{\rho}$ to be steepest, it has to minimize $q_i$ for which $p_i$ are nonzero. Note that $q_i$ are inversely proportional to the $\beta$-factors of $\rho$. Thus, in order to obtain the steepest state, we have to cut the eigenvalues that correspond to large $\beta$-factors. In our example, this means we would like to cut the $0.55$ eigenvalue. We cannot do this, however, because the resulting state would no longer be within $\epsball(\rho)$. Thus, we have to cut the other two eigenvalues to attain the maximum of the divergences for $\alpha=0$. We define the eigenvalues of $\hat{\rho}$ by $\{\hat{p}_i\}_i = \{1,0,0\}$.

Note that a steepest state has to maximize $D_\alpha(\hat{\rho}||\tau)$ for all values of $\alpha$. Thus, if we can find an $\alpha$ for which the state that we just constructed does not maximize the R\'enyi divergence, then we have proved that no steepest state exists at all, for this scenario. In particular, if we can find such $\alpha\in(0,1]$, then this also shows that the new smoothed divergences and the smoothed R\'enyi divergences may be different, since this would imply that a single state cannot always attain the maximum for all $\alpha\in[0,1]$.

Consider the block diagonal state $\tilde{\rho}\in\epsball_D(\rho)$, with eigenvalues given by $\{\tilde{p}_i\}_i = \{0.45, 0, 0.55\}$, corresponding to the same $\beta$-factors as before. For this state, we find that for $\alpha=1$,
\begin{eqnarray}
D_1(\tilde{\rho}||\tau) &=& \sum_{i=1}^3 \tilde{p}_i \text{log}\frac{\tilde{p}_i}{\tilde{q}_i} \nonumber\\
&=& 0.45\text{log}(\frac{0.45\cdot11}{8}) + 0.55\text{log}(\frac{0.55}{8}) \nonumber\\
&>& \text{log}(\frac{11}{8}) =  \sum_{i=1}^3 \hat{p}_i \text{log}\frac{\hat{p}_i}{\hat{q}_i} \nonumber\\
&=& D_1(\hat{\rho}||\tau).
\end{eqnarray}
Thus, $\hat{\rho}$ does not maximize the R\'enyi divergence for $\alpha=1$. Since for this case, $\hat{\rho}$ was the unique state maximizing the R\'enyi divergence for $\alpha=0$, there exists no steepest state within $\epsball_D(\rho)$.\\

%In particular, there exists no state which maximizes the R\'enyi divergence for all $\alpha\in[0,1]$. This means that for this case, there is a difference between the the smoothed R\'enyi divergences and our new smoothed divergences for at least some $\alpha\in[0,1]$.

\subsubsection{The steepest state for small $\epsilon$}
\begin{theorem}
	Consider any $d$-dimensional state $\rho$ which is block-diagonal with respect to $  H $, and let the $\beta$-ordered eigenvalues of $\rho$ be given by $\{p_i\}_i$, with corresponding energy levels $E_i$. Then, if $\epsilon$ is bounded such that $ \varepsilon \leq \min\lbrace\varepsilon_A,\varepsilon_B,\varepsilon_C\rbrace $, where
	\begin{eqnarray} 
    \varepsilon_A &:=& \min_{i: p_i>0}p_i,\label{eq:epsbounds_A}\\
    \varepsilon_B &:=& \min_{i:E_i>E_1}\left(\frac{p_1e^{\beta E_1}-p_ie^{\beta E_i}}{e^{\beta E_i}-e^{\beta E_1}}\right),\label{eq:epsbounds_B}\\
    \varepsilon_C &:=& \min_{i: p_i>0, E_i>E_k}\left(\frac{p_ie^{\beta E_i}-p_ke^{\beta E_k}}{e^{\beta E_i}-e^{\beta E_k}}\right),\label{eq:epsbounds_C}
	\end{eqnarray}
	then a steepest state $\steepest$ as defined in Def.~\ref{def:steepest} exists, and its eigenvalues are given by
	\begin{equation} \label{eq:stip}
	\hat{p}_i = \begin{cases}
	p_i+\epsilon & \text{if } i=1\\
	p_i-\epsilon & \text{if } i=k\\
	p_i & \text{otherwise},
	\end{cases}
	\end{equation}
	where $k$ is the largest index for which $p_k>0$.
\end{theorem}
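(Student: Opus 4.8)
The plan is to verify directly the two clauses of Def.~\ref{def:steepest}: that the constructed $\steepest$ belongs to $\bdepsball(\rho)$, and that its thermo-majorization curve $c_{\steepest}$ lies on or above $c_{\rho'}$ for every $\rho'\in\bdepsball(\rho)$. The first clause is immediate: only the eigenvalues at positions $1$ and $k$ are altered, by $+\varepsilon$ and $-\varepsilon$, so $\steepest$ is diagonal in the energy eigenbasis and $\delta(\steepest,\rho)=\varepsilon$, while $\varepsilon\le\varepsilon_A\le p_k$ makes $\hat p_k\ge 0$, so $\steepest$ is a bona fide state. The same bound $\varepsilon\le\varepsilon_A$ together with the $\beta$-ordering of $\rho$ shows that $\steepest$ is $\beta$-ordered in exactly the same order as $\rho$ (raising $p_1$ keeps it $\beta$-largest; lowering $p_k$ keeps it $\beta$-smallest among the nonzero eigenvalues). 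Hence $c_{\steepest}$ has its kinks at the same abscissae $x_j=Z^{-1}\sum_{i=1}^{j}e^{-\beta E_i}$ as $c_\rho$, and comparing partial sums gives $c_{\steepest}=c_\rho+\varepsilon$ on $[x_1,x_{k-1}]$, $c_{\steepest}=c_\rho\equiv 1$ on $[x_k,1]$, the straight segment of slope $(p_1+\varepsilon)e^{\beta E_1}Z$ on $[0,x_1]$, and the straight segment of slope $(p_k-\varepsilon)e^{\beta E_k}Z$ on $[x_{k-1},x_k]$.

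I would then prove $c_{\steepest}\ge c_{\rho'}$ on each of these four intervals. On $[x_1,x_{k-1}]$ this is just the upper bound $c_{\rho'}\le c_\rho+\varepsilon$ of Theorem~\ref{theorem:epsborder}, and on $[x_k,1]$ it is trivial since $c_{\rho'}\le 1=c_{\steepest}$. On $[0,x_1]$ the idea is that no state in the ball can have a first segment steeper than that of $\steepest$: since $\delta(\rho',\rho)\le\varepsilon$ forces $p'_j\le p_j+\varepsilon$ for every $j$, the largest $\beta$-value of $\rho'$ is at most $\max_j(p_j+\varepsilon)e^{\beta E_j}$, and the hypothesis $\varepsilon\le\varepsilon_B$ combined with the $\beta$-ordering of $\rho$ forces this maximum to equal $(p_1+\varepsilon)e^{\beta E_1}$; concavity of $c_{\rho'}$ through the origin then gives $c_{\rho'}(x)\le(p_1+\varepsilon)e^{\beta E_1}Zx=c_{\steepest}(x)$ on that interval.

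The remaining and most delicate interval is $[x_{k-1},x_k]$, where the bounds of Theorem~\ref{theorem:epsborder} are too weak (there $c_\rho+\varepsilon>c_{\steepest}$) and I must rule out curves $c_{\rho'}$ that bulge upward by reordering eigenvalues or by spreading a little mass onto energy levels outside the support of $\rho$. Here I would use $\varepsilon\le\varepsilon_A$ and $\varepsilon\le\varepsilon_C$ jointly. The former implies $p'_j\ge p_j-\varepsilon>0$ for every $j$ in the support of $\rho$, so $\rho'$ cannot empty any eigenvalue and its support contains that of $\rho$; the latter, with the $\beta$-ordering, yields $(p_j-\varepsilon)e^{\beta E_j}\ge(p_k-\varepsilon)e^{\beta E_k}$ for all such $j$ (splitting into the cases $E_j>E_k$, $E_j=E_k$, $E_j<E_k$ and using that $p_je^{\beta E_j}$ is nonincreasing along the $\beta$-ordering). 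Consequently every segment of $c_{\rho'}$ coming from a level in the support of $\rho$ has slope at least $(p_k-\varepsilon)e^{\beta E_k}Z$; these segments come first in the $\beta$-ordering of $\rho'$ and already account for total $x$-width $\sum_{j=1}^{k}e^{-\beta E_j}/Z=x_k$, so $c_{\rho'}$ has slope $\ge(p_k-\varepsilon)e^{\beta E_k}Z$ throughout $[0,x_k]$. Taking the supporting line of the concave curve $c_{\rho'}$ at $x_k$, whose slope is therefore $\ge(p_k-\varepsilon)e^{\beta E_k}Z$, and using $c_{\rho'}(x_k)\le 1=c_{\steepest}(x_k)$, gives $c_{\rho'}(x)\le c_{\steepest}(x)$ on all of $[x_{k-1},x_k]$. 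Combining the four regions shows $\steepest\xrightarrow[\rm TO]{}\rho'$ for every $\rho'\in\bdepsball(\rho)$, i.e. $\steepest$ is the $\varepsilon$-steepest state.

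The hardest part of the write-up will be the bookkeeping forced by the fact that $c_{\rho'}$ need not share the abscissae $x_j$ — because $\rho'$ may have a different $\beta$-ordering — which is precisely where Lemma~\ref{lemma:curve} and Theorem~\ref{theorem:epsborder} carry the argument; I would also handle the degenerate cases $k=1$ (the construction collapses and $\rho$ is already pure and maximal) and $\varepsilon=\varepsilon_A$ (where some $p_j-\varepsilon$ vanishes, so the strict positivity of the support has to be argued more carefully) separately.
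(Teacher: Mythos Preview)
Your proposal is correct and follows essentially the same approach as the paper: the same partition of the $x$-axis into four regions, with Theorem~\ref{theorem:epsborder} handling $[x_1,x_{k-1}]$, the bound $\varepsilon\le\varepsilon_B$ handling $[0,x_1]$, the bound $\varepsilon\le\varepsilon_C$ handling $[x_{k-1},x_k]$, and the trivial bound $c_{\rho'}\le 1$ handling $[x_k,1]$. Your supporting-line argument on $[x_{k-1},x_k]$ is in fact a crisper version of the paper's reasoning there, which argues more heuristically that a segment of slope below $(p_k-\varepsilon)e^{\beta E_k}Z$ would have to appear inside region~3 and traces this back to a decreased support eigenvalue; your formulation makes the concavity step explicit and avoids that case analysis.
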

\begin{proof}
	We only have to show that the state $\steepest$ that we defined in Eq.~\eqref{eq:stip}, is indeed the steepest state. Thus, we want to show that for any other state $ \rho'\in\epsball_D(\rho) $, $ \steepest\rightarrow\rho' $ is possible via thermal operations. We will do this by comparing the thermo-majorization curves $c_{\steepest}$ and $c_{\rho'}$ of $\steepest$ and $\rho'$ respectively. We will divide $c_{\steepest}$ up into four different regions, just like we did before, and show for each region that $c_{\rho'}\leq c_{\steepest}$. These regions are depicted in Fig.~\ref{stip1}.		
	\begin{figure}[H]
		\centering\includegraphics[width=\linewidth]{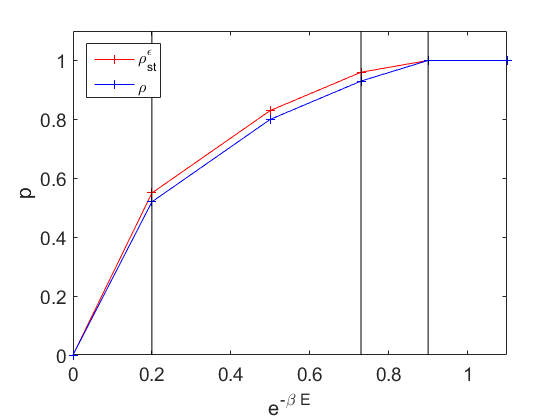}
		\caption{The thermo-majorization diagram divided up into four different regions. In this example, $k = 4$.}
		\label{stip1}
	\end{figure}
	
	Firstly, let us consider the region $ x\in \left[0,e^{-\beta E_1}\right] $. Because $c_{\steepest}$ in this entire region is a straight line, the only way to surpass it, is by having a steeper slope. For this to happen, the eigenvalues $\{p_i'\}_i$ of $\rho'$ must satisfy
	\begin{equation} \label{eq:reg1}
	p_i'e^{\beta E'_i} > \hat{p}_1e^{\beta E_1},
	\end{equation}
	for at least some $1\leq i\leq d$. We use the bound on $\epsilon$ given in Eq.~\ref{eq:epsbounds_A}-\eqref{eq:epsbounds_C} to show that this is impossible.	This bound consists of three parts, of which one is given by
	\begin{equation}
	\epsilon \leq \min_{i:E_i>E_1}\left(\frac{p_1e^{\beta E_1}-p_ie^{\beta E_i}}{e^{\beta E_i}-e^{\beta E_1}}\right).
	\end{equation}
	In particular, this bound implies that for all $1<i\leq d$ for which $E_i>E_1$,
	\begin{equation}
	\epsilon \leq \left(\frac{p_1e^{\beta E_1}-p_ie^{\beta E_i}}{e^{\beta E_i}-e^{\beta E_1}}\right).
	\end{equation}
	Rewriting this yields that for these $i$,
	\begin{equation}
	\epsilon\left(e^{\beta E_i}-e^{\beta E_1}\right) \leq \left(p_1e^{\beta E_1}-p_ie^{\beta E_i}\right).
	\end{equation}
	Note that this equation trivially holds if $E_i\leq E_1$. Thus, we find that for all $1<i\leq d$,
	\begin{equation}
	\hat{p}_1e^{\beta E_1} = \left(p_1+\epsilon\right)e^{\beta E_1} \geq \left(p_i+\epsilon \right)e^{\beta E_i} \geq p_i'e^{\beta E'_i},
	\end{equation}
	which means Eq~\ref{eq:reg1} does not hold. Thus, for this region we find that for any state $\rho'\in\epsball_D(\rho)$, $c_{\rho'}\leq c_{\steepest}$.	
	
	Next, we consider the interval $ x\in \left[e^{-\beta E_1},\sum_{i=1}^{k-1}e^{-\beta E_i}\right] $. Note that for all $x$ within this interval, 
	\begin{equation}
	c_{\steepest} (x) = c_{\rho} (x) + \varepsilon.
	\end{equation}
	This means that $ \steepest $ has a thermo-majorization curve that achieves the upper bound given in Theorem \ref{theorem:epsborder}. Thus, for this region we also find that for any state $\rho'\in\epsball_D(\rho)$, $c_{\rho'}\leq c_{\steepest}$.
	
	The third region we consider is the interval $ x\in \left[\sum_{i=1}^{k-1}e^{-\beta E_i},\sum_{i=1}^{k}e^{-\beta E_i}\right] $. Similar to the previous interval, we will use the bound on $\epsilon$ to show that $c_{\steepest}$ cannot be surpassed.
	
	Note that in this region, $c_{\steepest}$ is a straight line with one endpoint given by $(\sum_{i=1}^{k}e^{-\beta E_i},1)$. Because thermo-majorization curves cannot surpass $1$, the curve $c_{\rho'}$ can only lie above $c_{\steepest}$ if $\rho'$ has an eigenvalue such that 
	\begin{equation} \label{eq:reg3}
	p'_ie^{\beta E_i} < \hat{p}_ke^{\beta E_k}.
	\end{equation}
	There are two ways to construct such eigenvalues. Either we can increase some eigenvalue $p_i$ that was originally equal to $0$, or we can partially cut a nonzero eigenvalue. However, if we choose to do the former, then the line segment still has to be moved to the region that we are currently looking at. The only way to do this, is by decreasing another eigenvalue such that its slope is even flatter. Thus, in both cases, we have to decrease an eigenvalue such that Eq.~\eqref{eq:reg3} is satisfied.	We again use the bound on $\epsilon$ given in Eq.~\eqref{eq:epsbounds_A}-\eqref{eq:epsbounds_C} to show that this is not possible. One of the parts of the bound is given by
	\begin{equation}
	\epsilon \leq \min_{i: p_i>0, E_i>E_k}\left(\frac{p_ie^{\beta E_i}-p_ke^{\beta E_k}}{e^{\beta E_i}-e^{\beta E_k}}\right),
	\end{equation}
	which implies that for all $i$ for which $p_i>0$ and $E_i>E_k$,
	\begin{equation}
	\epsilon \leq \left(\frac{p_ie^{\beta E_i}-p_ke^{\beta E_k}}{e^{\beta E_i}-e^{\beta E_k}}\right).
	\end{equation}
	Rewriting this yields that for these $i$, 
	\begin{equation}
	\epsilon\left(e^{\beta E_i}-e^{\beta E_k}\right) \leq \left(p_ie^{\beta E_i}-p_ke^{\beta E_k}\right).
	\end{equation}
	Note that this equation trivially holds if $E_i\leq E_k$. Thus, we find that for all $1<i\leq d$ for which $p_i>0$,
	\begin{equation}
	\hat{p}_ke^{\beta E_k} = \left(p_k-\epsilon\right)e^{\beta E_k} \leq \left(p_i-\epsilon\right)e^{\beta E_i} \leq p_i'e^{\beta E_i}.
	\end{equation}
	This contradicts Eq.~\ref{eq:reg3}, and thus we have that in this region, for any state $\rho'\in\epsball_D(\rho)$, $c_{\rho'}\leq c_{\steepest}$.
	
	Finally, for the interval $ x\in \left[\sum_{i=1}^{k}e^{-\beta E_i},\sum_{i=1}^{d}e^{-\beta E_i}\right] $, we find that
	\begin{equation}
	c_{\steepest} (x) = 1,
	\end{equation}
	because $k$ is the largest index for which $p_k$ is nonzero. Clearly, because states are normalized, it is impossible for any thermo-majorization curve to surpass this.
	
	Since for all regions, the thermo-majorization curve of $\steepest$ cannot be surpassed, $\steepest$ thermo-majorizes all other states within the $\epsilon$-ball, and is therefore the steepest state.
\end{proof}

\subsection{Existence of Thermal Operation that achieves approximate state transition}\label{subsec:approxTO}
In our work, we apply smoothing procedures on two states: the initial state $ \rho $ as well as the final state $ \sigma $. The reason for this might not be intuitive: indeed one might be satisfied to reach the target state $ \sigma'\approx_\varepsilon\sigma $ approximately, however why can we assume that we start out in another initial state $ \rho'\approx_\varepsilon\rho $? The following lemma rigorously explains the physical justification for doing so: if $ \rho'\rightarrow\sigma' $ is achievable by a thermal operation $ \mathcal{N} $, then if one applies $ \mathcal{N} $ to the original initial state $ \rho $, the final state obtained is always in a $ 2\varepsilon $-ball of the state $ \sigma $.
\begin{Lemma}
	Consider any quantum states $ \rho_S,\rho_S',\sigma_S,\sigma_S' $ such that $ \rho_S'\in\mathcal{B}^{\varepsilon_1} (\rho_S) $ and $ \sigma_S'\in\mathcal{B}^{\varepsilon_2} (\sigma_S) $. Then for any quantum channel $ \mathcal{N} $ such that $ \mathcal{N}(\rho_S')= \sigma_S'$, we have
	\begin{equation}\label{key}
		\tilde{\rho}_S := \mathcal{N} (\rho_S) \in \mathcal{B}^{\varepsilon_1+\varepsilon_2} (\sigma_S).
	\end{equation}
\end{Lemma}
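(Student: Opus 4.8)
The plan is to bound the trace distance $\delta(\tilde\rho_S,\sigma_S)$ directly by splitting it through the intermediate state $\sigma_S'=\mathcal{N}(\rho_S')$ and invoking two standard properties of the trace distance: the triangle inequality and contractivity under quantum channels (the data processing inequality). First I would write
\begin{equation}
\delta(\tilde\rho_S,\sigma_S) = \delta(\mathcal{N}(\rho_S),\sigma_S)
\leq \delta(\mathcal{N}(\rho_S),\mathcal{N}(\rho_S')) + \delta(\mathcal{N}(\rho_S'),\sigma_S),
\end{equation}
using the triangle inequality for $\delta$.

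Next I would bound each term separately. For the first term, since $\mathcal{N}$ is a quantum channel (CPTP map), the data processing inequality gives $\delta(\mathcal{N}(\rho_S),\mathcal{N}(\rho_S')) \leq \delta(\rho_S,\rho_S')$, and by assumption $\rho_S'\in\mathcal{B}^{\varepsilon_1}(\rho_S)$, so $\delta(\rho_S,\rho_S')\leq\varepsilon_1$. For the second term, I use $\mathcal{N}(\rho_S')=\sigma_S'$ together with the hypothesis $\sigma_S'\in\mathcal{B}^{\varepsilon_2}(\sigma_S)$, so that $\delta(\mathcal{N}(\rho_S'),\sigma_S)=\delta(\sigma_S',\sigma_S)\leq\varepsilon_2$. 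Combining these yields $\delta(\tilde\rho_S,\sigma_S)\leq\varepsilon_1+\varepsilon_2$, which is precisely the statement $\tilde\rho_S\in\mathcal{B}^{\varepsilon_1+\varepsilon_2}(\sigma_S)$.

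There is essentially no obstacle here: the argument is a two-line application of the triangle inequality plus contractivity, and it works for any quantum channel, not just thermal operations — which is exactly why it justifies smoothing the initial state. The only point worth stating carefully is that the data processing inequality applies because $\mathcal{N}$ is assumed CPTP (any thermal operation is, being a partial trace of a global unitary acting on the system and a thermal bath), so no special structure of $\mathcal{N}$ is needed. One may optionally remark that the bound $\varepsilon_1+\varepsilon_2$ is what appears in Eq.~\eqref{key} of the main text with $\varepsilon_1,\varepsilon_2$ the smoothing parameters of $\steep$ and $\sigma_{\rm fl}^\varepsilon$ respectively.
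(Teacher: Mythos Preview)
Your proof is correct and essentially identical to the paper's own argument: both split $\delta(\tilde\rho_S,\sigma_S)$ through the intermediate point $\sigma_S'=\mathcal{N}(\rho_S')$, bound one piece by the data processing inequality and the other by the hypothesis $\sigma_S'\in\mathcal{B}^{\varepsilon_2}(\sigma_S)$. The only cosmetic difference is the order of presentation---the paper first records $\delta(\tilde\rho_S,\sigma_S')\leq\varepsilon_1$ and then applies the triangle inequality, whereas you apply the triangle inequality first---but the content is the same.
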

\begin{proof}
	By assumption of the lemma we have that $ \delta(\rho_S,\rho_S')\leq\varepsilon_1 $ and $ \delta(\sigma_S,\sigma_S')\leq\varepsilon_2 $. Furthermore, by the data processing inequality of trace distance, we have
	\begin{equation}\label{key}
	\delta(\tilde{\rho}_S,\sigma_S') = \delta(\mathcal{N}(\rho_S),\mathcal{N}(\rho_S')) \leq \delta(\rho_S,\rho_S')\leq\varepsilon_1.
	\end{equation}
	On the other hand, we know from the triangle inequality that 
	\begin{eqnarray*}
	\delta(\mathcal{N}(\rho_S),\sigma_S) &=& \delta(\tilde\rho_S,\sigma_S) \leq \delta(\tilde{\rho}_S,\sigma_S') + \delta(\sigma_S,\sigma_S') \\
	&\leq& \delta(\tilde{\rho}_S,\sigma_S') +\varepsilon_2 \leq \varepsilon_1+\varepsilon_2.
	\end{eqnarray*}
\end{proof}
%\section{\fontsize{12}{15}\selectfont Operational meaning}
%
%\subsection{Work extraction}
%		
\section{Asymptotic Equipartition Property (AEP)}\label{app:C_AEP}
In this appendix we prove that the new smoothed divergences defined in Eq.~\eqref{eq:newdiv} satisfy the asymptotic equipartition property (this is stated in Theorem \ref{thm:AEP} of the main text). 
By this, we mean that for any $ \alpha\geq 0 $, when we consider our smoothed divergences for any states $\rho $ and $ \sigma $, then
\begin{equation}
\lim_{\varepsilon\rightarrow 0} \lim_{n\rightarrow\infty} \frac{1}{n} \hat D_\alpha^\varepsilon (\rho^{\otimes n}\|\sigma^{\otimes n}) = D(\rho\|\sigma).
\end{equation}
Such a property cannot be satisfied by the unsmoothed R{\'e}nyi divergences $ D_\alpha $, since the exact quantities are additive under tensor product, and therefore for any positive integer $ n $, the quantity $ \frac{1}{n} \hat D_\alpha (\rho^{\otimes n}\|\sigma^{\otimes n}) = D_\alpha (\rho\|\sigma) \neq D(\rho\|\sigma) $ in general. However, the usual smoothed versions $ D_\alpha^\varepsilon (\rho\|\sigma) $, as defined in Eq.~\eqref{eq:olddiv}, do satisfy this property.
%Consider an amount of $ n $ random variables given by $ X_1,\cdots,X_n $, where each variable $ X_i $ assumes a particular value according to the same probability distribution. 
\subsection{A $ \delta $-typical subspace}
To prove the AEP for the quantities $ \hat D_\alpha (\rho\|\sigma) $, we first need to establish a technical lemma regarding the typical subspace of $ \rho^{\otimes n} $. This can be done by using Hoeffding's inequality~\cite{hoeffding1963probability}. This lemma shows that as $n$ grows large, most of the weight of the eigenvalues of $ \rho^{\otimes n} $ lie within such a typical subspace.%We will first examine the eigenvalues of many copies of $\rho$ and $\tau$.
\begin{Lemma} \label{lemma:hoeffding}
	For any quantum state $ \rho $ block-diagonal with respect to its Hamiltonian $ H $, consider $ n $ copies, $ \rho^{\otimes n} $. Let $\{\tilde{p}_k\}_k$ be the $\beta$-ordered eigenvalues of $\rho^{\otimes n}$, and let $\{\tilde{q}_k\}_k$ be the eigenvalues of $\tau^{\otimes n}$ in the same ordering as $\rho^{\otimes n}$. 
	Then according to the probability distribution given by $ \{\tilde{p}_k\}_k$,
	%	If the probability distribution of $\frac{\tilde{p}_k}{\tilde{q}_k}$ is then given by $\tilde{p}_k$, 
	we have that for any $\delta>0$,
	\begin{equation}
	\text{Pr}\left(2^{n\left[D(\rho||\tau)-\delta\right]} \leq \frac{\tilde{p}_k}{\tilde{q}_k} \leq 2^{n\left[D(\rho||\tau)+\delta\right]}\right) \geq 1 - 2e^{-2n\delta^2}.
	\end{equation} 
\end{Lemma}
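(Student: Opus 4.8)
The plan is to reduce the statement to a classical asymptotic equipartition estimate and then invoke Hoeffding's inequality. Since $\rho$ is block-diagonal with respect to $H$ and the thermal state $\tau$ is diagonal in the energy eigenbasis, the two operators commute, so we may fix a common eigenbasis; let $\{p_i\}_{i=1}^d$ be the eigenvalues of $\rho$ and $\{q_i\}_{i=1}^d$ the corresponding eigenvalues of $\tau$, with every $q_i>0$ because $\tau$ has full rank. The eigenvalues of $\rho^{\otimes n}$ and $\tau^{\otimes n}$ are then jointly indexed by strings $\vec{i}=(i_1,\dots,i_n)\in\{1,\dots,d\}^n$, with $p_{\vec{i}}=\prod_{j=1}^n p_{i_j}$ and $q_{\vec{i}}=\prod_{j=1}^n q_{i_j}$. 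First I would observe that the $\beta$-ordering appearing in the lemma is only a relabelling of this list of pairs $(p_{\vec{i}},q_{\vec{i}})$: the $\beta$-ordering of $\rho^{\otimes n}$ induces a permutation $\pi$ of multi-indices with $\tilde p_k=p_{\pi(k)}$ and $\tilde q_k=q_{\pi(k)}$, so the pairing is preserved. Since the probability in the statement is taken with respect to the weights $\tilde p_k$ themselves, the distribution of $\tilde p_k/\tilde q_k$ under $k\sim\{\tilde p_k\}$ is identical to that of $p_{\vec{i}}/q_{\vec{i}}$ under $\vec{i}\sim\{p_{\vec{i}}\}$. Hence the ordering is irrelevant and we may work directly with the product structure.

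Next I would set $X_j:=\log_2(p_{i_j}/q_{i_j})$ (base two, consistent with the $2^{n[\cdots]}$ in the statement), so that $\log_2(p_{\vec{i}}/q_{\vec{i}})=\sum_{j=1}^n X_j$. Under the sampling law $p_{\vec{i}}=\prod_j p_{i_j}$ the coordinates $i_1,\dots,i_n$ are i.i.d. with distribution $\{p_i\}$, hence the $X_j$ are i.i.d. as well, with common mean $\mathbb{E}[X_j]=\sum_i p_i\log_2(p_i/q_i)=D(\rho\|\tau)$ (terms with $p_i=0$ drop out, and $q_i>0$ makes the logarithm finite on the support of $\rho$). Moreover each $X_j$ takes values in the fixed bounded interval $[x_{\min},x_{\max}]$ with $x_{\min}=\min_{i:p_i>0}\log_2(p_i/q_i)$ and $x_{\max}=\max_{i:p_i>0}\log_2(p_i/q_i)$, a range that depends on $\rho$ and $\tau$ but not on $n$.

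Then I would apply Hoeffding's inequality~\cite{hoeffding1963probability} to the sum $S_n=\sum_{j=1}^n X_j$, which gives $\Pr\big(|\tfrac{1}{n}S_n - D(\rho\|\tau)|\ge\delta\big)\le 2\exp\!\big(-2n\delta^2/(x_{\max}-x_{\min})^2\big)$. The complementary event $|\tfrac{1}{n}S_n - D(\rho\|\tau)|<\delta$ is precisely $2^{n[D(\rho\|\tau)-\delta]}\le p_{\vec{i}}/q_{\vec{i}}\le 2^{n[D(\rho\|\tau)+\delta]}$, which by the first step is exactly the event in the lemma; this yields the claim, the range $(x_{\max}-x_{\min})$ being an $n$-independent constant that may be absorbed into the statement (it is in any case immaterial for every later use of the lemma, which only needs exponential decay in $n$ at fixed $\delta$).

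As for difficulty: this is essentially a repackaging of the classical AEP, so there is no deep obstacle. The only points requiring care are confirming that the $\beta$-ordering genuinely plays no role (the content of the first step), ensuring the summands $X_j$ are bounded by restricting to the support of $\rho$ and using full-rankness of $\tau$, and the minor bookkeeping of the Hoeffding range constant. I would expect the first of these to be the subtlest, since it is where the specific phrasing of the lemma — $\beta$-ordered eigenvalues of $\rho^{\otimes n}$ paired with the same-ordering eigenvalues of $\tau^{\otimes n}$ — has to be matched to the clean product description on which the i.i.d. argument runs.
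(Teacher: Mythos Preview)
Your approach is identical to the paper's: both reduce to i.i.d.\ variables $X_j=\log(p_{i_j}/q_{i_j})$ with mean $D(\rho\|\tau)$ under sampling from $\{p_i\}$, and then invoke Hoeffding's inequality. You are in fact more careful than the paper in retaining the Hoeffding range factor $(x_{\max}-x_{\min})^2$ in the exponent---the paper's own proof simply writes $2e^{-2n\delta^2}$ without it---and your remark that this $n$-independent constant is immaterial for every downstream use (the AEP bounds in Appendix~\ref{subsec:proofT2}) is correct.
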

\begin{proof}
	First of all, note that if $ \rho $ is block-diagonal with respect to $  H $, then it commutes with the thermal state $\tau$. Therefore, both $\rho$ and $\tau$ can be diagonalized in the same, ordered basis. Written in such a basis, let us denote the eigenvalues of $\rho$ by $\{p_i\}_i$, and the eigenvalues of $\tau$ by $\{q_i\}_i$, and let $d$ be the dimension of $\rho$. Furthermore, without loss of generality we can order this common basis such that it corresponds to the $\beta$-ordering of $\rho$, such that
	$ p_1 e^{\beta E_1} \geq \dotsc \geq p_d e^{\beta E_d} $. Since each eigenvalue of $ \tau $ is given by $ q_i = \frac{1}{Z} e^{-\beta E_i}$, it follows directly that
	$\frac{p_1}{q_1}\geq\dotsc\geq\frac{p_d}{q_d}$.
	%Similarly, the state $\rho^{\otimes n}$ is also block-diagonal, and it commutes with $\tau^{\otimes n}$.
	
	Next, we will introduce Hoeffding's inequality. Consider the sequence $X_1,\dotsc,X_n$ of independent and identically distributed random variables, where each random variable $X_j$ can assume the values $\{\text{log}\frac{p_i}{q_i}\}_i$ according to the probability distribution $\{p_i\}_i$. We denote the average of this sequence by $\overline{X}_n = \frac{1}{n}\sum_{j=1}^{n}X_j$, and the expected value by $\mu$. Then, by Hoeffding's inequality we have that for any $\delta>0$,
	\begin{equation}
	\text{Pr}\left(\left|\overline{X}_n - \mu\right|\geq \delta \right) \leq 2e^{-2n\delta^2}.
	\end{equation}
	Substituting the average and expected value gives us
	\begin{equation}
	\text{Pr}\left[\left|\frac{1}{n}\sum_{j=1}^{n}X_j- \sum_{i=1}^{d}p_i\text{log}\frac{p_i}{q_i}\right|\geq \delta \right] \leq 2e^{-2n\delta^2}.
	\end{equation}
	We will denote the value of $X_j$ by $\text{log}\frac{p_{F(j)}}{q_{F(j)}}$, where for each $ j\in\lbrace 1,\dotsc,n\rbrace $, the quantity $F(j)$ is a random variable across the alphabet $ \lbrace 1,\dotsc,d\rbrace $, according to the probability distribution given by $ \lbrace p_i\rbrace_{i=1}^d $. This yields%Note that for any $ j\in\lbrace 1,d\rbrace $, the function $F(j)$ is by itself a random variable. This yields
	\begin{equation}
	\text{Pr}\left[~\left|\frac{1}{n}\sum_{j=1}^{n}\text{log}\frac{p_{F(j)}}{q_{F(j)}} - \sum_{i=1}^{d}p_i\text{log}\frac{p_i}{q_i}\right|\geq \delta~ \right] \leq 2e^{-2n\delta^2}.
	\end{equation}
	Notice that $\sum_{i=1}^{d}p_i\text{log}\frac{p_i}{q_i} = D(\rho||\tau)$. Therefore, this is equivalent with
	\begin{equation}
	\text{Pr}\left[~\left|\frac{1}{n}\sum_{j=1}^{n}\text{log}\frac{p_{F(j)}}{q_{F(j)}} - D(\rho||\tau)\right|\geq \delta~ \right] \leq 2e^{-2n\delta^2}.
	\end{equation}
	Multiplying the equation within the large bracket by $n$, and taking the complement yields
	\begin{equation}
	\text{Pr}\left[~\left|\sum_{j=1}^{n}\text{log}\frac{p_{F(j)}}{q_{F(j)}} - nD(\rho||\tau)\right|\leq n\delta ~\right] \geq 1 - 2e^{-2n\delta^2}.
	\end{equation}
	If we now rewrite the sum of logarithms into a single logarithm, we get
	\begin{eqnarray*}
&\text{Pr}&\left\lbrace-n\delta \leq \left[\text{log}\prod_{j=1}^{n}\frac{p_{F(j)}}{q_{F(j)}} - nD(\rho||\tau)\right]\leq n\delta \right\rbrace \\ &\geq& 1 - 2e^{-2n\delta^2}.
	\end{eqnarray*}
	Finally, adding $nD(\rho||\tau)$ to the equation and exponentiating gives us
	\begin{eqnarray}\label{eq:typicalsubspace}
	&\text{Pr}&\left[2^{n\left(D(\rho||\tau)-\delta\right)} \leq \prod_{j=1}^{n}\frac{p_{F(j)}}{q_{F(j)}} \leq 2^{n\left(D(\rho||\tau)+\delta\right)} \right]\nonumber\\
	 &\geq& 1 - 2e^{-2n\delta^2}.
	\end{eqnarray}	
	The products $\prod_{j=1}^{n}p_{F(j)}$ and $\prod_{j=1}^{n}q_{F(j)}$, for any possible values of $ F(j) $ (There are $ d^n $ such different eigenvalues) are precisely eigenvalues of $\rho^{\otimes n}$ and $\tau^{\otimes n}$. This means that the desired inequality holds. For most of the probability mass of $ \tilde p_k$ for $k\in \lbrace 1,d^n\rbrace $, the value of $ \frac{\tilde p_k}{\tilde q_k} $ lies within the interval given in Eq.~\eqref{eq:typicalsubspace}.
\end{proof}

%With this, we can now move on to proving Theorem \ref{thm:AEP}.\\[5pt]
\subsection{Proof of Theorem \ref{thm:AEP}}\label{subsec:proofT2}
%\begin{Theorem} \label{Thm:AEP}
%	The new divergences defined in equation \ref{eq:newdiv} satisfy the asymptotic equipartition property, such that for $0<\epsilon<1$ and for all $\alpha\in\mathbb{R}$,
%	\begin{equation}
%	\lim_{n\rightarrow\infty}\frac{1}{n}\hat{D}_\alpha^\epsilon(\rho^{\otimes n}||\tau^{\otimes n}) = D(\rho||\tau).
%	\end{equation}	
%\end{Theorem}
%\begin{proof}	
%\textit{Proof of Theorem \ref{thm:AEP}}
%\begin{myproof}{Theorem \ref{thm:AEP}}
	For all $\alpha\geq 0$, we will try to find functions $f=f(n,\epsilon,\rho,\tau)$ and $g=g(n,\epsilon,\rho,\tau)$, such that
	\begin{equation}
		D(\rho||\tau)-f\leq \frac{1}{n}\hat{D}_\alpha^\epsilon(\rho^{\otimes n}||\tau^{\otimes n}) \leq D(\rho||\tau)+g,
	\end{equation}
	with these functions converging to $0$ as $n$ grows large and $\epsilon$ becomes small \footnote{For notational convenience, we drop the explicit dependence of $ f $ and $ g $ on the variables $ n,\varepsilon,\rho $ and $ \tau $ for the moment; but it will be helpful for the reader to take note that these functions will later vanish for all possible $ \rho,\tau $, in the limit $ \varepsilon\rightarrow 0 $ and $ n\rightarrow\infty $. }. It will become clear later that these functions do not converge to $0$ for all values of $\alpha$, if we fixate either $\epsilon$ or $n$. Since the smoothing procedure is different for regimes $ \alpha\in [0,1] $ and $ \alpha >1 $, we shall split the analysis into two different parts.
	
	We first consider the region $0\leq\alpha\leq1$. For $\alpha=0$, our new smoothed divergence is equal to the R\'enyi divergence of the steep state. Let us denote the eigenvalues of the steep state by $\{\hat{p}_k\}_k$ for $k=1,\dotsc,d^n$. By the definition of $ \hat D_\alpha $ in Eq.~\eqref{eq:newdiv}, we have that
	\begin{align}
	\frac{1}{n}\hat{D}_0^\epsilon(\rho^{\otimes n}||\tau^{\otimes n}) & = \frac{1}{n}D_0((\rho^{\otimes n})_{\rm steep}^\varepsilon||\tau^{\otimes n})\nonumber\\
	& = -\frac{1}{n}\text{log}\sum_{k:\hat{p}_k>0}\tilde{q}_k. \label{eq:lim0}
	\end{align}	
	For any given $ \varepsilon $, we obtain $\hat{p}_k$ from $\tilde{p}_k$ (defined in Lemma \ref{lemma:hoeffding}) by cutting off all the eigenvalues for which the ratio $\frac{\tilde{p}_k}{\tilde{q}_k} \leq \gamma$, where $ \gamma $ is a real-valued parameter that depends on $ \varepsilon $. In particular, if $\gamma\geq L = 2^{n\left[D(\rho||\tau)-\delta\right]}$, with $\delta = \sqrt{\frac{1}{2n}\ln\left(\frac{2}{\varepsilon}\right)}$, then 
	% We will do this by verifying that we can at least cut all eigenvalues for which the ratio is smaller than $L$. Recall that we can cut the eigenvalues $\tilde{p}_k$ by up to $\epsilon>0$. B
	by Lemma~\ref{lemma:hoeffding}, we have that
	\begin{equation}\label{eq:hoeff}
	\sum_{k:\frac{\tilde{p}_k}{\tilde{q}_k}<L}\tilde{p}_k \leq 2e^{-2n\delta^2} = \varepsilon,
	\end{equation}	
	and therefore $ \hat p $ is $ \varepsilon $-close to $ \tilde p $.
	This means we can cut into the $ \delta $-typical region for $\frac{\tilde{p}_k}{\tilde{q}_k}$. Now we will use the fact that $\gamma\geq L$ to lower bound the right hand side of Eq.~\eqref{eq:lim0}. Since we cut at least all eigenvalues for which $\frac{\tilde{p}_k}{\tilde{q}_k}<L$, we have that for all $k$ for which $\hat{p}_k>0$, $\frac{\tilde{p}_k}{\tilde{q}_k}\geq L$. This means that $\tilde{q}_k\leq\frac{\tilde{p}_k}{L}$. Thus, 
	\begin{align*}
	-\frac{1}{n}\log \sum_{k:\hat{p}_k>0}\tilde{q}_k & \geq  -\frac{1}{n}\log \sum_{k:\hat{p}_k>0}\frac{\tilde{p}_k}{L}\\
	& \geq \frac{1}{n}\log L\\
	& = D(\rho||\tau)-\sqrt{\frac{1}{2n}\ln\left(\frac{2}{\varepsilon}\right)}.
	\end{align*}
	Note that when $ n\rightarrow\infty $, this bound converges to $ D(\rho||\tau) $ even for a finite value of $\varepsilon >0$.
	
	Next, we will give an upper bound for $\frac{1}{n}\hat{D}^\varepsilon(\rho^{\otimes n}||\tau^{\otimes n})$. Since the steepest state cuts away a probability mass of $\varepsilon$, if we denote the value $U = 2^{n(D(\rho||\tau)+\delta)}$, then we may write 
	\begin{align}
	&\frac{1}{n}\hat{D}^\epsilon(\rho^{\otimes n}||\tau^{\otimes n}) \\
	& = \frac{1}{n}D((\rho^{\otimes n})_{\rm steep}^\varepsilon||\tau^{\otimes n})\nonumber\\ 
	& = \frac{1}{n}\sum_{i:\frac{\tilde{p}_i}{\tilde{q}_i}>U} \hat{p}_i\log\frac{\hat{p}_i}{\tilde{q}_i} + \frac{1}{n}\sum_{i:\frac{\tilde{p}_i}{\tilde{q}_i}\leq U} \hat{p}_i\log\frac{\hat{p}_i}{\tilde{q}_i}\nonumber\\
	& \leq \frac{1}{n}\hat{p}_1\log{\frac{\hat{p}_1}{\tilde{q}_1}} + \frac{1}{n}\sum_{i>1:\frac{\tilde{p}_i}{\tilde{q}_i}>U} \tilde{p}_i\log\frac{\tilde{p}_1}{\tilde{q}_1} + \frac{1}{n}\sum_{i:\frac{\tilde{p}_i}{\tilde{q}_i}\leq U} \tilde{p}_iU\nonumber\\
	& \leq \frac{1}{n}(\tilde{p}_1+\varepsilon)\log{\frac{\tilde{p}_1+\varepsilon}{\tilde{q}_1}} +\frac{2}{n}e^{-n\delta^2}\log\frac{\tilde{p}_1}{\tilde{q}_1} + \frac{1-\varepsilon}{n}U\nonumber\\
	& = (1-\varepsilon)[D(\rho||\tau)+\delta] +  \frac{1}{n}(\tilde{p}_1+\varepsilon)\log{\frac{\tilde{p}_1+\varepsilon}{\tilde{q}_1}} +2e^{-n\delta^2}\log\frac{p_1}{q_1}\nonumber\\
	&=: D(\rho\|\tau) + g_1(n,\varepsilon,\rho,\tau).\label{eq:uppbound_steep}
	\end{align}
	For clarity, let us first write out 
	\begin{align}\label{eq:g1}
	g_1(n,\varepsilon,\rho,\tau):=& (1-\varepsilon)\delta - \varepsilon D(\rho\|\tau) + \frac{1}{n}(\tilde{p}_1+\varepsilon)\log{\frac{\tilde{p}_1+\varepsilon}{\tilde{q}_1}}\nonumber\\ &+2e^{-n\delta^2}\log\frac{p_1}{q_1}
	\end{align}
	Let us observe the terms left in Eq.~\eqref{eq:g1}, in the limit when $ \varepsilon\rightarrow 0$ and $ n\rightarrow\infty $, furthermore in a way such that $ \delta $ as defined in Eq.~\eqref{eq:hoeff} goes to zero as well (for example, one may take $ \varepsilon = n^{-1} $). Since $ D(\rho\|\tau) $ is upper bounded by $ \log d $, the first two terms will vanish in this limit. Next, note that $ \tilde{p_1} = p_1^n $ and $ \tilde{q_1} = q_1^n $, where $ p_1,q_1 $ are simply the eigenvalues of $ \rho,\tau $ that maximize $ \beta $-ordering. Therefore, the third term vanishes as long as $ \frac{\varepsilon \log\varepsilon}{n} \rightarrow 0$, which is true whenever $ \delta\rightarrow 0 $. Lastly, note that $ 2e^{-n\delta^2} = \sqrt{2\varepsilon} $, and since $ \frac{p_1}{q_1} $ is just a constant where $ q_1 >0 $ (the thermal state has full rank), the last term vanishes as well. This implies that $ g_1(n,\varepsilon,\rho,\tau)\rightarrow 0 $ for all $ \rho,\tau $. %However, to have such a convergence for Eq.~\eqref{eq:uppbound_steep}, the limit $ \varepsilon\rightarrow 0 $ is necessary.
	
	By using the fact that the modified smoothed divergences in this region are given by the R\'enyi divergence of a single state, we can apply these bounds to the entire region $0<\alpha\leq1$; the R\'enyi divergences are monotonic in $\alpha$, such that
	\begin{align}\label{eq:bounds0to1}
	D(\rho||\tau) - \delta &\leq \frac{1}{n}\hat{D}_0^\epsilon(\rho^{\otimes n}||\tau^{\otimes n})\nonumber\\
	&\leq \frac{1}{n}\hat{D}_\alpha^\epsilon(\rho^{\otimes n}||\tau^{\otimes n})\nonumber\\
	&\leq \frac{1}{n}\hat{D}^\epsilon(\rho^{\otimes n}||\tau^{\otimes n})\nonumber\\
	&\leq D(\rho||\tau) + g_1(n,\epsilon,\rho,\tau),
	%(1-\varepsilon)(D(\rho||\tau)+\delta) +  \frac{1}{n}(\tilde{p}_1+\varepsilon)\log{\frac{\tilde{p}_1+\varepsilon}{\tilde{q}_1}} +2e^{-n\delta^2}\log\frac{p_1}{q_1}.
	\end{align}
	which concludes the proof for this regime of $ \alpha\in [0,1] $.
	
	Next, we consider the region $\alpha >1$. In this region, our divergences are smoothed towards the flattest state,
	\begin{equation}
	\hat{D}_\alpha^\epsilon(\rho^{\otimes n}||\tau^{\otimes n}) = D_\alpha((\rho^{\otimes n})_{\text{fl}}^\varepsilon||\tau^{\otimes n}).
	\end{equation}
	We start by looking at an upper bound of $\frac{1}{n}\hat{D}_\infty^\epsilon(\rho^{\otimes n}||\tau^{\otimes n})$. Note that $ D_\infty (\rho\|\sigma) := \lim_{\alpha\rightarrow\infty} D_\alpha (\rho\|\sigma) $, and for $ \rho,\sigma $ that commute and have ordered eigenvalues $ \{p_i\}$ and $ \{q_i\} $ respectively, this quantity has a simplified expression:
	\begin{equation}\label{key}
	D_\infty (\rho\|\sigma) = \log \max_i \frac{p_i}{q_i}.
	\end{equation}
	Note that for the flattest state, given some $ \varepsilon >0 $, one can obtain the flattest state with eigenvalues $ \lbrace \hat p_i\rbrace_i $, which has a distance exactly $ \varepsilon $-close to $ \rho^{\otimes n} $. Let $ L $ be the real-valued parameter, such that all eigenvalues for which $\frac{\tilde{p}_i}{\tilde{q}_i} \geq L$ are partially decreased, to obtain the new values $\frac{\hat{p}_i}{\tilde{q}_i}= L$ instead. Therefore, $ L $ corresponds to the largest $ \beta $-ordering gradient for the flattest state. 
	
	One can upper bound $ L $ by using Hoeffding's inequality to conclude that for $\delta = \sqrt{\frac{1}{2n}\ln\left(\frac{2}{\epsilon}\right)}$, we have 
	\begin{align}
	\sum_{i:\frac{\tilde{p}_i}{\tilde{q}_i}\geq n\left[D(\rho||\tau)+\delta\right]} \tilde{p}_i \leq 2e^{-2n\delta^2} = \epsilon.
	\end{align}
	This means that one would be able to cut through all eigenvalues of $ \tilde p_i $ where $ \frac{\tilde p_i}{q_i} \geq n\left[D(\rho||\tau)+\delta\right] $, and therefore 
	\begin{equation}
	L\leq n\left[D(\rho||\tau)+\delta\right].
	\end{equation}
%	In particular, we show that any value of  %may be chosen, with . Similar as the previous proof, we may invoke 
	
%	From this, it follows that 	if $L\leq n\left(D(\rho||\tau)+\delta\right)$,
%	\begin{align}
%	\sum_{i:\frac{\tilde{p}_i}{\tilde{q}_i}\geq L} \tilde{p}_i - \hat{p}_i \leq \sum_{i:\frac{\tilde{p}_i}{\tilde{q}_i}\geq n\left(D(\rho||\tau) + \delta\right)} \tilde{p}_i \leq 2e^{-2n\delta^2} = \epsilon.
%	\end{align}
Thus, we can always obtain a new distribution $ \hat p $ such that $ \frac{\hat p_i}{\tilde q_i} \leq L $ holds for all eigenvalues $ \hat p_i $. For the eigenvalue of the flattest state which has largest $ \beta $-ordering, this yields
	\begin{align}
	\frac{1}{n}\hat{D}_\infty^\epsilon(\rho^{\otimes n}||\tau^{\otimes n}) &\leq \frac{1}{n} \log (L) \nonumber\\
	&= D(\rho||\tau)+\sqrt{\frac{1}{2n}\ln\left(\frac{2}{\epsilon}\right)}.
	\end{align}
	
	Next, we look for a lower bound for the case of $\alpha=1$. To do so, we need to analyze another quantity: denote $L'$ as the smallest $\beta$-value of the flattest state (therefore, $ L' \leq L $). Let us try to find a lower bound for $ L' $. This can be done by noting that, the total probability mass of the smallest $\beta$-factors will be larger than $\varepsilon$, since $\varepsilon$ is distributed across these eigenvalues. More precisely, if we consider the set $ S = \lbrace i|\frac{\hat p_i}{q_i} = L'\rbrace $, then $ {\rm Prob} (S) \geq \varepsilon $. Therefore,
	\begin{equation}
	\sum_{i\in S} L' q_i \geq \varepsilon \qquad\implies\qquad L' \geq \frac{\varepsilon}{\sum_{i\in S}q_i} \geq \varepsilon.
	\end{equation}
	
	Therefore, for $U = 2^{n(D(\rho||\tau)+\delta)}$ and $M = 2^{n(D(\rho||\tau)-\delta)}$, we have that
	\begin{align}\label{eq:lowbound_flat}
	&\frac{1}{n}\hat{D}^\epsilon(\rho^{\otimes n}||\tau^{\otimes n}) \nonumber\\
	&= \frac{1}{n}D((\rho^{\otimes n})_{\text{fl}}^\varepsilon||\tau^{\otimes n})\nonumber\\
	& = \frac{1}{n}\sum_i \hat{p}_i \log\frac{\hat{p}_i}{\tilde{q}_i}\nonumber\\		
	& \geq \frac{1}{n}\sum_{i: M\leq\frac{\tilde{p}_i}{\tilde{q}_i}\leq U} \hat{p}_i \log\frac{\hat{p}_i}{\tilde{q}_i} + \frac{1}{n}\sum_{i: \frac{\tilde{p}_i}{\tilde{q}_i}<M} \hat{p}_i \log L'\nonumber\\
	& \geq (1-2e^{-n\delta^2}-\varepsilon)\cdot [D(\rho||\tau)-\delta] - \frac{1}{n}\log \frac{1}{\varepsilon} \sum_{i: \frac{\tilde{p}_i}{\tilde{q}_i}<M} \hat{p}_i \nonumber\\
& \geq (1-2e^{-n\delta^2}-\varepsilon)\cdot [D(\rho||\tau)-\delta] - \frac{1}{n} \log \frac{1}{\varepsilon}\nonumber\\
	&=: D(\rho\|\tau) - g_2(n,\varepsilon,\rho,\tau),
	\end{align}
	where in the last inequality, since $ \log\frac{1}{\varepsilon} > 0 $, we can use the bound $\sum_{i: \frac{\tilde{p}_i}{\tilde{q}_i}<M} \hat{p}_i  \leq 1$. Let us again write out
	\begin{equation}\label{eq:g2}
	g_2(n,\varepsilon,\rho,\tau):= \delta + (2e^{-n\delta^2}+\varepsilon) [D(\rho\|\tau)-\delta]-\frac{\log\varepsilon}{n}.
	\end{equation}
	 Note that we are taking the limit $ \varepsilon\rightarrow 0 $ and $ n\rightarrow\infty $ such that $ \delta $ also vanishes. Since $ D(\rho\|\tau) $ is upper bounded, the second term also vanishes. Finally, the last term vanishes as long as $ \delta $ vanishes as well.
	 %, for example one may take the case where $ \varepsilon = \frac{1}{n} $. Moreover, if we take $ \varepsilon = \frac{1}{n}  $ and $ n\rightarrow\infty $, then the first term converges to $ D(\rho\|\tau) $, so we have that $ g_2(n,\varepsilon,\rho,\tau) \rightarrow 0$.
	
	Thus, for the regime $ \alpha >1 $, one may conclude that
%	\begin{equation}
%	\frac{1}{n}\hat{D}_1^\epsilon(\rho^{\otimes n}||\tau^{\otimes n}) \geq (1-2e^{-n\delta^2}-\varepsilon)[D(\rho||\tau)-\delta] - \frac{\varepsilon}{n}\log \varepsilon.\\
%	\end{equation}
	%Similar to the region $0\leq\alpha\leq 1$, the modified smoothed divergence is given by the R\'enyi divergence of a single state. 
%	Thus, we can apply monotonicity of the R\'enyi divergences to again obtain
	\begin{align}
	D(\rho\|\tau) - g_2(n,\varepsilon,\rho,\tau)
	&\leq \frac{1}{n}\hat{D}^\epsilon(\rho^{\otimes n}||\tau^{\otimes n})\nonumber\\
	&\leq \frac{1}{n}\hat{D}_\alpha^\epsilon(\rho^{\otimes n}||\tau^{\otimes n})\nonumber\\
	&\leq \frac{1}{n}\hat{D}_\infty^\epsilon(\rho^{\otimes n}||\tau^{\otimes n})\nonumber\\
	&\leq D(\rho||\tau) + \delta.
	\end{align}	
%\end{myproof}

%\end{proof}
By combining all the bounds we proved here in Section \ref{subsec:proofT2}, one can also show that given finite values of $ n $ and $ \varepsilon $, it suffices to check only \emph{a single sufficient condition} (in contrast with a continuous family of inequalities) for the approximate state transition $ (\rho^{\otimes n})_{\rm steep}^\varepsilon \rightarrow (\sigma^{\otimes n})_{\rm fl}^\varepsilon $.

\begin{corollary}\label{cor:onecondition}
	Consider states $ \rho, \sigma $, and for any real number $ \beta >0 $ and a Hamiltonian $ H $, let $ \tau_\beta = \frac{1}{\tr(e^{- \beta H})}e^{- \beta H}$. Moreover, consider any positive integer $ n $ and any $ \varepsilon >0 $. If
	\begin{equation*}\label{key}
F(\rho,\tau_\beta)	\geq F(\sigma,\tau_\beta) + \beta^{-1}\Delta(n,\varepsilon,\rho,\sigma,\tau_\beta),
	\end{equation*}
	is satisfied, where 
	\begin{equation}\label{key}
	\Delta(n,\varepsilon,\rho,\sigma,\tau_\beta):=\delta + f(n,\varepsilon,\rho,\sigma,\tau_\beta),
	\end{equation}
	where the first term $ \delta = \sqrt{\frac{1}{2n}\ln\frac{2}{\varepsilon}} $ and the second term
	\begin{equation*}\label{key}
	f(n,\varepsilon,\rho,\sigma,\tau_\beta) := {\rm max} ~[g_1(n,\varepsilon,\sigma,\tau_\beta),g_2(n,\varepsilon,\rho,\tau_\beta)],
	\end{equation*}
	$ g_1(n,\varepsilon,\sigma,\tau_\beta),g_2(n,\varepsilon,\rho,\tau_\beta) $ defined in Eq.~\eqref{eq:g1} and~\eqref{eq:g2}, then the transition $ (\rho^{\otimes n})_{\rm steep}^\varepsilon \rightarrow (\sigma^{\otimes n})_{\rm fl}^\varepsilon $ is possible via Thermal Operations, with a bath being of inverse temperature $ \beta $.
\end{corollary}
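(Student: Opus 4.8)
The plan is to deduce the corollary from Theorem~\ref{thm:equiv} applied to the $n$-fold states, after using the asymptotic bounds established in Section~\ref{subsec:proofT2} to turn the single inequality $F(\rho,\tau_\beta)\ge F(\sigma,\tau_\beta)+\beta^{-1}\Delta(n,\varepsilon,\rho,\sigma,\tau_\beta)$ into the full family $\hat F_\alpha^\varepsilon(\rho^{\otimes n},\tau_\beta^{\otimes n})\ge\hat F_\alpha^\varepsilon(\sigma^{\otimes n},\tau_\beta^{\otimes n})$, $\alpha\ge0$, required by that theorem. The only algebraic input is that, since $Z_\beta^{(n)}=(Z_\beta)^n$, the definition~\eqref{eq:defgenf} gives $\tfrac1n\hat F_\alpha^\varepsilon(\omega^{\otimes n},\tau_\beta^{\otimes n})-F(\omega,\tau_\beta)=\beta^{-1}\big[\tfrac1n\hat D_\alpha^\varepsilon(\omega^{\otimes n}\|\tau_\beta^{\otimes n})-D(\omega\|\tau_\beta)\big]$ for any block-diagonal $\omega$, so (using $\beta>0$) inequalities between free energies and inequalities between divergences are interchangeable.

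First I would extract from Section~\ref{subsec:proofT2} the one-sided bounds actually needed, in each of the two smoothing regimes. For the target state I need an \emph{upper} bound: from the chain~\eqref{eq:bounds0to1}, $\tfrac1n\hat D_\alpha^\varepsilon(\sigma^{\otimes n}\|\tau_\beta^{\otimes n})\le D(\sigma\|\tau_\beta)+g_1(n,\varepsilon,\sigma,\tau_\beta)$ for $0\le\alpha\le1$, and from the corresponding chain in the $\alpha>1$ part, $\tfrac1n\hat D_\alpha^\varepsilon(\sigma^{\otimes n}\|\tau_\beta^{\otimes n})\le D(\sigma\|\tau_\beta)+\delta$ for $\alpha>1$, with $\delta=\sqrt{\tfrac{1}{2n}\ln\tfrac{2}{\varepsilon}}$. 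For the source state I need a \emph{lower} bound, which the same chains supply in the mirrored form: $\tfrac1n\hat D_\alpha^\varepsilon(\rho^{\otimes n}\|\tau_\beta^{\otimes n})\ge D(\rho\|\tau_\beta)-\delta$ for $0\le\alpha\le1$ and $\ge D(\rho\|\tau_\beta)-g_2(n,\varepsilon,\rho,\tau_\beta)$ for $\alpha>1$. It is here that I would invoke the monotonicity of $D_\alpha$ in $\alpha$ used in Section~\ref{subsec:proofT2}, which is what makes these hold throughout each regime and not merely at its endpoints.

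Then I would combine these regime by regime. For $0\le\alpha\le1$, the desired $\hat F_\alpha^\varepsilon(\rho^{\otimes n})\ge\hat F_\alpha^\varepsilon(\sigma^{\otimes n})$ follows once $F(\rho,\tau_\beta)-\beta^{-1}\delta\ge F(\sigma,\tau_\beta)+\beta^{-1}g_1(n,\varepsilon,\sigma,\tau_\beta)$; for $\alpha>1$, once $F(\rho,\tau_\beta)-\beta^{-1}g_2(n,\varepsilon,\rho,\tau_\beta)\ge F(\sigma,\tau_\beta)+\beta^{-1}\delta$. Both are subsumed in $F(\rho,\tau_\beta)\ge F(\sigma,\tau_\beta)+\beta^{-1}\big(\delta+\max[g_1(n,\varepsilon,\sigma,\tau_\beta),g_2(n,\varepsilon,\rho,\tau_\beta)]\big)$, which is exactly the hypothesis with $\Delta=\delta+f$ and $f=\max[g_1(n,\varepsilon,\sigma,\tau_\beta),g_2(n,\varepsilon,\rho,\tau_\beta)]$. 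Hence $\hat F_\alpha^\varepsilon(\rho^{\otimes n},\tau_\beta^{\otimes n})\ge\hat F_\alpha^\varepsilon(\sigma^{\otimes n},\tau_\beta^{\otimes n})$ for all $\alpha\ge0$, and Theorem~\ref{thm:equiv} yields the transition $(\rho^{\otimes n})_{\rm steep}^\varepsilon\to(\sigma^{\otimes n})_{\rm fl}^\varepsilon$ via (catalytic) thermal operations. I would then argue that a catalyst is not actually needed: the estimates of Section~\ref{subsec:proofT2} were obtained by pinning down, through the $\delta$-typical subspace of Lemma~\ref{lemma:hoeffding}, the support of $(\rho^{\otimes n})_{\rm steep}^\varepsilon$ and the maximal gradient of the thermo-majorization curve of $(\sigma^{\otimes n})_{\rm fl}^\varepsilon$, and under the hypothesis these features are ordered so as to force $c_{(\rho^{\otimes n})_{\rm steep}^\varepsilon}$ to lie above $c_{(\sigma^{\otimes n})_{\rm fl}^\varepsilon}$ at every point.

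The algebra is routine; the points needing care are the bookkeeping of which correction attaches to which state — $g_1$ at the target $\sigma$ because it is the cost of the \emph{upper} bound (relevant for $0\le\alpha\le1$), $g_2$ at the source $\rho$ because it is the cost of the \emph{lower} bound (relevant for $\alpha>1$) — together with checking that the two $\delta$-contributions coming from opposite sides merge into a single $\delta$ rather than $2\delta$. If one insists on the transition being realised by a thermal operation rather than a catalytic one, the most delicate step is the pointwise curve comparison just described, where the explicit constructions of Section~\ref{subsec:construction} and the edge cases in choosing the cut indices $R$, $M$, $N$ must be handled with some care. The vanishing of $\Delta$ as $n\to\infty$, $\varepsilon\to0$ was already established in Section~\ref{subsec:proofT2}, so nothing further is required for the asymptotic statement.
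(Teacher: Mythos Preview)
Your proposal is correct and follows essentially the same approach as the paper: reduce to Theorem~\ref{thm:equiv} for the $n$-fold states, then use the lower bounds for $\rho$ and upper bounds for $\sigma$ from Section~\ref{subsec:proofT2} in each of the two regimes $\alpha\in[0,1]$ and $\alpha>1$, obtaining exactly the two sufficient conditions $F(\rho,\tau_\beta)\ge F(\sigma,\tau_\beta)+\beta^{-1}[\delta+g_1(n,\varepsilon,\sigma,\tau_\beta)]$ and $F(\rho,\tau_\beta)\ge F(\sigma,\tau_\beta)+\beta^{-1}[\delta+g_2(n,\varepsilon,\rho,\tau_\beta)]$, and then take the maximum.

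One remark: you go beyond the paper in attempting to argue that the transition is achievable by a thermal operation rather than merely a catalytic one. The paper's own proof simply invokes Theorem~\ref{thm:equiv}, whose conclusion is a CTO transition, and does not separately justify the ``Thermal Operations'' wording in the corollary statement. Your sketch of a thermo-majorization curve comparison is a reasonable idea but would need more work to be a complete argument; for the purposes of matching the paper's proof, the CTO conclusion from Theorem~\ref{thm:equiv} is all that is actually established there.
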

\begin{proof}
	By Theorem \ref{thm:equiv}, if $\hat F_\alpha^\varepsilon(\rho^{\otimes n},\tau_\beta^{\otimes n}) \geq \hat F_\alpha^\varepsilon(\sigma^{\otimes n},\tau_\beta^{\otimes n}) $ for all $ \alpha\geq 0 $, then the transition $ (\rho^{\otimes n})_{\rm steep}^\varepsilon \rightarrow (\sigma^{\otimes n})_{\rm fl}^\varepsilon $ is possible. Taking the definition in Eq.~\eqref{eq:defgenf}, this translates to 
	\begin{equation}\label{eq:coro1}
	\beta^{-1} \frac{1}{n} \hat D_\alpha(\rho^{\otimes n}\|\tau_\beta^{\otimes n}) \geq \beta^{-1} \frac{1}{n} \hat D_\alpha(\sigma^{\otimes n}\|\tau_\beta^{\otimes n}).
	\end{equation} 
	Eq.~\eqref{eq:coro1} will be satisfied for all $ \alpha\geq 0 $ if it is satisfied for both regimes $ \alpha\in [0,1] $ and $ \alpha\in (1,\infty) $. Therefore, let us look at the first regime: by using the upper bound in Eq.~\eqref{eq:bounds0to1} for $ \sigma $ and corresponding the lower bound for $ \rho $, we have one condition:
	\begin{equation}\label{key}
	D(\rho\|\tau_\beta)-\delta \geq D(\sigma\|\tau_\beta)+ g_1 (n,\varepsilon,\rho,\tau_\beta)
	\end{equation}
	which is sufficient for Eq.~\eqref{eq:coro1}, and can be rewritten as 
	\begin{equation}\label{eq1}
	F(\rho,\tau_\beta)\geq F(\sigma,\tau_\beta)+ \beta^{-1} [\delta +g_1 (n,\varepsilon,\sigma,\tau_\beta)].
	\end{equation}
	Similarly, for the second regime $ \alpha\in(1,\infty) $ one can also find the sufficient condition
	\begin{equation}\label{eq2}
	F(\rho,\tau_\beta)\geq F(\sigma,\tau_\beta)+ \beta^{-1} [\delta +g_2 (n,\varepsilon,\rho,\tau_\beta)].
	\end{equation}
	Since we need both Eqns.~\eqref{eq1} and \eqref{eq2} to hold, taking the maximum between $ g_1 (n,\varepsilon,\sigma,\tau_\beta) $ and $ g_2 (n,\varepsilon,\rho,\tau_\beta) $ suffices. Moreover, let us recall that in the limit $ \varepsilon\rightarrow 0 $, $ n\rightarrow\infty $ such that $ \delta\rightarrow 0 $ as well, we have that both $ g_1 (n,\varepsilon,\sigma,\tau_\beta) $ and $ g_2 (n,\varepsilon,\rho,\tau_\beta) $ vanish, hence recovering $ F(\rho,\tau_\beta)\geq F(\sigma,\tau_\beta) $ as the sufficient condition.
\end{proof}

\section{Technical Lemmas}\label{app:D_TL}
Here, we present a few technical tools that were used in the proofs of Theorem \ref{thm:flattest_exists}, in order to establish the construction of the flattest state $ \flattest $. These tools involve the functions
\begin{equation}\label{key}
F(m) = \sum_{i=1}^m p_i - p_{m+1} e^{\beta E_{m+1}}\sum_{i=1}^{m} e^{-\beta E_i}, \quad m\in\lbrace 1,d-1\rbrace,
\end{equation}
and 
\begin{equation}\label{key}
G(m) = p_{m-1} e^{\beta E_{m-1}} \sum_{i=m}^d e^{-\beta E_i} - \sum_{i=m}^d p_i,
\end{equation}
defined for $ \beta $-ordered eigenvalues of a block-diagonal state $ \rho $, denoted as $ \lbrace p_i\rbrace_i $.

Before starting, since we need to compare the value of these functions to the trace distance between $ \rho $ and $ \tau_\beta $, let us first rewrite $ \delta(\rho,\tau_\beta) $ into a more convenient expression. We have already seen that 
\begin{equation}\label{eq:td_pbetaordered}
\delta(\rho,\tau_\beta) = \sum_{i: p_i \geq \tau_i} p_i-\tau_i.
\end{equation}
We know that $ \lbrace p_i\rbrace_i $ has been $ \beta $-ordered. Moreover, we also know that for the constant $ Z_\beta $, $ p_i e^{\beta E_i} \geq \frac{1}{Z_\beta} $
is equivalent to $ p_i \geq \tau_i $. Therefore, the summation in Eq.~\eqref{eq:td_pbetaordered} may be simplified: there exists some integer $ 1\leq k\leq d-1 $ such that 
\begin{equation}\label{eq:k_td}
\delta(\rho,\tau_\beta) = \sum_{i=1}^k p_i-\tau_i = \sum_{i=1}^k p_i - \frac{1}{Z_\beta}\sum_{i=1}^k e^{-\beta E_i}.
\end{equation}
With this knowledge, we may proceed to prove certain properties of $ F(m) $ and $ G(m) $ in the subsequent lemmas.
\begin{Lemma}\label{lem:Fm_properties}
	The function $ F(m) $ is non-decreasing with respect to $ m $, while $ F(1)\geq 0 $. Moreover,  let $ 1\leq k\leq d-1 $ such that 
	\begin{equation}\label{eq:D5}
	\delta(\rho,\tau_\beta) = \sum_{i=1}^k p_i-\tau_i = \sum_{i=1}^k p_i - \frac{1}{Z_\beta}\sum_{i=1}^k e^{-\beta E_i}.
	\end{equation}
	%where $ \delta(\rho,\tau_\beta) $ is the trace distance between $ \rho $ and the thermal state $ \tau_\beta $.
	Then we have $ F(k) \geq \delta(\rho,\tau_\beta) $. This also automatically implies that $ F(d-1) \geq \delta(\rho,\tau_\beta) $.
\end{Lemma}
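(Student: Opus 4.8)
The plan is to establish the three assertions of the lemma in turn, each reducing to the $\beta$-ordering $p_1 e^{\beta E_1}\geq\cdots\geq p_d e^{\beta E_d}$ of the eigenvalues together with the defining property of the index $k$.

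First I would dispose of $F(1)\geq 0$ directly: writing $F(1)=p_1-p_2 e^{\beta E_2}e^{-\beta E_1}=e^{-\beta E_1}\bigl(p_1 e^{\beta E_1}-p_2 e^{\beta E_2}\bigr)$, non-negativity is immediate since $e^{-\beta E_1}>0$ and the bracket is non-negative by $\beta$-ordering. Next, for the monotonicity I would compute the increment explicitly,
\[
F(m+1)-F(m)=p_{m+1}+p_{m+1}e^{\beta E_{m+1}}\sum_{i=1}^{m}e^{-\beta E_i}-p_{m+2}e^{\beta E_{m+2}}\sum_{i=1}^{m+1}e^{-\beta E_i}.
\]
The one algebraic step that matters here is the identity $p_{m+1}+p_{m+1}e^{\beta E_{m+1}}\sum_{i=1}^{m}e^{-\beta E_i}=p_{m+1}e^{\beta E_{m+1}}\sum_{i=1}^{m+1}e^{-\beta E_i}$, obtained by absorbing $p_{m+1}=p_{m+1}e^{\beta E_{m+1}}e^{-\beta E_{m+1}}$ into the sum. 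Substituting this gives
\[
F(m+1)-F(m)=\bigl(p_{m+1}e^{\beta E_{m+1}}-p_{m+2}e^{\beta E_{m+2}}\bigr)\sum_{i=1}^{m+1}e^{-\beta E_i}\geq 0,
\]
again by $\beta$-ordering. Hence $F$ is non-decreasing on $\{1,\dots,d-1\}$.

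For the inequality $F(k)\geq\delta(\rho,\tau_\beta)$ I would invoke the characterization of $k$. Since $\tau_i=e^{-\beta E_i}/Z_\beta$, the condition $p_i\geq\tau_i$ is equivalent to $p_i e^{\beta E_i}\geq 1/Z_\beta$; because the quantities $p_i e^{\beta E_i}$ are non-increasing in $i$, the set $\{i:p_i\geq\tau_i\}$ is an initial segment $\{1,\dots,k\}$, so in particular $p_{k+1}e^{\beta E_{k+1}}\leq 1/Z_\beta$. Replacing $p_{k+1}e^{\beta E_{k+1}}$ by the larger number $1/Z_\beta$ in the (negative) second term of $F(k)=\sum_{i=1}^{k}p_i-p_{k+1}e^{\beta E_{k+1}}\sum_{i=1}^{k}e^{-\beta E_i}$ can only decrease its value, so
\[
F(k)\geq\sum_{i=1}^{k}p_i-\frac{1}{Z_\beta}\sum_{i=1}^{k}e^{-\beta E_i}=\delta(\rho,\tau_\beta),
\]
the last equality being exactly the right-hand side of Eq.~\eqref{eq:D5}. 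The remaining claim $F(d-1)\geq\delta(\rho,\tau_\beta)$ follows at once from monotonicity and $k\leq d-1$.

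The whole computation is routine; the only point requiring care is the identification of $\{i:p_i\geq\tau_i\}$ with an initial segment $\{1,\dots,k\}$, which is what legitimizes the bound $p_{k+1}e^{\beta E_{k+1}}\leq 1/Z_\beta$ used in the last step. This is the one place where both the $\beta$-ordering hypothesis and the specific form of the thermal state are genuinely needed; everything else is bookkeeping with telescoping sums.
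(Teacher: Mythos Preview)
Your proof is correct and follows essentially the same route as the paper's: direct verification of $F(1)\geq 0$, monotonicity via $\beta$-ordering, and the bound $F(k)\geq\delta(\rho,\tau_\beta)$ from $p_{k+1}e^{\beta E_{k+1}}\leq 1/Z_\beta$. Your presentation of the monotonicity step is in fact slightly cleaner, since you compute and factor the increment $F(m+1)-F(m)$ directly, whereas the paper expands $F(m+1)$ and bounds it below (with a minor slip in the final identification with $F(m)$ that your factored form avoids).
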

\begin{proof}
	It is straightforward to see that since $ p_1 e^{\beta E_1} \geq p_2 e^{\beta E_2} $, we have
	$ 	F(1) = p_1 - p_2 e^{\beta E_2}e^{-\beta E_1} \geq 0 $. On the other hand,
	\begin{align*}
	F(m+1) &= \sum_{i=1}^{m+1} p_i - p_{m+2} e^{\beta E_{m+2}}\cdot\sum_{i=1}^{m+1} e^{-\beta E_i} \\
	&= \sum_{i=1}^{m} p_i + p_{m+1} - p_{m+2} e^{\beta E_{m+2}}\cdot\sum_{i=1}^{m} e^{-\beta E_i} \\
	&\quad- p_{m+2} e^{\beta E_{m+2}}e^{-\beta E_{m+1}}\\
	&\geq \sum_{i=1}^{m} p_i- p_{m+2} e^{\beta E_{m+2}}\cdot\sum_{i=1}^{m} e^{-\beta E_i} = F(m).
	\end{align*}
	The first equality simply comes from extracting out the $ (m+1) $-index from both summations, and the inequality comes from noting that the eigenvalues are $ \beta $-ordered, namely for any $ m $, we have
	$ p_{m+1} e^{\beta E_{m+1}} \geq p_{m+2} e^{\beta E_{m+2}}  $.
	
	The last item to prove is that for the integer $ k $ that gives rise to Eq.~\eqref{eq:D5}, we have $ F(k) \geq \delta(\rho,\tau_\beta) $. To do so, let us expand:
	\begin{align*}
	F(k) &= \sum_{i=1}^{k} p_i - p_{k+1} e^{\beta E_{k+1}}\cdot\sum_{i=1}^{k} e^{-\beta E_i}\\ &\geq  \sum_{i=1}^{k} p_i - \frac{1}{Z_\beta}\cdot\sum_{i=1}^{k} e^{-\beta E_i} = \delta(\rho,\tau_\beta).
	\end{align*}
\end{proof}
\begin{Lemma}\label{lem:Gm_properties}
	The function $ G(m) $ is non-increasing in $ m \in\lbrace 2,d\rbrace $, and $ G(d) \geq 0 $. Moreover, let $ 1\leq k\leq d-1 $ such that
	\begin{equation}
	\delta(\rho,\tau_\beta) = \sum_{i=1}^k p_i-\tau_i = \sum_{i=1}^k p_i - \frac{1}{Z_\beta}\sum_{i=1}^k e^{-\beta E_i},
	\end{equation}
	where $ \delta(\rho,\tau_\beta) $ is the trace distance between $ \rho $ and the thermal state $ \tau_\beta $.
	Then we have $ G(k) \geq \delta(\rho,\tau_\beta) $. This also automatically implies that $ G(2) \geq \delta(\rho,\tau_\beta) $.
\end{Lemma}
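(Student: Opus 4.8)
\emph{Proof strategy.} The plan is to prove the assertions in turn, each being the mirror image of the corresponding step in the proof of Lemma~\ref{lem:Fm_properties}, using only the $\beta$-ordering $p_1 e^{\beta E_1}\ge\cdots\ge p_d e^{\beta E_d}$ and normalization of the eigenvalues. For $G(d)\ge 0$: by definition $G(d)=p_{d-1}e^{\beta E_{d-1}}e^{-\beta E_d}-p_d$, and multiplying $p_{d-1}e^{\beta E_{d-1}}\ge p_d e^{\beta E_d}$ by $e^{-\beta E_d}>0$ gives exactly $p_{d-1}e^{\beta E_{d-1}}e^{-\beta E_d}\ge p_d$. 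For monotonicity, I would compute $G(m)-G(m+1)$ directly: splitting the $i=m$ term off the two sums in $G(m)$ via $\sum_{i=m}^d e^{-\beta E_i}=e^{-\beta E_m}+\sum_{i=m+1}^d e^{-\beta E_i}$ and $\sum_{i=m}^d p_i=p_m+\sum_{i=m+1}^d p_i$, and rearranging, yields
\begin{align}
G(m)-G(m+1)&=\bigl(p_{m-1}e^{\beta E_{m-1}}e^{-\beta E_m}-p_m\bigr)\nonumber\\
&\quad+\bigl(p_{m-1}e^{\beta E_{m-1}}-p_m e^{\beta E_m}\bigr)\sum_{i=m+1}^d e^{-\beta E_i}.
\end{align}
Since $m\ge 2$, the $\beta$-ordering gives $p_{m-1}e^{\beta E_{m-1}}\ge p_m e^{\beta E_m}$, so the second parenthesis is nonnegative; multiplying it by $e^{-\beta E_m}$ shows the first parenthesis is too. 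As $e^{-\beta E_i}>0$, the difference is $\ge 0$, so $G$ is non-increasing on $\{2,\dots,d\}$.

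For the bound near the threshold index, I would start from the characterization of $k$ in Eq.~\eqref{eq:k_td} and, using $\sum_i p_i=\sum_i\tau_i=1$, first rewrite the trace distance in its complementary ``tail'' form, $\delta(\rho,\tau_\beta)=\sum_{i=k+1}^d(\tau_i-p_i)=\tfrac{1}{Z_\beta}\sum_{i=k+1}^d e^{-\beta E_i}-\sum_{i=k+1}^d p_i$. The defining property of $k$ — that $\{1,\dots,k\}$ is precisely the set of indices on which $p_i\ge\tau_i$, equivalently on which $p_i e^{\beta E_i}\ge 1/Z_\beta$ — gives $p_k e^{\beta E_k}\ge 1/Z_\beta$. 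Evaluating $G$ at the index just past the threshold and inserting this bound,
\begin{align}
G(k+1)&=p_k e^{\beta E_k}\sum_{i=k+1}^d e^{-\beta E_i}-\sum_{i=k+1}^d p_i\nonumber\\
&\ge\frac{1}{Z_\beta}\sum_{i=k+1}^d e^{-\beta E_i}-\sum_{i=k+1}^d p_i=\delta(\rho,\tau_\beta),
\end{align}
which is the exact mirror of $F(k)\ge\delta(\rho,\tau_\beta)$ in Lemma~\ref{lem:Fm_properties}. Combining this with monotonicity and $k+1\ge 2$ gives $G(2)\ge G(k+1)\ge\delta(\rho,\tau_\beta)$, which is the consequence actually used in Theorem~\ref{thm:flattest_exists}: whenever $\varepsilon\le\delta(\rho,\tau_\beta)$ it guarantees that the largest $N$ with $\varepsilon\le G(N)$ is well defined.

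I do not expect a genuine obstacle: the whole argument is elementary bookkeeping once the $\beta$-ordering is exploited. The one point deserving care is the indexing — $G$ lives on $\{2,\dots,d\}$ rather than on $\{1,\dots,d-1\}$ like $F$, so the natural evaluation point when reading off $1/Z_\beta$ is the index immediately past the threshold (which is always $\ge 2$) rather than the threshold itself, and one then descends to $G(2)$ by monotonicity rather than comparing $F$ and $G$ ``at the same index''.
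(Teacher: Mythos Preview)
Your proposal is correct and follows essentially the same approach as the paper: the paper proves $G(d)\ge 0$ from $\beta$-ordering, obtains monotonicity by splitting off the $i=m$ term and applying $p_{m-1}e^{\beta E_{m-1}}\ge p_m e^{\beta E_m}$, rewrites $\delta(\rho,\tau_\beta)$ in tail form, and then bounds $G(k+1)$ (not $G(k)$) using $p_k e^{\beta E_k}\ge 1/Z_\beta$ --- exactly as you do. Your explicit computation of $G(m)-G(m+1)$ and your remark about evaluating at the index just past the threshold are the same argument presented slightly more cleanly.
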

\begin{proof}
	The proof is rather similar to Lemma \ref{lem:Fm_properties}. First of all, by evaluating $ G(d) $, we have
	\begin{equation}\label{key}
	G(d) = p_{d-1} e^{\beta E_{d-1}} e^{-\beta E_{d}} - p_{d} \geq 0
	\end{equation}
	since by $ \beta $-ordering, $ p_{d-1} e^{\beta E_{d-1}}\geq p_{d} e^{\beta E_{d}} $.
	Subsequently, we have that 
	\begin{align*}
	&G(m) = p_{m-1} e^{\beta E_{m-1}}\cdot\sum_{i=m}^{d} e^{-\beta E_i}  - \sum_{i=m}^{d} p_i \\
	&= p_{m-1} e^{\beta E_{m-1}} \cdot\sum_{i=m+1}^{d} e^{-\beta E_i} + p_{m-1} e^{\beta E_{m-1}}e^{-\beta E_{m}} \\
	&\quad -\sum_{i=m+1}^d p_i - p_m\\
	&\geq p_{m} e^{\beta E_{m}} \cdot\sum_{i=m+1}^{d} e^{-\beta E_i} + p_{m-1} e^{\beta E_{m-1}}e^{-\beta E_{m}} \\
	& \quad-\sum_{i=m+1}^d p_i - p_m\\
	&= p_{m} e^{\beta E_{m}} \cdot\sum_{i=m+1}^{d} e^{-\beta E_i} \\
	&\quad -\sum_{i=m+1}^d p_i + p_{m-1} e^{\beta E_{m-1}}e^{-\beta E_{m}} - p_m\\&\geq G(m+1).
	\end{align*}
	To compare $ G(k) $ with $ \delta(\rho,\tau_\beta) $, let us rewrite Eq.~\eqref{eq:k_td}:
	\begin{eqnarray}\label{eq:compare_tr_wt_G2}
	\delta(\rho,\tau_\beta) 
	&=& 1- \sum_{i=k+1}^d p_i - \frac{1}{Z_\beta}\cdot \left(Z_\beta - \sum_{i=k+1}^d e^{-\beta E_i}\right) \nonumber\\
	&=& \frac{1}{Z_\beta}\cdot\sum_{i=k+1}^d e^{-\beta E_i} - \sum_{i=k+1}^d p_i.
	\end{eqnarray}
	%If $ k=1 $, then by noting that $ p_1 e^{\beta E_1} \geq \frac{1}{Z_\beta} $ holds always, one can compare Eq.~\eqref{eq:compare_tr_wt_G2} with $ G(2) $ and see directly that $ G(2) \geq \delta(\rho,\tau_\beta) $. For other values of $ 1<k<d $,
	%we can again decompose $ G(2) $ into
	Subsequently, by evaluating 
	\begin{align*}
	G(k+1) &= p_{k} e^{\beta E_{k}} \sum_{i=k+1}^d e^{-\beta E_i} -  \sum_{i=k+1}^d p_i \\
	&\geq \frac{1}{Z_\beta}\sum_{i=k+1}^d e^{-\beta E_i} - \sum_{i=k+1}^d p_i = \delta(\rho,\tau_\beta).
	\end{align*}
\end{proof}

By combining the properties of $ F(m) $ and $ G(m) $ proven in Lemma \ref{lem:Fm_properties} and \ref{lem:Gm_properties}, we can then make a statement about how $ N $ and $ M $ as chosen in the proof of Theorem \ref{thm:flattest_exists} relates, namely when $ \varepsilon < \delta(\rho,\tau_\beta) $, it is always true that $ M\leq N $.
\begin{Lemma}\label{lem6}
	For any value of $ \varepsilon $ between the interval $ 0\leq \varepsilon < \delta(\rho,\tau_\beta)$, consider the smallest integer $ 1 \leq M < d-1 $ where $ \epsilon \leq F(M) $. Furthermore, let $ 2 < N < d $ be the largest integer such that $ \varepsilon \leq G(N) $.
	Then $ M \leq N $.
\end{Lemma}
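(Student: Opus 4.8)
The plan is to compare both $M$ and $N$ against the single index $k$ that splits the trace distance, namely the integer $1\le k\le d-1$ defined by Eq.~\eqref{eq:k_td}, for which $\delta(\rho,\tau_\beta) = \sum_{i=1}^k p_i - \frac{1}{Z_\beta}\sum_{i=1}^k e^{-\beta E_i}$. Both Lemma~\ref{lem:Fm_properties} and Lemma~\ref{lem:Gm_properties} are stated in terms of precisely this $k$, so it is the natural common reference point, and $k$ is guaranteed to exist whenever $\delta(\rho,\tau_\beta)>0$, which is implied by the hypothesis $\varepsilon<\delta(\rho,\tau_\beta)$ (the interval is empty otherwise). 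Existence of $M$ and $N$ themselves follows from $F(d-1)\ge\delta(\rho,\tau_\beta)>\varepsilon$ and $G(2)\ge\delta(\rho,\tau_\beta)>\varepsilon$, both recorded in the two lemmas.

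First I would bound $M$ from above by $k$. By Lemma~\ref{lem:Fm_properties}, $F$ is non-decreasing on $\{1,\dots,d-1\}$ and $F(k)\ge\delta(\rho,\tau_\beta)$. Since $\varepsilon<\delta(\rho,\tau_\beta)$, we get $\varepsilon<F(k)$, so $k$ is an integer in the admissible range with $\varepsilon\le F(k)$; as $M$ is by definition the smallest such integer, $M\le k$.

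Next I would bound $N$ from below by $k+1$. The proof of Lemma~\ref{lem:Gm_properties} establishes $G(k+1)=p_k e^{\beta E_k}\sum_{i=k+1}^d e^{-\beta E_i}-\sum_{i=k+1}^d p_i\ge\frac{1}{Z_\beta}\sum_{i=k+1}^d e^{-\beta E_i}-\sum_{i=k+1}^d p_i=\delta(\rho,\tau_\beta)$, where $k+1$ lies in $\{2,\dots,d\}$ because $1\le k\le d-1$. Again using $\varepsilon<\delta(\rho,\tau_\beta)$, we obtain $\varepsilon<G(k+1)$, so $k+1$ satisfies $\varepsilon\le G(k+1)$, and since $N$ is the largest integer with this property, $N\ge k+1$.

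Combining the two bounds yields $M\le k<k+1\le N$, hence $M\le N$ (in fact $M<N$), which is the claim. The argument is pure bookkeeping once the two preceding lemmas are available; the only place requiring a little care is making sure that the index $k$ used in the monotonicity-plus-lower-bound statements for $F$ and for $G$ is literally the same $k$ coming from the trace-distance decomposition, and that $F(k)$ and $G(k+1)$ both fall inside the ranges on which $F$ and $G$ are defined — which they do since $1\le k\le d-1$. I do not expect any genuine obstacle beyond this.
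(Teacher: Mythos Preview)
Your proposal is correct and follows essentially the same argument as the paper: both fix the trace-distance index $k$, use $F(k)\ge\delta(\rho,\tau_\beta)>\varepsilon$ together with monotonicity of $F$ to get $M\le k$, use $G(k+1)\ge\delta(\rho,\tau_\beta)>\varepsilon$ together with monotonicity of $G$ to get $N\ge k+1$, and conclude $M\le N$. Your write-up is in fact a bit more careful than the paper's (you check that $k$ and $k+1$ lie in the domains of $F$ and $G$, verify existence of $M$ and $N$, and note the strict inequality $M<N$), but the route is the same.
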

\begin{proof}
	By Lemma \ref{lem:Fm_properties} and \ref{lem:Gm_properties}, we know that there exists an integer $ 1\leq k\leq d-1 $ such that $ F(k) \geq \delta(\rho,\tau_\beta) > \varepsilon $, and also $ G(k+1) \geq \delta(\rho,\tau_\beta) > \varepsilon $.
	By Lemma \ref{lem:Fm_properties}, since $ F(m) $ is non-decreasing in $ m $, and since $ M $ is the smallest integer such that $ F(M) \geq \varepsilon $, this implies that $ M\leq k $ has to be true. On the other hand, by Lemma \ref{lem:Gm_properties} we know that $ G(m) $ is non-increasing in $ m $. Since $ N $ is the largest integer such that $ G(k) \geq\varepsilon $, then we know $ N\geq k+1 $. This implies that $ M\leq N $.
\end{proof}
\end{document}